\newcounter{mycou}
\newcommand{\CCOp}{\mathord{\mathpalette\nicoud@YESNO{\nicoud@path{\fillpath}}}}
\newcommand{\nicoud@YESNO}[2]{%
  \begingroup
  \settoheight{\unitlength}{$#1X$}%
  \begin{picture}(0.7,1)
  \linethickness{\variable@rule{#1}}%
  \roundcap\roundjoin
  \nicoud@path{\strokepath}
  #2
  \Line(0.35,-0.35)(0.35,1.08)
  \Line(0.55,-0.35)(0.55,1.08)
  \end{picture}%
  \endgroup
}
\newcommand{\nicoud@path}[1]{%
  \moveto(0.5,0.9)
  \lineto(0.5,1)\lineto(0.6,1)\lineto(0.6,0.9)
  \closepath
  \moveto(0.3,0.9)
  \lineto(0.3,1)\lineto(0.4,1)\lineto(0.4,0.9)
  \closepath
  \moveto(0.5,-0.27)
  \lineto(0.5,-0.17)\lineto(0.6,-0.17)\lineto(0.6,-0.27)
  \closepath
  \moveto(0.3,-0.27)
  \lineto(0.3,-0.17)\lineto(0.4,-0.17)\lineto(0.4,-0.27)
  \closepath
  #1
}
\newcommand{\variable@rule}[1]{%
  \fontdimen8  
  \ifx#1\displaystyle\textfont3\else
    \ifx#1\textstyle\textfont3\else
      \ifx#1\scriptstyle\scriptfont3\else
        \scriptscriptfont3\relax
  \fi\fi\fi
}
\newcommand{\subalign}[1]{%
  \vcenter{%
    \Let@ \restore@math@cr \default@tag
    \baselineskip\fontdimen10 \scriptfont\tw@
    \advance\baselineskip\fontdimen12 \scriptfont\tw@
    \lineskip\thr@@\fontdimen8 \scriptfont\thr@@
    \lineskiplimit\lineskip
    \ialign{\hfil$\m@th\scriptstyle##$&$\m@th\scriptstyle{}##$\hfil\crcr
      #1\crcr
    }%
  }%
}
 \newtheoremstyle{theoremdd}
  {0}
  {0}
  {\itshape}
  {0pt}
  {\bfseries}
  {}
  { }
  {\thmname{#1}\thmnumber{ #2}\textnormal{\thmnote{ (#3)}}}
\theoremstyle{theoremdd}
\newcommand*{\rom}[1]{\expandafter\@slowromancap\romannumeral #1@}
\newcommand{\mylabel}[2]{#2\def\@currentlabel{#2}\label{#1}}
\newcommand{\red}{\color{red}}
\newcommand{\blue}{\color{blue}}
\definecolor{green}{rgb}{0.1,0.7,0.1}
\DeclareMathOperator{\CC}{CC}
\DeclareMathOperator{\Obs}{Obs}
\DeclareMathOperator{\Det}{Det}
\DeclareMathOperator{\spec}{spec}
\DeclareMathOperator{\obs}{obs}
\DeclareMathOperator{\CriObs}{CriObs}
\DeclareMathOperator{\Pow}{Pow}
\DeclareMathOperator{\NS}{NS}
\newcommand{\Z}{\mathbb{Z}}
\newcommand{\N}{\mathbb{N}}
\newcommand{\dt}{\delta}
\newcommand{\ep}{\epsilon}
\newcommand{\vep}{\varepsilon}
\newcommand{\Scal}{\mathcal{S}}
\newcommand{\Sig}{\Sigma}
\newcommand{\s}{\sigma}
\newcommand{\Mt}{\mathcal{M}}
\newcommand{\llb}{\llbracket}
\newcommand{\rrb}{\rrbracket}
\newcommand{\Intr}{\mathsf{Intr}}
\newcommand{\Usr}{\mathsf{Usr}}
\newcommand{\QS}{{\red Q_S}}
\newcommand{\pred}{\mathsf{PRED}}
\newcommand{\QSordtwo}{{\red Q_S^{\mathsf{ord-2}}}}
\newtheorem{theorem}{Theorem}[section]
\newtheorem{definition}{Definition}
\newtheorem{remark}{Remark}
\newtheorem{fact}{Fact}
\theoremstyle{plain}
\newaliascnt{lemma}{theorem}
\newtheorem{lemma}[lemma]{Lemma}
\newaliascnt{proposition}{theorem}
\newaliascnt{corollary}{theorem}
\newaliascnt{example}{theorem}
\newtheorem{example}[example]{Example}
\let\NAT@parse\undefined
\newcommand{\PSPACE}{\mathsf{PSPACE}}
\newcommand{\EXPTIME}{\mathsf{EXPTIME}}
\newcommand{\EXPSPACE}{\mathsf{EXPSPACE}}
\tikzset{elliptic state/.style={draw, ellipse, thick, fill=gray!10}
}
\tikzset{rectangular state/.style={draw, rectangle, thick, fill=gray!10}
}
\tikzset{emptystate/.style={}
}
\tikzset{
    partial ellipse/.style args={#1:#2:#3}{
        insert path={+ (#1:#3) arc (#1:#2:#3)}
    }
}
\tikzset{
node distance=3cm, 
every state/.style={thick, fill=gray!10}, 
initial text=$ $, 
}
\title{High-order estimation-based properties and high-order observers for labeled finite-state automata\\
--- {\large Formulating and computing ``know what'' in a finite sequence of labeled finite-state automata}}
\author{Kuize Zhang\\
{\small School of Mathematics and Statistics}\\
{\small Xi'an Jiaotong University, 810049 Xi'an, China}\\
{\small kuize.zhang@xjtu.edu.cn}
\and
Xiaoguang Han\\
{\small College of Electronic Information and Automation}\\
{\small Tianjin University of Science and Technology, Tianjin 300222, China}\\
{\small hxg-allen@163.com}
\and
Alessandro Giua\\
{\small Department of Electrical and Electronic Engineering}\\
{\small University of Cagliari, 09123 Cagliari, Italy}\\
{\small giua@unica.it}
\and
Carla Seatzu\\
{\small Department of Electrical and Electronic Engineering}\\
{\small University of Cagliari, 09123 Cagliari, Italy}\\
{\small carla.seatzu@unica.it}
}
\begin{document}

\date{}

\maketitle

{\bf Abstract}
  In this paper, we consider labeled finite-state automata (LFSAs), extend some state estimation-based 
  properties from a single agent to a finite ordered set of agents. We also extend the notion of observer 
  to \emph{high-order observer} using our \emph{concurrent composition}.
  As a result, a general framework for characterizing high-order estimation-based properties is built, 
  in which each agent infers its preceding agent's estimation via all agents in front.  
  The high-order observer plays the role of a basic tool to verify such properties.

  In more detail, in our general framework, the system's structure is publicly known to all agents $A_1,\dots,A_n$;
  each agent $A_i$ has its own observable event set $E_i$, and additionally knows all its preceding agents' observable
  events but can only observe its own observable events. The intuitive meaning of our high-order observer is
  to characterize what 
  agent $A_n$ knows about what $A_{n-1}$ knows about \dots what $A_2$ knows about $A_1$'s state estimate of the system. 
  This general framework can be regarded as an automata representation of dynamic epistemic logic. Compared with
  the classical representation of dynamic epistemic logic based on fragments of logic, our representation has 
  advantages in property verification and flexibly changing agents to enforce properties.
  As case studies, this general framework applies to basic properties such as current-state opacity, strong current-state
  opacity, regular-language-based opacity, critical observability, 
  high-order opacity, etc. Special cases for which verification can be done more efficiently are also discussed.

{\bf Keywords}
  labeled finite-state automaton, high-order estimation-based property,  high-order observer,  concurrent composition

\tableofcontents

\section{Introduction}
\label{sec:intro}

In this paper, we consider the scenario consisting of a \emph{finite-state automaton} (FSA) $G$ and
\emph{a finite sequence of $n$ agents} $A_i$, $1\le i \le n$.
The FSA is publicly known to all agents, each agent $A_i$ can observe a subset $E_i$ of events
of the FSA via a labeling function $\ell_i: E_i\to \Sig_i$, and each agent knows its preceding agents' 
observable events and labeling functions. We formulate and compute what $A_n$ knows 
about what $A_{n-1}$ knows about \dots what $A_2$ knows about $A_1$'s state estimate of $G$.

Take $n=3$ for example. Consider $G$ as an operating system, a user $\Usr$ who is working
on $G$, and an intruder $\Intr$ who wants to attack $G$ if $\Intr$ knows that $\Usr$ can
uniquely determine the state of the system. In this sense, $G$ is considered to be 
sufficiently safe for $\Usr$ to work on if $\Intr$ cannot know whether $\Usr$ can 
uniquely determine the state of $G$, and if $\Usr$ knows this. As a summary,
system $G$ is sufficiently safe if $\Usr$ knows that $\Intr$ cannot know whether $\Usr$
can uniquely determine the state of the system. Roughly speaking, ``I ($\Usr$) know you ($\Intr$) don't know whether I know''.

The above scenario is clearly in the category of ``epistemic logic'' with three ordered
agents. The difference compared with plenty of relevant results in the
literature on epistemic logic mainly lies in that, plenty of results were represented
as fragments of logic \cite{DynamicEpistemicLogic2008,Li2024Higher-orderEpistemicLogic,Muise2022EfficientMultiAgentEpistemicPlanning,Baltag2025TopologyOfSurprise}, 
while there are relatively much fewer characterizations of epistemic logic 
based on models (e.g., as transition systems \cite{Li2024Higher-orderEpistemicLogic}).
Characterizations of epistemic logic based on fragments of logic are very 
intuitive and show advantages in flexibly adjusting the physical meanings of the 
fragments.
In addition, all kinds of model checking techniques can be used to verify such 
fragments when they are decidable by converting the fragments to automata (or finite-transition systems).
However, it will be more convenient to directly use automata-based characterizations
to characterize epistemic logic in order to do verification, because automata are themselves models;
in addition, it will be more flexible to change behaviors of agents by modifying 
partial structures of automata. Therefore, fragments-based representations and 
automata-based representations for epistemic logic have their own advantages.

As mentioned above, in this paper, we will formulate and compute what $A_n$ knows 
about what $A_{n-1}$ knows about \dots what $A_2$ knows about $A_1$'s state estimate of $G$. 
We will propose a method called \emph{high-order observer} to compute this in $n$-$\EXPTIME$. That is, although the syntax for each agent appears not complex, the considered whole structure is complex.
As a comparison, when the transition systems are finite, the fragments of epistemic logic 
considered in \cite{Li2024Higher-orderEpistemicLogic} can be verified in polynomial time,
and the fragments considered in \cite{Lutz2006ComplexityPublicAnnouncementLogic} can be 
verified in $\EXPTIME$. However, there are no obvious containment relations between our structure and the fragments considered in these two papers. Further exploration on such studies will be of 
particular interest and left for future study.

In our paper, we will use our high-order observer to verify \emph{state-estimation-based properties} for LFSAs.
The study on \emph{state-estimation-based properties} of partially-observed dynamical systems dates back to
the 1950s \cite{Moore1956} in computer science and the 1960s \cite{Kalman1963MathDescriptionofLDS} in control science.
For over half a century, such properties have been central properties in these two areas.
In \emph{partially-observed discrete-event systems} (DESs for short) modeled by \emph{labeled finite-state automata}
(LFSAs), such properties have also been studied for over 30 years 
\cite{Ramadge1986ObservabilityDES,Shu2007Detectability_DES,Hadjicostis2020DESbook}.
Many results are based on a single agent who knows the structure of a system and can observe a subset
of events and estimate the system's state based on the system's structure and the agent's observation to the system.
The main tool used to do state estimation and verify estimation-based properties is called \emph{observer} which
is the powerset construction originally proposed by Rabin and Scott in 1959 \cite{RabinScott1959PowersetConstruction},
used to determinize a nondeterministic
finite automaton with $\vep$-transitions \cite{RabinScott1959PowersetConstruction,Shu2007Detectability_DES}.

We formulate \emph{a general framework} to study estimation-based properties for LFSAs from a high-level perspective.
We also develop a general approach to design a \emph{high-order observer} for verifying such properties based on two basic tools
--- \emph{observer} \cite{RabinScott1959PowersetConstruction,Shu2007Detectability_DES} and \emph{concurrent composition} \cite{Zhang2020DetPNFA}. 
We then show that many properties which have been previously presented in 
the literature, can be classified into our general framework as special cases.

An overview of the general framework is as follows.
Consider a finite-state automaton (FSA) $G$ whose structure is publicly known.
The development of the framework will be done in three steps.

The first step (shown in \autoref{sec:Order1Property}) considers a single agent $A_1$ 
who can observe a subset $E_1$ of events of $G$, and the properties to be formulated are based on $A_1$'s
state estimate of $G$. In this step, the properties are called \emph{of order-$1$}.
A number of properties in the literature are order-$1$ properties, e.g.,
current-state opacity \cite{Cassez2009DynamicOpcity,Saboori2007CurrentStateOpacity},
critical observability
\cite{Pola2017DecenCriticalObserver}, regular-language-based opacity \cite{Lin2011OpacityDES},
and strong current-state opacity \cite{Han2023StrongCSOpacity_DES}.

In a second step (shown in \autoref{sec:Order2Property}), we consider two agents 
$A_1$ and $A_2$, where $A_i$ can observe a subset $E_i$ of events of $G$, $i=1,2$,
and we additionally assume that $A_2$ knows $E_1$ but cannot observe events in $E_1$ but not in $E_2$.
The properties to be formulated 
are based on $A_2$'s inference of $A_1$'s state estimate of $G$, that is, what
$A_2$ knows about $A_1$'s state estimate of $G$. In this step, the properties 
are called \emph{of order-$2$}. For example, the high-order opacity studied in 
\cite{Cui2022YouDontKnowWhatIKnow} is a special case of our order-$2$ current-state opacity ---
one of our order-$2$ properties (see Appendix~\ref{appendix}).

In a third step (shown in \autoref{sec:Order3Property}), we further generalize this setting by considering
a finite number of ordered agents $A_1,\dots,A_n$. 
The properties to be formulated are based on what $A_n$ knows
about what $A_{n-1}$ knows about \dots what $A_2$ knows about $A_1$'s state estimate
of $G$. In this step, the properties are called \emph{of order-$n$}.

We define an order-$n$ observer to verify an order-$n$ estimation-based property.
An order-$n$ observer can be computed in 
$n$-$\EXPTIME$, and hence an order-$n$ estimation-based property can be verified in $n$-$\EXPTIME$.
We also study special cases in which verification can be done more efficiently, e.g., when for all agents
$i$, $E_i$ and $E_{i+1}$ have containment relations, and the labeling functions of all agents are 
projections, verification can be done in $\EXPTIME$.

\section{Preliminaries}
\label{sec:prelim}

{\bf Notation.} 
Symbols $\N$ and $\Z_+$ denote the set of nonnegative integers and the set of positive integers, respectively.
For two nonnegative integers $i\le j$, $\llb i,j\rrb$ denotes the set of all integers no less than $i$ and no greater 
than $j$.
For a set $S$, $|S|$ denotes its cardinality and $2^S$ its power set. Symbols $\subset$,
$\not\subset$, and $\subsetneq$ denote the \emph{subset of}, \emph{not a subset of}, and
\emph{a strict subset of} relations, respectively. For two sets $A$ and $B$, $A\setminus B$ denotes
$\{x\in A|x\notin B\}$.
Let $\Sigma$ denote a finite \emph{alphabet}. 
Elements of $\Sigma$ are called \emph{letters}. As usual, 
$\Sig^*$ denotes the set of \emph{words} or \emph{strings} (i.e., finite-length sequences of letters)
over $\Sig$ including the empty word $\epsilon$, 
and denote $\Sig^{+}:=\Sig^*\setminus\{\epsilon\}$.
The \emph{length} of a word $w\in\Sig^*$ is the number of letters, counting repetitions, occurring in $w$,
and is denoted by $|w|$.
A \emph{(formal) language} is a subset of $\Sig^*$.
For two languages $L_1,L_2\subset\Sig^*$, their \emph{concatenation} $L_1L_2$ is defined as $\{e_1e_2|e_1\in L_1,
e_2\in L_2\}$. For a function $f:A\to B$ and a subset $A'\subset A$, $f|_{A'}$ denotes the restriction of $f$
to $A'$.

\begin{definition}
  A \emph{finite-state automaton} (FSA) is a quadruple 
  \begin{align}\label{FSA}
	G=(Q,E,\dt,Q_0),
  \end{align} where
  \begin{enumerate}
  	\item $Q$ is a finite set of \emph{states},
    \item $E$ is an alphabet of \emph{events},
	\item $\dt:Q\times E \to 2^Q$ is the \emph{transition function} (equivalently described as
		$\dt\subset Q\times E\times Q$ such that $(q,e,q')\in\dt$ if and only if $q'\in\dt(q,e)$),
	\item $Q_0\subset Q$ is a set of \emph{initial states}.
  \end{enumerate}
\end{definition}

Transition function $\dt$ is recursively extended to $Q\times E^*\to 2^Q$: for all $q\in Q$, $u\in E^*$, and $e\in E$,
$\dt(q,\ep)=\{q\}$, $\dt(q,ue)=\bigcup_{p\in\dt(q,u)}\dt(p,e)$. 
Automaton $G$ is called \emph{deterministic} if $Q_0=\{q_0\}$ for some $q_0\in Q$, and
for all $q\in Q$ and $e\in E$, $|\dt(q,e)|\le 1$. A deterministic FSA $G$ is also 
denoted as $G=(Q,E,\dt,q_0)$, and in this case, $\dt$ is a partial function from $Q\times E$ to $Q$.

A transition $q\xrightarrow[]{e} q'$ with $q'\in \dt(q,e)$ means that when $G$ is in state $q$ and event 
$e$ occurs, $G$ transitions to state $q'$. A sequence $q_0\xrightarrow[]{e_1}\cdots \xrightarrow[]{e_n}
q_n$ of consecutive transitions with $n\in\N$ is called a \emph{run}\footnote{When $n=0$, the run degenerates
to a single state $q_0$.},
in which the event sequence $e_1\dots e_n$ is called
a \emph{trace (generated by $G$)} if $q_0\in Q_0$. A state $q\in Q$ is \emph{reachable} if there is a run
from some initial state to $q$. The \emph{reachable part} of $G$ consists of all
reachable states and transitions between them.
When showing an automaton, we usually only show its reachable part.
The \emph{language} $L(G)$ \emph{generated by $G$} is the set of traces generated by $G$. 

Occurrences of events of an FSA $G$ may be observable or not. Let alphabet $\Sig$ denote the set
of \emph{labels/outputs}. The \emph{labeling function} is defined as $\ell:E\to \Sig\cup\{\ep\}$, and 
is recursively extended to $\ell:E^*\to \Sig^*$. Denote $E_o=\{e\in E|\ell(e)\in \Sigma\}$,
$E_{uo}=\{e\in E|\ell(e)=\ep \}$, where the former denotes the set of \emph{observable events} and the latter 
denotes the set of \emph{unobservable events}. When an observable $e$ occurs; its label $\ell(e)$ is observed,
when an unobservable event occurs, nothing is observed.


A \emph{labeled finite-state automaton} (LFSA) is denoted as 
\begin{equation}\label{LFSA}
  \Scal=(G,\Sig,\ell).
\end{equation}

The following definition of state estimate is critical to define all kinds of properties in DESs.
The \emph{current-state estimate} $\Mt(\Scal,\alpha)$ with respect to $\alpha\in\ell(L(G))$ is defined as
\begin{subequations}\label{eqn39:High-OrderOpacity}
\begin{align}
  &\Mt(\Scal,\alpha)=\Mt_{\ell}(G,\alpha) \\
  =& \{q\in Q|(\exists\text{ run }q_0\xrightarrow[]{s} q)[q_0\in Q_0\wedge \ell(s)=\alpha]\},
\end{align}
\end{subequations}
which is the set of states $G$ may be in when $\alpha$ is observed.

For a labeling function $\ell:E\to\Sig\cup\{\ep\}$, we also denote $\ell$ as $\ell_{E_o}$ to indicate 
the observable event set $E_o$. In particular, when the restriction of $\ell_{E_o}$ to $E_o$ is the identity,
we also call the labeling function a \emph{projection} and also write it as $P_{E_o}$.

In the sequel, we sometimes refer to an LFSA $\Scal$ as an FSA $G$ with respect to a labeling function
$\ell$.




We now recall our concurrent composition that was first proposed in 
\cite{Zhang2020DetPNFA} to verify the \emph{negation} of strong detectability in 
arbitrary LFSAs and later on used in \cite{Zhang2023PTIMEVerEnforDetDES,Zhang2023UnifiedFrame4DES,Miao2025StrongInitialFinalOpacity}, for the verification of a series of properties such as strong detectability, diagnosability, and opacity. 
The concurrent-composition method provides a unified and currently the most efficient method to verifying 
all kinds of inference-based properties such as strong detectability, diagnosability, and predictability in LFSAs
\cite{Zhang2023UnifiedFrame4DES}, while the classical widely-used methods --- the detector
method \cite{Shu2011GDetectabilityDES} for verifying strong detectability, the twin-plant method
\cite{Jiang2001PolyAlgorithmDiagnosabilityDES} and the verifier method 
\cite{Yoo2002DiagnosabiliyDESPTime,Genc2009PredictabilityDES} for verifying diagnosability and
predictability are not universal and
all depend on two fundamental assumptions: deadlock-freeness (there are no dead states) and
divergence-freeness (no infinite runs of unobservable events can be generated), because these old methods were
proposed to verify inference-based properties themselves.
Apart from verifying inference-based properties, when used in combination with an
observer\footnote{Originally
proposed in \cite{RabinScott1959PowersetConstruction} for determinizing nondeterministic finite
automata with $\varepsilon$-transitions. The terminology ``observer'' dates back to 
\cite{Ozveren1990ObservabilityDES,Shu2007Detectability_DES}.}, the concurrent composition provides
a unified and currently the most efficient method to verifying all kinds 
of standard versions of state-based opacity and strong versions of state-based opacity in LFSAs
\cite{Zhang2023UnifiedFrame4DES,Han2023StrongCSOpacity_DES}.
See \cite[Table~4 and Table~5]{Zhang2023UnifiedFrame4DES} for details.

\begin{definition}[\cite{Zhang2020DetPNFA}]\label{FA:def_CCa}
	Consider two LFSAs $\Scal^i=(Q_i,E_i,\dt_i,Q_{0i},\Sig,\ell_i)$, $i=1,2$, where $E_{io}$ and $E_{iuo}$ denote 
	the set of observable events of $\Scal^i$ and the set of unobservable events of $\Scal^i$, respectively.
	The \emph{concurrent
	composition} $\CC(\Scal^1,\Scal^2)$ 
	of $\Scal^1$ and $\Scal^2$ 
	is defined by LFSA
	\begin{equation}\label{FA:eqn8}
	  \CC(\Scal^1,\Scal^2) = (Q',E',\dt',Q_0',\Sig',\ell'),
	\end{equation} where
	\begin{enumerate}
	\item $Q'=Q_1\times Q_2$;
	\item $E'=E_o'\cup E_{uo}'$, where $E_o'=\{(e_1,e_2)|e_1\in E_{1o},e_2\in E_{2o},
		\ell_1(e_1)=\ell_2(e_2)\}$,
		$E_{uo}'=\{(e_1,\epsilon)|e_1\in E_{1uo}\}\cup
		\{(\epsilon,e_2)|e_2\in E_{2uo}\}$;
	  \item for all $(q_1,q_2),(q_3,q_4)\in Q'$, $(e_{1o},e_{2o})
		\in E_o'$, $(e_{1uo},\epsilon),(\epsilon,e_{2uo})\in E_{uo}'$,
		\begin{itemize}
		  \item $((q_1,q_2),(e_{1o},e_{2o}),(q_3,q_4))\in\dt'$ 
			if and only if $(q_1,e_{1o},q_3)\in\dt_1$, $(q_2,e_{2o},q_4)\in\dt_2$,
		  \item $((q_1,q_2),(e_{1uo},\epsilon),(q_3,q_4))\in\dt'$ 
			if and only if $(q_1,e_{1uo},q_3)\in\dt_1$, $q_2=q_4$,
		  \item $((q_1,q_2),(\epsilon,e_{2uo}),(q_3,q_4))\in\dt'$ 
			if and only if $q_1=q_3$, $(q_2,e_{2uo},q_4)\in\dt_2$;
		\end{itemize}
	\item $Q_0'=Q_{01}\times Q_{02}$;
	\item 
	  for all $(e_{1o},e_{2o})\in E_o'$, $(e_{1uo},\epsilon)\in E_{uo}'$, and
	  $(\epsilon,e_{2uo})\in E_{uo}'$, $\ell'((e_{1o},e_{2o})):=\ell_1(e_{1o})=\ell_2(e_{2o})$,
	  $\ell'((e_{1uo},\epsilon)):=\ell_1(e_{1uo})=\ep$, $\ell'((\epsilon,e_{2uo})):=\ell_2(e_{2uo})=\ep$.
	\end{enumerate}
\end{definition}

\begin{definition}[\cite{RabinScott1959PowersetConstruction,Shu2007Detectability_DES}]\label{FA:def_obs}
  Consider an LFSA $\Scal=(G,\Sigma,\ell)$, where $G$ is as in \eqref{FSA}.
  Its \emph{observer} $\Obs(\Scal)=\Obs_{\ell}(G)$\footnote{The term ``observer''
  dates back to \cite{Ozveren1990ObservabilityDES,Shu2007Detectability_DES}.} is defined by
  a deterministic FSA
	\begin{align}\label{FA:eqn_observer}
	  (Q_{\obs},\ell(E_o),\dt_{\obs},q_{0\obs}),
	\end{align}
	where
		\begin{enumerate}
			\item $Q_{\obs}=2^Q$,
			\item $\ell(E_o)=\ell(E)\setminus\{\ep\}$,
			\item for all $X\in Q_{\obs}$ and $a\in\ell(E_o)$, $\dt_{\obs}(X,a)= \bigcup_{q\in X}\bigcup_{
			  \substack{e\in E_o\\\ell(e)=a}}\bigcup_{s\in (E_{uo})^*}\dt(q,es)$,
			\item $q_{0\obs}=\bigcup_{q_0\in Q_0}\bigcup_{s\in (E_{uo})^*}\dt(q_0,s)$.
		\end{enumerate}
\end{definition}

The observer $\Obs(\Scal)$, actually the powerset construction 
\cite{RabinScott1959PowersetConstruction},
can be computed in time exponential in the size of $\Scal$, and has been extensively used 
for many years in both the computer science community and the control community.
The observer $\Obs(\Scal)$ aggregates state estimates of $\Scal$ along all generated
label sequences, which is formulated as follows:
\begin{fact}[\cite{Shu2007Detectability_DES}]\label{fact1:High-OrderOpacity}
  for every run $q_{0\obs}\xrightarrow[]{\alpha}X$ of $\Obs(\Scal)$, $X=\Mt(\Scal,\alpha)$.
\end{fact}

\begin{definition}[\cite{Shu2011GDetectabilityDES}]
	Consider an LFSA $\Scal=(G,\Sig,\ell)$ and its observer
	$\Obs(\Scal)$. The \emph{detector}\index{detector}
	$\Det(\Scal)$, also denoted as $\Det_{\ell}(G)$, of $\Scal$ is defined as a nondeterministic finite automaton
	$(Q_{\det},\ell(E_o),\dt_{\det},q_{0\det})$, where 
	\begin{enumerate}
	  \item $q_{0\det}=q_{0\obs}$,
	  \item $Q_{\det}=\{q_{0\det}\}\cup\{X\subset Q|1\le|X|\le 2\}$, 
	  \item for each state $X$ in $Q_{\det}$ and each label $\sigma\in\ell(E_o)$,
		  \begin{align*}
		  &\dt_{\det}(X,\s)=\\
		  &\left\{
		  \begin{array}[]{ll}
			\{X'|X'\subset\dt_{\obs}(X,\sigma),|X'|=2\} & \text{if }|\delta_{\obs}(X,\sigma)|\ge 2,\\
			\{\dt_{\obs}(X,\sigma)\} & \text{if }|\delta_{\obs}(X,\sigma)|= 1,\\
			\emptyset & \text{otherwise.}
		  \end{array}
		  \right.
  		  \end{align*}
	\end{enumerate}
\end{definition}

$\Det(\Scal)$ can be computed in time polynomial in the size of $\Scal$. A critical relation between $\Det(\Scal)$
and $\Obs(\Scal)$ is as follows. 

\begin{lemma}[{\cite[Proposition~3]{Zhang2023RemoveAssumptionsSPDetect}}]\label{lem1:High-OrderOpacity} 
  Consider an LFSA $(G,\Sigma,\ell)$. Consider a run $q_{0\obs}\xrightarrow[]{e_1}X_1\xrightarrow[]{e_2}\cdots
  \xrightarrow[]{e_n}X_n$ in observer $\Obs(G,\Sig,\ell)=(Q_{\obs},\ell(E_o),\dt_{\obs},q_{0\obs})$, where
  $e_1,\dots,e_n\in \Sigma$, $X_n\ne\emptyset$. Choose $X_n'\subset X_n$ satisfying $|X_n'|=2$ if $|X_n|
  \ge 2$, and $|X_n'|=1$ otherwise. Then there is a run $q_{0\obs}\xrightarrow[]{e_1}X_1'\xrightarrow[]{e_2}\cdots
  \xrightarrow[]{e_n}X_n'$ in detector $\Det(G,\Sig,\ell)=(Q_{\det},\ell(E_o),\dt_{\det},q_{0\det})$
  (note that $q_{0\obs}=q_{0\det}$), where $|X_i'|=2$ if $|X_i|\ge 2$,
  $i\in\llb 1,n-1\rrb$.
\end{lemma}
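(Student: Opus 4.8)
The plan is to build the required detector run backward, from the fixed set $X_n'$ down to $X_1'$, and then to attach the initial transition $q_{0\obs}\xrightarrow[]{e_1}X_1'$ separately (note $q_{0\det}=q_{0\obs}$, but $q_{0\obs}$ need not have cardinality $\le 2$, so it cannot be produced by the same inductive step). First I would record two elementary facts, read straight off the definition of the observer (Definition~\ref{FA:def_obs}): additivity over singletons, $\dt_{\obs}(X,a)=\bigcup_{q\in X}\dt_{\obs}(\{q\},a)$, and the monotonicity it entails, namely $X\subset Y$ implies $\dt_{\obs}(X,a)\subset\dt_{\obs}(Y,a)$. I would also note that, writing $X_0:=q_{0\obs}$ and using $X_i=\dt_{\obs}(X_{i-1},e_i)$ together with $\dt_{\obs}(\emptyset,a)=\emptyset$, the hypothesis $X_n\ne\emptyset$ forces $X_i\ne\emptyset$ for every $i\in\llb 0,n\rrb$.

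The heart of the argument is the following downward inductive step: given $X_i'\subset X_i$ with $|X_i'|=\min\{2,|X_i|\}$ for some $i\in\llb 2,n\rrb$, produce $X_{i-1}'\subset X_{i-1}$ with $|X_{i-1}'|=\min\{2,|X_{i-1}|\}$ and $X_i'\in\dt_{\det}(X_{i-1}',e_i)$. If $|X_{i-1}|=1$, I would take $X_{i-1}'=X_{i-1}$, so that $\dt_{\obs}(X_{i-1}',e_i)=X_i$, and the claim follows by unwinding the definition of $\dt_{\det}$ in the two subcases $|X_i|=1$ and $|X_i|\ge 2$. If $|X_{i-1}|\ge 2$, then since $X_i'\subset X_i=\bigcup_{q\in X_{i-1}}\dt_{\obs}(\{q\},e_i)$, I would pick for each $p\in X_i'$ a predecessor $q_p\in X_{i-1}$ with $p\in\dt_{\obs}(\{q_p\},e_i)$ and let $X_{i-1}'$ be the set $\{q_p:p\in X_i'\}$, padded by one arbitrary further element of $X_{i-1}$ if necessary so that $|X_{i-1}'|=2$. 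Monotonicity then gives $X_i'\subset\dt_{\obs}(X_{i-1}',e_i)\subset X_i$, whence the definition of $\dt_{\det}$ again yields $X_i'\in\dt_{\det}(X_{i-1}',e_i)$: if $|X_i|=1$ the inclusions force $\dt_{\obs}(X_{i-1}',e_i)=X_i=X_i'$, and if $|X_i|\ge 2$ then $|\dt_{\obs}(X_{i-1}',e_i)|\ge|X_i'|=2$. Iterating from $i=n$ down to $i=2$ produces $X_{n-1}',\dots,X_1'$.

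It then remains to check $X_1'\in\dt_{\det}(q_{0\det},e_1)$, which is immediate: $q_{0\det}=q_{0\obs}=X_0\in Q_{\det}$, $\dt_{\obs}(q_{0\det},e_1)=X_1\ne\emptyset$, and $X_1'$ is by construction a subset of $X_1$ of cardinality $\min\{2,|X_1|\}$ — exactly what the definition of $\dt_{\det}(q_{0\det},e_1)$ admits. Concatenating gives the run $q_{0\obs}\xrightarrow[]{e_1}X_1'\xrightarrow[]{e_2}\cdots\xrightarrow[]{e_n}X_n'$ of $\Det(G,\Sig,\ell)$ with the stated size properties. I expect the only delicate point to be the padding in the case $|X_{i-1}|\ge 2$: since the detector carries no states of cardinality larger than $2$, when both elements of $X_i'$ are reachable from a single predecessor one is forced to enlarge $X_{i-1}'$ by an unrelated second state, and one must check that monotonicity prevents this enlargement from creating successors outside $X_i$ — which is precisely what keeps the subcase $\dt_{\det}(X_{i-1}',e_i)=\{X_i'\}$ (when $|X_i|=1$) valid. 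Beyond that, the argument is a routine downward induction.
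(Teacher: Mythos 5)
Your proof is correct, and it matches the intended argument: the paper itself states this lemma as \cite[Proposition~3]{Zhang2023RemoveAssumptionsSPDetect} without reproducing a proof, but the remark immediately following it describes exactly your strategy of constructing the detector run ``from end to head'' under the same label sequence. Your backward induction, the case split on $|X_{i-1}|$, the padding of $X_{i-1}'$ to cardinality $2$, and the separate handling of the initial transition from $q_{0\det}=q_{0\obs}$ are all sound against Definitions~\ref{FA:def_obs} and the detector definition, so nothing is missing.
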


\autoref{lem1:High-OrderOpacity} implies that along a run of the observer of an
LFSA from the initial state to some state which is not equal to $\emptyset$,
one can construct a run of its detector from end to head under the same label sequence
such that the cardinality of each state can be $1$ or $2$: states of cardinality $1$ appear in the detector if they appear in the observer; states with cardinality $2$ may either originate from a state of the observer with cardinality $2$ or from a state of the observer with cardinality larger than $2$. \autoref{lem1:High-OrderOpacity}
plays a central role in getting an $\EXPTIME$ algorithm for verifying
the special order-$2$ property as shown in \autoref{subsubsec:Order2Opacity}.


\section{Order-\texorpdfstring{$1$}{1} estimation-based problems}
\label{sec:Order1Property}

This section formalizes a general framework of order-$1$ properties which contains a single agent.
Then, it is shown that a number of properties studied in the literature, including current-state opacity,
critical observability, regular-language-based opacity, and 
strong current-state opacity belong to
this framework.
In this section we also propose some verification approaches that do not provide an improvement to the literature in terms of computational efficiency. Their importance lies in the fact they can be generalized to multiple agents.

\subsection{The general framework}

Consider an FSA $G=(Q,E,\dt,Q_0)$, an agent $A_1$ with its set $E_1\subset E$ of observable events and its
labeling function $\ell_{E_1}:E\to\Sig_1\cup\{\ep\}$.
In the traditional estimation-based problems, agent $A_1$ knows the structure of $G$.

Denote
\begin{subequations}\label{eqn36:High-OrderOpacity}
  \begin{align}
	\ell_{E_1} &=: \ell_1, & \Obs_{\ell_{E_1}}(G) &=: \Obs_1(G),\\
	\Det_{\ell_{E_1}}(G) &=: \Det_1(G), & \Mt_{\ell_{E_1}}(G,\alpha) &=: \Mt_1(G,\alpha),
  \end{align}
\end{subequations}
for short.

Recall that with respect to an observation sequence $\alpha\in \ell_1(L(G))$,
$A_1$'s current-state estimate of $G$ is $\Mt_1(G,\alpha)$. Define a \emph{predicate
of order-$1$ as} 
\begin{align}\label{eqn:order1predicate}
  \pred^1 \subset 2^Q.
\end{align}
Note that the superscript $1$ in \eqref{eqn:order1predicate} means there is a unique agent,
while the subscript $1$ in \eqref{eqn36:High-OrderOpacity} refers to the first agent.
Then an order-$1$ estimation-based property is defined as follows.
\begin{definition}\label{def:order1SEproperty}
  An FSA $G$ satisfies the \emph{order-$1$ estimation-based 
  property $\pred^1$ with respect to agent $A_1$} if 
  \begin{equation}\label{eqn13:High-OrderOpacity}
	\{\Mt_1(G,\alpha)|\alpha\in \ell_1(L(G))\}\subset\pred^1.
  \end{equation}
\end{definition}

One can see that the observer is a basic tool which can be used to verify whether an FSA
$G$ satisfies this property with respect to agent $A_1$.

\begin{theorem}\label{thm6:High-OrderCSO} 
  An FSA $G$ satisfies the order-$1$ estimation-based property $\pred^1$ with respect  
  to agent $A_1$ if and only if in the observer $\Obs_1(G)$, every reachable state
  $X$ belongs to $\pred^1$.
\end{theorem}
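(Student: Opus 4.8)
The plan is to reduce the statement, via \autoref{def:order1SEproperty}, to one structural identity: the set of reachable states of the deterministic automaton $\Obs_1(G)$ equals $\{\Mt_1(G,\alpha)\mid\alpha\in P_1(L(G))\}$, together with the dead state $\emptyset$ when the latter is reachable. Granting this identity, both implications are immediate. If every reachable state of $\Obs_1(G)$ lies in $\pred_1$, then in particular $\Mt_1(G,\alpha)\in\pred_1$ for every $\alpha\in P_1(L(G))$, which is exactly \eqref{eqn13:High-OrderOpacity}. Conversely, \eqref{eqn13:High-OrderOpacity} says precisely that every \emph{nonempty} reachable state of $\Obs_1(G)$ lies in $\pred_1$; the only possible extra reachable state is $\emptyset$ (reachable exactly when some word over $E_1$ is not the $P_1$-image of any trace of $G$), which I would dispatch by the standing convention $\emptyset\in\pred_1$, or equivalently by reading ``reachable state'' in the statement as ``nonempty reachable state''. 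I would flag this last point as an explicit short remark rather than leave it implicit.

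The technical core is the identity $\dt_{\obs}(q_{0\obs},\alpha)=\Mt_1(G,\alpha)$ for every $\alpha\in(E_1)^*$, where $\Mt_1(G,\alpha):=\emptyset$ when $\alpha\notin P_1(L(G))$ and $\dt_{\obs}$ is read off the unique run of $\Obs_1(G)$ spelled by $\alpha$ (the observer is deterministic with a total transition function, so this run exists). I would prove it by induction on $|\alpha|$. The base case $\alpha=\epsilon$ is a direct unfolding of the definitions, since $q_{0\obs}=\bigcup_{q_0\in Q_0}\bigcup_{s\in(E\setminus E_1)^*}\dt(q_0,s)$ is exactly the set of states reachable from $Q_0$ by a trace with empty $P_1$-image, i.e. $\Mt_1(G,\epsilon)$. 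For the inductive step write $\alpha=\alpha'a$ with $a\in E_1$. By the induction hypothesis and the definition of $\dt_{\obs}$,
\[
  \dt_{\obs}(q_{0\obs},\alpha'a)=\bigcup_{q\in\Mt_1(G,\alpha')}\ \bigcup_{\substack{e\in E_1\\ P_1(e)=a}}\ \bigcup_{s\in(E\setminus E_1)^*}\dt(q,es),
\]
while on the other side every run $q_0\xrightarrow{w}q'$ with $q_0\in Q_0$ and $P_1(w)=\alpha'a$ factors uniquely, by splitting $w$ immediately before its last letter in $E_1$, as $q_0\xrightarrow{w'}q\xrightarrow{es}q'$ with $P_1(w')=\alpha'$, $e\in E_1$, $P_1(e)=a$, $s\in(E\setminus E_1)^*$; and conversely any such concatenation is again a run of this form. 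Hence $\Mt_1(G,\alpha'a)$ equals the very same triple union, closing the induction. From the identity, the reachable states of $\Obs_1(G)$ are $\{\Mt_1(G,\alpha)\mid\alpha\in(E_1)^*\}$, and since $\Mt_1(G,\alpha)$ is nonempty exactly when $\alpha\in P_1(L(G))$, this collapses to the set described in the first paragraph.

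I expect the inductive bookkeeping to be the main obstacle, for a mild reason: $\dt_{\obs}$ and $\Mt_1$ are built by genuinely different recursions --- a letter-by-letter ``observable event followed by its unobservable closure'' versus an existential quantifier over runs --- so the work is to check that the run-factorization ``last observable letter, with all trailing unobservable letters bundled after it'' matches a single application of $\dt_{\obs}$, and that the $\epsilon$-closure of $Q_0$ already baked into $q_{0\obs}$ absorbs the unobservable prefix of a run. (Since $P_1$ is a projection one has $P_1^{-1}(a)=\{a\}$, so the middle union above actually collapses to $e=a$; keeping it as written makes the same argument apply verbatim to a general labeling function, i.e. to $\Obs_\ell(G)$, if desired.) Everything else --- one run per input word in a deterministic automaton, reachable states being the endpoints of runs from the initial state, and the passage from the identity to the two implications --- is routine.
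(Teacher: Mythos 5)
Your proposal is correct and follows the only natural route: the paper states \autoref{thm6:High-OrderCSO} without proof, implicitly relying on the standard fact that the reachable states of $\Obs_1(G)$ are exactly the current-state estimates $\{\Mt_1(G,\alpha)\mid\alpha\in P_1(L(G))\}$, and your induction establishing $\dt_{\obs}(q_{0\obs},\alpha)=\Mt_1(G,\alpha)$ is precisely that argument. Your explicit remark about the state $\emptyset$ is a genuine point the paper glosses over (its observer figures silently omit the dead state), and resolving it by restricting to nonempty reachable states matches the paper's evident intent.
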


\begin{proof}
  Directly follows from \autoref{fact1:High-OrderOpacity}.
\end{proof}

\autoref{thm6:High-OrderCSO} provides an exponential-time algorithm 
for verifying the order-$1$ estimation-based property $\pred^1$,
because it takes exponential time to compute the observer
$\Obs_1(G)$. Note that the reachable part of 
$\Obs_1(G)$ is sufficient to verify the property.
Particularly, if except for the initial state of 
$\Obs_1(G)$, all reachable states have cardinalities no greater than $2$, then the reachable part of
$\Obs_1(G)$ can be computed in polynomial time. 
In this particular case,
one can get a polynomial-time verification algorithm.

Consider a special type 
$\pred^1\subset 2^Q$ of predicates satisfying that for every $X\subset Q$, if 
$|X|>2$ then $X \not\in \pred^1$, the satisfiability of the order-$1$ 
estimation-based property $\pred^1$ 
can be 
verified in polynomial time, because this property does not hold if and only if
in $\Obs_1(G)$, there is a reachable state $X$ with $|X|>2$. One can check this condition by computing 
the reachable states of $\Obs_1(G)$ from the initial state one by one, once a reachable state with 
cardinality greater than $2$ (including the initial state) is computed, stop and return that the property does
not hold; otherwise,
the whole reachable part of $\Obs_1(G)$ will be computed in polynomial time and all reachable states
have cardinalities no greater than $2$.

Until now, we have not associated any physical meaning with an order-$1$ estimation-based
property. 
In the following, we recall a number of properties defined in the literature that can be 
classified as the order-$1$ estimation-based property.
In order to do classification for a static property, that is, a property that only involves which states are reachable finally (e.g., current-state opacity \cite{Cassez2009DynamicOpcity,Saboori2007CurrentStateOpacity} 
  and critical observability \cite{Pola2017DecenCriticalObserver}),
one can directly use the observer $\Obs(\Scal)$; but for a dynamic property, that is, a property
that involves not only the finally reachable states but also the system's evolution via which the states are reachable (e.g., strong current-state opacity \cite{Han2023StrongCSOpacity_DES}
and regular-language-based opacity \cite{Lin2011OpacityDES}),
one needs to first compute the concurrent composition of the system and another system containing the evolution information
and then compute the observer of the concurrent composition.

\subsection{Case study 1: Current-state opacity}
\label{subsec:CSO}

Specify a subset $\QS\subset Q$ of secret states. Current-state opacity 
\cite{Cassez2009DynamicOpcity,Saboori2007CurrentStateOpacity} means that,
whenever a secret state is visited at the end of a run, there is another run whose final state is 
not secret such that the two runs look the same to an intruder.
This property is static, because it only involves in the final states of the runs and not how they have been reached.
In this setting, agent $A_1$ is regarded as an intruder $\Intr$. 

\begin{definition}[\cite{Saboori2007CurrentStateOpacity}]\label{FA:def_CSO}
  Consider an LFSA $(G,\Sig_1,\ell_1)$ and a subset $\QS\subset Q$ of secret states.
  FSA $G$ is called \emph{current-state opaque} \emph{with respect to $\ell_1$
  and $\QS$}
  if for every run $q_0\xrightarrow[]{s}q$ with 
  $q_0\in Q_0$ and $q\in \QS$, there exists a run $q_0'\xrightarrow[]{s'}q'$ such that $q_0'\in Q_0$,
  $q'\in Q\setminus \QS$, and $\ell_1(s)=\ell_1(s')$.
\end{definition}

To show that current-state opacity is a particular order-$1$ estimation-based property,
we define the order-$1$ predicate
\begin{align}\label{eqn14:High-OrderOpacity} 
  \pred^1(G,\QS):=\{X\subset Q|X\not\subset \QS\}\subset 2^Q.
\end{align}
The notion of current-state opacity is reformulated as follows.
\begin{definition}\label{def9:High-OrderCSO}  
  An FSA $G$ is called \emph{current-state opaque with respect to $\ell_1$ and $\QS$} if
  $$\{\Mt_1(G,\alpha)|\alpha\in \ell_1(L(G))\}\subset\pred^1(G,\QS).$$
\end{definition}

\autoref{def9:High-OrderCSO} is very similar to the following equivalent \autoref{def7:High-OrderCSO}.

\begin{definition}[\cite{Cassez2009DynamicOpcity}]\label{def7:High-OrderCSO} 
  An FSA $G$ is called \emph{current-state opaque with respect to $\ell_1$ and $\QS$} if
  for all $\alpha\in \ell_1(L(G))$, $\Mt_1(G,\alpha)\not\subset \QS$.
\end{definition}


\autoref{thm6:High-OrderCSO} provides an exponential-time algorithm for verifying 
current-state opacity of LFSAs. Furthermore, the current-state opacity verification problem is $\PSPACE$-complete in 
LFSAs \cite{Cassez2009DynamicOpcity}.

\subsection{Case study 2: A generalized version of critical observability}

In this subsection, we characterize a generalized version of critical observability, which
implies that under every generated label sequence $\alpha$,
the current-state estimate is contained in exactly one partition cell of a given partition of the state set $Q$. 
This property is also static.
We show this critical observability is also an order-$1$ estimation-based property.

\begin{definition}\label{FA:def_nCD}
  An FSA $G$ is called \emph{critically observable with respect to ${\blue \ell}_1$ and partition\\
  $\{Q_{\CriObs}^1,\dots,Q_{\CriObs}^n\}$\footnote{That is, $Q_{\CriObs}^1,\dots,Q_{\CriObs}^n$
  are nonempty, pairwise disjoint, and satisfy $\bigcup_{i=1}^{n}Q_{\CriObs}^i=Q$.} of $Q$} if
  for every $\alpha\in \ell_1(L(G))$, $\Mt_1(G,\alpha)\subset Q_{\CriObs}^i$ for exactly one $i$ in $\llb 1,n \rrb$.
\end{definition}

Define a special type of predicates:
\begin{subequations}\label{eqn17:High-OrderOpacity}
\begin{align} 
  &\pred^1_{\CriObs}\\
  = & \{X\subset Q|X\subset Q_{\CriObs}^i\text{ for exactly one }i\in\llb 1,n \rrb\}\subset 2^Q
\end{align}
\end{subequations}
Then the critical observability can be reformulated as follows.

\begin{definition}\label{FA:def_CD_PRED}
  An FSA $G$ is called 
  \emph{critically observable with respect to $\ell_1$ and $\{Q_{\CriObs}^1,\dots,Q_{\CriObs}^n\}$} if
  \[\{\Mt_1(G,\alpha)|\alpha\in \ell_1(L(G))\} \subset \pred^1_{\CriObs}.\]
\end{definition}

When the above partition has exactly two partition cells (i.e., $n=2$), the critical observability degenerates to the critical observability studied in \cite{Pola2017DecenCriticalObserver}.


\subsection{Case study 3: Regular-language-based opacity}

Consider three alphabets $\Sig_1,\Sig_2,\Sig$ and two labeling functions $\ell_1:\Sig_1\to\Sig\cup\{\ep\}$ and
$\ell_2:\Sig_2\to\Sig\cup\{\ep\}$. A language $L_1\subset \Sig_1^*$ is called \emph{L-opaque
with respect to a language $L_2\subset\Sig_2^*$ and labeling functions
$\ell_1$, $\ell_2$} if $\ell_1(L_1)\subset \ell_2(L_2)$ \cite{Bryans2008OpacityTransitionSystems},
where \emph{L} is short for \emph{language}. L-opacity is undecidable \cite{Bryans2008OpacityTransitionSystems}.
In \cite{Lin2011OpacityDES}, a special type of L-opacity was studied: 
\begin{definition}[\cite{Lin2011OpacityDES}]\label{FA:def_RLO}
  A regular language $L_1$ is called \emph{L-opaque
  with respect to a regular language $L_2$ and labeling functions $\ell_1,\ell_2$} if
  $\ell_1(L_1)\subset \ell_2(L_2)$.
\end{definition}

Apparently, L-opacity is dynamic because it relies on traces which reflect partial evolution information of the automata generating the languages.

For regular language $L_i$, $i=1,2$, denote a nondeterministic finite automaton (NFA) recognizing 
$L_i$ as $G_i=(Q_i,\Sig_i,\dt_i,Q_{0i},F_i)$, where 
$F_i\subset Q_i$ is a set of final states,
that is, $L_i=\{w\in\Sig_i^*|\dt_i(q_{0i},w)\cap F_i\ne\emptyset\text{ for some }q_{0i}\in
Q_{0i}\}$.
Denote $\Scal_i=(G_i,\Sig,\ell_i)$, $i=1,2$. 
Next, we show that regular-language-based opacity
(RL-opacity) is an order-$1$ estimation-based property by developing a 
verification method based on the observer (\autoref{FA:def_obs}) and the concurrent composition
(\autoref{FA:def_CCa}).

Without loss of generality, we assume
\begin{equation*}
  \tag{D}\label{quoteD:High-OrderOpacity}
  \parbox{\dimexpr\linewidth-4em}{%
	\strut
	for every run $q_{01}\xrightarrow[]{s_1} q_1$ in $G_1$, there is a run $q_{02}\xrightarrow[]{s_2} q_2$
	in $G_2$ such that $\ell_1(s_1)=\ell_2(s_2)$, where $q_{0i}\in Q_{0i}$, $i=1,2$.
	\strut
  }
\end{equation*}
If $G_1$ and $G_2$ do not satisfy \eqref{quoteD:High-OrderOpacity},
we modify $G_2$ in polynomial time to make $G_1$ and the modification $G_2^{\diamond}$ of $G_2$ satisfy
\eqref{quoteD:High-OrderOpacity}, where $G_2^{\diamond}$ still recognizes $L_2$.
We update $G_2$ as follows: Add a fresh non-final state $\diamond$ into $G_2$, where $\diamond\not\in Q_1\cup Q_2$.
For every state $q_2\in Q_2$ and every $\sigma\in \ell_1(\Sig_1)$, if there is no transition starting at $q_2$ (resp., $\diamond$) the label of whose event is $\sigma$,
add a transition $q_2\xrightarrow[]{e_2}\diamond$ (resp., $\diamond\xrightarrow[]{e_2}\diamond$),
where $e_2$ belongs to $\Sig_1\cup\Sig_2$ and satisfies $\ell_2(e_2)=\sigma$ if such an $e_2$ exists, 
otherwise choose a fresh event $e_2$ not in $\Sig_1\cup\Sig_2$ and we define its label as $\sigma$. 
Denote the current update of $G_2$ by $G_2^{\diamond}$, and the update of labeling function $\ell_2$ by $\ell_2^{\diamond}$.

We have $G_2^{\diamond}$ still recognizes $L_2$ and $G_1$ and $G_2^{\diamond}$ satisfy \eqref{quoteD:High-OrderOpacity}.
We also have $\ell_1(L_1)\subset\ell_2(L_2)$ if and only if $\ell_1(L_1)\subset
\ell_2^{\diamond}(L_2)$, 
hence we can equivalently consider $G_1,\ell_1$ and $G_2^{\diamond},\ell_2^{\diamond}$.

The following \autoref{thm1:FA:SRLOpacity} provides a necessary and sufficient condition for
regular-language-based opacity. It also shows regular-language-based opacity is an order-1 property.

Denote $\Scal_2^{\diamond}=(G_2^{\diamond},\Sig,\ell_2^{\diamond})$.
To simplify the notation, in the following \autoref{thm1:FA:SRLOpacity} and its proof,
we omit $\diamond$ and use $G_2$ and $\ell_2$, and $\Scal_2$ to denote $G_2^{\diamond}$, 
$\ell_2^{\diamond}$, and $\Scal_2^{\diamond}$, respectively.

\begin{theorem}\label{thm1:FA:SRLOpacity}
  Consider two LFSAs $\Scal_i=(G_i,\Sig_i,\ell_i)$, $i=1,2$, where $G_i=(Q_i,\Sig_i,\dt_i,Q_{0i},F_i)$ is an NFA
  and recognizes regular language $L_i$. Without loss of generality, assume $G_1$ and $G_2$ satisfy
  \eqref{quoteD:High-OrderOpacity}.
  Regular language $L_1$ is L-opaque with respect to regular language $L_2$ and labeling functions 
  $\ell_1,\ell_2$ if and only if in $\Obs(\CC(\Scal_1,\Scal_2))$, in each reachable state $X$ of 
  $\Obs(\CC(\Scal_1,\Scal_2))$, if there is a pair $(q_1,q_2)$ such that $q_1\in F_1$, then
  there is another pair $(q_1,q_3)$ such that $q_3\in F_2$.
\end{theorem}

\begin{proof}
  ``only if'': Assume $L_1$ is L-opaque with respect to $L_2$, $\ell_1$, and $\ell_2$. Choose an 
  arbitrary reachable
  state $X$ of $\CC(\Scal_1,\Scal_2)$ that contains a pair $(q_1,q_2)$ with $q_1\in F_1$ and $q_2\not\in F_2$.
  Choose a run $X_0\xrightarrow[]{\alpha} X$ in $\Obs(\CC(\Scal_1,\Scal_2))$, where $X_0$ is the initial state.
  Then there is a run
  $q_{01}\xrightarrow[]{s_1} q_1$ in $\Scal_1$ such that $\ell_1(s_1)=\alpha$
  and $q_{01}$ is initial.
  By assumption, there is a word $s_2\in L_2$ such that
  $\ell_1(s_1)=\ell(s_2)$. This implies a run $q_{02}\xrightarrow[]{s_2} q_3$ in $\Scal_2$ such that $q_{02}$ is initial and $q_3\in F_2$.
  The two runs $q_{01}\xrightarrow[]{s_1} q_1$ and $q_{02}\xrightarrow[]{s_2} q_3$ form a run 
  $(q_{01},q_{02})\xrightarrow[]{s'} (q_1,q_3)$ of $\CC(\Scal_1,\Scal_2)$ such that $s'$ produces label sequence
  $\alpha$. Hence $(q_1,q_3)\in X$.

  ``if'': Choose an arbitrary $s_1\in L_1$. Then there is a run $q_{01}\xrightarrow[]{s_1} q_1$ of $\Scal_1$ such that $q_{01}$ is initial and $q_1\in F_1$.
  Choose a run $q_{02}\xrightarrow[]{s_2} q_2$ of $\Scal_2$ such that $\ell_1(s_1)=\ell_2(s_2)$ (by \eqref{quoteD:High-OrderOpacity} such a run exists) and $q_{02}$ 
  is initial. Assume $q_2\not\in F_2$.
  Then for the run $X_0\xrightarrow[]{\ell_1(s_1)}X$ in $\Obs(\CC(\Scal_1,\Scal_2))$, where $X_0$ is the initial state,
  we have $(q_1,q_2)\in X$. By assumption, there is $q_3\in F_2$ such that
  $(q_1,q_3)\in X$. Then there is a run $q_{02}'\xrightarrow[]{s_3} q_3$ in $\Scal_2$ such that $\ell_2(s_3)=
  \ell_1(s_1)$ and $q_{02}'$ is initial. That is, $L_1$ is L-opaque with respect to $L_2$, $\ell_1$, and $\ell_2$.
\end{proof}

\begin{example}\label{exam3:High-OrderCSO}
  Consider the regular languages $L_1$ and $L_2$ recognized by $G_1$ and $G_2$
  shown in \autoref{FA:fig45}, where the labeling functions are
  $\ell_1(e_1)=\ell_2(e_1)=a$, $\ell_1(e_2)=b$, $1_1,2_1,1_2$ are final states, an input arrow from nowhere 
  denotes an initial state.
  One can see $L_1$ is not L-opaque with respect to $L_2$, $\ell_1$, and $\ell_2$,
  because $e_2\in L_1$, $\ell_1(e_2)=b$, and there is no word in $L_2$ with label sequence equal to $b$. Next, we use \autoref{thm1:FA:SRLOpacity} to verify this result.

  Now $G_1$ and $G_2$ do not satisfy 
  \eqref{quoteD:High-OrderOpacity}. We modify $G_2$ to obtain its modification
  $G_2^{\diamond}$ as in \autoref{FA:fig47} so that $G_1$ and $G_2^{\diamond}$
  satisfy \eqref{quoteD:High-OrderOpacity}.
  By above argument, in order to check if $L_1$ is L-opaque with respect to $L_2$, $\ell_1$, and $\ell_2$, we equivalently check if $L_1$ is L-opaque with respect to $L_2$, $\ell_1$, and $\ell_2^{\diamond}$.
  \begin{figure}[!htbp]
	\centering
	\subcaptionbox{$\Scal_1=(G_1,\Sig,\ell_1)$.\label{FA:fig45_1}}{
	  \begin{tikzpicture}[>=stealth',shorten >=1pt,auto,node distance=3.0 cm, scale = 1.0, transform shape,
	>=stealth,inner sep=2pt]

		\tikzstyle{emptynode}=[inner sep=0,outer sep=0]

		\node[state,initial, initial where = left] (01) {$0_1$};
		\node[state, accepting] (21) [right of =01] {$2_1$};
		\node[state, accepting] (11) [above = 1cm of 21] {$1_1$};
		\node[state] (31) [right of =11] {$3_1$};
		\node[state] (41) [right of =21] {$4_1$};

		\path [->]
		(01) edge node [above, sloped] {$e_1(a)$} (11)
		(01) edge node [above, sloped] {$e_2(b)$} (21)
		(11) edge node [above, sloped] {$e_1(a)$} (31)
		(21) edge node [above, sloped] {$e_2(b)$} (41)
		;
	\end{tikzpicture}
  }\hspace{0.5cm}
	\subcaptionbox{$\Scal_2=(G_2,\Sig,\ell_2)$.\label{FA:fig45_2}}{
	  \begin{tikzpicture}[>=stealth',shorten >=1pt,auto,node distance=3.0 cm, scale = 1.0, transform shape,
	>=stealth,inner sep=2pt]
		\node[state,initial, initial where = left] (02) {$0_2$};
		\node[state, accepting] (12) [right of =02] {$1_2$};
		\node[state] (32) [right of =12] {$3_2$};

		\path [->]
		(02) edge node [above, sloped] {$e_1(a)$} (12)
		(12) edge node [above, sloped] {$e_1(a)$} (32)
		;
		\end{tikzpicture}
	}
	\caption{Illustrative automata in \autoref{exam3:High-OrderCSO}.}
	\label{FA:fig45}
  \end{figure}

  \begin{figure}[!htbp]
	\centering
	\subcaptionbox{$\Scal_2^{\diamond}=(G_2^{\diamond},\Sig,\ell_2^{\diamond})$.\label{FA:fig46_2}}{
	  \begin{tikzpicture}[>=stealth',shorten >=1pt,auto,node distance=3.0 cm, scale = 1.0, transform shape,
	>=stealth,inner sep=2pt]
		\node[state, initial] (02) {$0_2$};
		\node[state, accepting] (12) [right of =02] {$1_2$};
		\node[state] (32) [right of =12] {$3_2$};
		\node[state] (d) [below of =12] {$\diamond$};

		\path [->]
		(02) edge node [above, sloped] {$e_1(a)$} (12)
		(12) edge node [above, sloped] {$e_1(a)$} (32)

		(02) edge node [above, sloped] {$e_2(b)$} (d)
		(12) edge node [above, sloped] {$e_2(b)$} (d)
		(32) edge node [below, sloped] {$e_1(a),e_2(b)$} (d)

		(d) edge [loop right] node {$e_1(a),e_2(b)$} (d)
		;
		\end{tikzpicture}
	  }\hspace{0.5cm}
	  \subcaptionbox{Part of $\CC(\Scal_1,\Scal_2^{\diamond})$.\label{FA:fig47_1}}{
	  \begin{tikzpicture}[>=stealth',shorten >=1pt,auto,node distance=3.0 cm, scale = 1.0, transform shape,
	>=stealth,inner sep=2pt]
		\node[rectangular state, initial, initial where = above] (01-01') {$(0_1,0_2)$};
		\node[rectangular state] (21-d) [right of =01-01'] {$(2_1,\diamond)$};
		\node[rectangular state] (11-12) [left of =01-01'] {$(1_1,1_2)$};

		\path [->]
		(01-01') edge node [above, sloped] {$(e_2,e_2)$} (21-d)
		(01-01') edge node [above, sloped] {$(e_1,e_1)$} (11-12)
		;
		\end{tikzpicture}
	}

	\subcaptionbox{Part of $\Obs(\CC(\Scal_1,\Scal_2^{\diamond}))$.\label{FA:fig47_2}}{
	  \begin{tikzpicture}[>=stealth',shorten >=1pt,auto,node distance=3.0 cm, scale = 1.0, transform shape,
	>=stealth,inner sep=2pt]
		\node[rectangular state, initial, initial where = above] (01-01') {$\{(0_1,0_2)\}$};
		\node[rectangular state] (21-d) [right of =01-01'] {$\{(2_1,\diamond)\}$};
		\node[rectangular state] (11-12) [left of =01-01'] {$\{(1_1,1_2)\}$};

		\path [->]
		(01-01') edge node [above, sloped] {$b$} (21-d)
		(01-01') edge node [above, sloped] {$a$} (11-12)
		;

		\end{tikzpicture}
	}
	\caption{Illustrative automata in \autoref{exam3:High-OrderCSO}.}
	\label{FA:fig47}
  \end{figure}

We compute $\CC(\Scal_1,\Scal_2^{\diamond})$ and $\Obs(\CC(\Scal_1,\Scal_2^{\diamond}))$ as in \autoref{FA:fig47}.
  In $\Obs(\CC(\Scal_1,\Scal_2^{\diamond}))$, there is a state $\{(2_1,\diamond)\}$ that contains a pair $(2_1,\diamond)$ with $2_1$ a final state of $G_1$, but does not contain a pair $(2_1,q)$ with $q$ a final state of $G_2^{\diamond}$. By \autoref{thm1:FA:SRLOpacity}, we also have $L_1$ is not L-opaque with respect to $L_2$, $\ell_1$, and $\ell_2$.
\end{example}

The L-opacity verification method shown in \autoref{thm1:FA:SRLOpacity} is more concise than that designed in \cite{Lin2011OpacityDES}. 

We use \autoref{thm1:FA:SRLOpacity} to define a special type of predicates as
\begin{equation}\label{eqn23:High-OrderOpacity} 
\begin{split}
  &\pred^1(Q_1,Q_2,F_1,F_2)  \\
  := &
  \{X\subset Q_1\times Q_2| (\exists q_1\in F_1)(\exists q_2\in Q_2)[(q_1,q_2)\in X]\\
  &\qquad\qquad\qquad\ \implies (\exists q_3\in F_2)[(q_1,q_3)\in X]\} \\
  \subset & 2^{Q_1\times Q_2}. 
\end{split}
\end{equation}
Then, the notion of RL-opacity is reformulated as follows.
\begin{definition}\label{FA:def_SRLO_Pred}
  A regular language $L_1$ is called \emph{L-opaque with respect to a regular language $L_2$ and labeling functions
  $\ell_1,\ell_2$} if 
  \begin{align}
	(Q_1\times Q_2)_{\Obs} \subset\pred^1(Q_1,Q_2,F_1,F_2),
  \end{align}
  where for $i=1,2$, NFA $G_i=(Q_i,\Sig_i,\dt_i,Q_{0i},F_i)$ recognizes language $L_i$,
  $\Scal_i=(G_i,\Sig,\ell_i)$, $G_1$ and $G_2$ satisfy \eqref{quoteD:High-OrderOpacity}.
  In addition, $(Q_1\times Q_2)_{\Obs}$ denotes the set of reachable states of observer $\Obs(\CC(\Scal_1,\Scal_2))$.
\end{definition}

\subsection{Case study 4: Strong current-state opacity}
\label{subsec:SCSO}

Stronger than current-state opacity, strong current-state opacity \cite{Han2023StrongCSOpacity_DES}
not only can guarantee that an intruder cannot be sure whether the current state of
an FSA is secret, but also can guarantee that 
the intruder cannot be sure whether some secret state has been 
visited. A \emph{non-secret run} is a run that contains no secret states.

\begin{definition}[\cite{Han2023StrongCSOpacity_DES}]\label{FA:def_SCSO}
  Consider an LFSA $(G,\Sig_1,\ell_1)=:\Scal$ and a subset $\QS\subset Q$ of secret states.
  FSA $G$ is called \emph{strongly current-state opaque} \emph{with respect to $\ell_1$
  and $\QS$}, or more concisely, LFSA $\Scal$ is called \emph{strongly current-state opaque with respect to 
  $\QS$},
	if for every run $q_0\xrightarrow[]{s}q$ with $q_0\in Q_0$ and $q\in \QS$, there exists a non-secret run $q_0'\xrightarrow[]{s'}q'$ 
	such that $q_0'\in Q_0$ and $\ell_1(s)=\ell_1(s')$.
\end{definition}

Strong current-state opacity is dynamic, because in the second run, the final state
is reachable from a special run which contains no secret state, but not reachable from an 
arbitrary run. Based on this,
unlike current-state opacity, one cannot directly use the observer to verify
strong current-state opacity.

Recall $G=(Q,E,\dt,Q_0)$.
Construct $G_1'=(Q,E,\dt,Q_0,\QS)$ as an NFA, $\Scal_1'=(G_1',\Sig_1,\ell_1)$ as an LFSA.
Construct the non-secret sub-automaton $\Scal_{\NS}$ of $\Scal$ that is obtained from
$\Scal$ by removing all secret states and the corresponding transitions. Denote
$\Scal_{\NS}=(G_1'',\Sig_1,\ell_1'')$, where $G_1''=(Q_{\NS},E_{\NS},\dt_{\NS},Q_{0\NS},Q_{\NS})$, $\ell_1''=\ell_1|_{E_{\NS}}$. Denote the regular 
languages recognized by $G_1'$ and $G_1''$ by $L'$ and $L''$, respectively. Then
FSA $G$ is strongly current-state opaque with respect to $\ell_1$ and $\QS$ if and only if
language $L'$ is L-opaque with respect to languages $L''$ and labeling functions 
$\ell_1$ and $\ell_1''$. We have shown strong current-state opacity is a special type of RL-opacity, hence it is also an order-$1$ estimation property.

\begin{remark}
  After transforming strong current-state opacity to RL-opacity, one can use the method with time complexity $O(2^{n(n+1)}m)$ in \autoref{thm1:FA:SRLOpacity} to verify strong current-state opacity of FSA $G=(Q,E,\dt,Q_0)$ with respect to $\ell_1$ and secret state set $\QS$, this method is less efficient than the verification method with time complexity $O(n^22^{n'+1}m)$ given in \cite{Han2023StrongCSOpacity_DES}, where $n=|Q|$, $n'=|Q_{\NS}|\le n$, $m=|E|$.
\end{remark}

\section{Order-\texorpdfstring{$2$}{2} estimation-based problems}
\label{sec:Order2Property}

\subsection{The general framework}

Consider an FSA $G=(Q,E,\dt,Q_0)$, two agents $A_1$ and $A_2$ with observable event sets $E_1\subset E$
and $E_2\subset E$ and labeling functions $\ell_{E_1}$ and $\ell_{E_2}$.
As mentioned before, assume both agents know the structure of $G$,
also assume $A_2$ knows $E_1$ but cannot observe events of $E_1\setminus E_2$.
Denote
\begin{subequations}\label{eqn37:High-OrderOpacity} 
  \begin{align}
	\ell_{E_i} &=: \ell_i, & \Obs_{\ell_{E_i}}(G) &=: \Obs_i(G),\\
	\Det_{\ell_{E_i}}(G) &=: \Det_i(G), & \Mt_{\ell_{E_i}}(G,\alpha) &=: \Mt_i(G,\alpha),
  \end{align}
\end{subequations}
for short, $i=1,2$.
We formulate the order-$2$ estimation-based property as follows. Recall with respect to a label sequence
$\alpha\in \ell_1(L(G))$ generated by $G$ observed by agent $A_1$, $A_1$'s current-state estimate of $G$ 
is $\Mt_1(G,\alpha)$. Agent $A_2$ knows $E_1$, so $A_2$ can infer $A_1$'s current-state estimate of
$G$ from $A_2$'s own observations to $G$. 

For a label sequence $\alpha$ observed by $A_2$, the real generated event sequence can be 
any $s\in \ell_2^{-1}(\alpha)\cap L(G)$, so the observation of $A_1$ can be $\ell_1(s)$
for any such $s$,
and then the inference of $A_1$'s current-state estimate from $A_2$ can be $\Mt_1(G,\ell_1(s))$
for any such $s$.
Formally, when $A_2$ observes $\alpha\in \ell_2(L(G))$, all possible inferences of $A_1$'s current-state estimate of $G$
by $A_2$ are represented by the set
\begin{subequations}\label{eqn15:High-OrderOpacity}
\begin{align}
  & \Mt_{A_1\leftarrow A_2}(G,\alpha)\label{eqn15_1:High-OrderOpacity}\\
  := &\{\Mt_1(G,\ell_1(s))|s\in \ell_2^{-1}(\alpha)\cap L(G)\} \subset 2^Q.\label{eqn15_2:High-OrderOpacity}
\end{align}
\end{subequations}
By definition, $\Mt_{A_1\leftarrow A_2}(G,\alpha)$ must contain $A_1$'s real current-state estimate of $G$.

Define a \emph{predicate of order-$2$ as} 
\begin{align}\label{eqn:order2predicate}
  \pred^2 \subset 2^{2^Q}.
\end{align}
Then an order-$2$ estimation-based property is defined as follows.
\begin{definition}\label{def:order2SEproperty}
  An FSA $G$ satisfies the \emph{order-$2$ estimation-based
  property $\pred^2$ with respect to agents $A_1$ and $A_2$} if 
  \begin{equation}\label{eqn16:High-OrderOpacity} 
	\{\Mt_{A_1\leftarrow A_2}(G,\alpha)|\alpha\in \ell_2(L(G))\}\subset\pred^2.
  \end{equation}
\end{definition}

In order to derive an algorithm to verify \autoref{def:order2SEproperty}, we use two basic tools ---
concurrent composition (as in \autoref{FA:def_CCa}) and observer (as in \autoref{FA:def_obs})
to define a new tool --- \emph{order-$2$ observer}.
The following constructive definition provides a procedure to compute an order-$2$ observer.
\begin{definition}\label{def:order2observer}
  Consider FSA $G$ as in \eqref{FSA}, agents $A_1$ and $A_2$ with observable event sets $E_1\subset E$
  and $E_2\subset E$ and labeling functions $\ell_{E_1}$ and $\ell_{E_2}$.
  Denote LFSAs $(G,\Sig_i,\ell_i)$ by $G_{A_i}$ for short, $i=1,2$. 
\begin{enumerate}
  \item Compute the observer $\Obs_1(G)$ of $G_{A_1}$, and denote $\Obs_1(G)$ as $\Obs_{A_1}$ for short.
  \item Compute the concurrent composition $\CC(G_{A_1},\Obs_{A_1})$ of LFSA $G_{A_1}$ and its observer $\Obs_{A_1}$,
	replace each event $(e_1,e_2)$ by $e_1$, replace the labeling function of $\CC(G_{A_1},\Obs_{A_1})$
	by $\ell_2$, and denote the modification of $\CC(G_{A_1},\Obs_{A_1})$ by $\CC_{A_1\to A_2}^{G,\Obs}$
	which is an LFSA.
  \item Compute the observer $\Obs(\CC_{A_1\to A_2}^{G,\Obs})$ of $\CC_{A_1\to A_2}^{G,\Obs}$, and call
	$\Obs(\CC_{A_1\to A_2}^{G,\Obs})$ \emph{order-$2$ observer} and denote it as $\Obs_{A_1\leftarrow A_2}
	(G)$, where $A_1\leftarrow A_2$ intuitively describes the scenario $A_2$ infers $A_1$. 
\end{enumerate}
\end{definition}
See \autoref{fig1:order2SEproperty} for an illustration.
It takes doubly exponential time to compute an order-$2$ observer $\Obs_{A_1\leftarrow A_2}(G)$.

\begin{figure}[!htbp]
  \centering
  \begin{tikzpicture}[circuit logic IEC]
	\matrix[column sep=5mm, row sep=0.5cm]
	{
	  \node (SusrR) {$G_{A_1}$}; & & \node (ObsusrR) {$\Obs_{A_1}$};\\
	  & \node (CCusrR) {$\CC$}; & \\
	  & \node (P1to2R) {$\to \ell_2$}; & \\
	  & \node (ObsIntrR) {$\Obs$}; & \\
	};
	\draw (SusrR.south) -- ++(down:2.0mm) -| (CCusrR.north);
	\draw (ObsusrR.south) -- ++(down:2.0mm) -| (CCusrR.north);
	\draw (CCusrR.south) -- (P1to2R);
	\draw (P1to2R) -- (ObsIntrR.north);
  \end{tikzpicture}
  \caption{Sketch of the order-$2$ observer and $2$-$\EXPTIME$ verification structure for the order-$2$ estimation-based property.}
  \label{fig1:order2SEproperty}
\end{figure}

The next \autoref{thm7:High-OrderCSO} shows several fundamental properties of the order-$2$ observer
$\Obs_{A_1\leftarrow A_2}(G)$, 
and plays a fundamental role in verifying the order-$2$ estimation-based property.

\begin{theorem}\label{thm7:High-OrderCSO}
  Consider an FSA $G$ as in \eqref{FSA}, agents $A_1$ and $A_2$ with observable event sets $E_1\subset E$
  and $E_2\subset E$ and labeling functions $\ell_{E_1}$ and $\ell_{E_2}$,
  and the order-$2$ observer $\Obs_{A_1\leftarrow A_2}(G)$.
  \begin{enumerate}[(i)]
	\item\label{item1:SEproperty}
	  The initial state of 
	  $\Obs_{A_1\leftarrow A_2}(G)$ is of the
	  form $\{(q_{0,1},X_0),\dots,(q_{0,m},X_0)\}$, where 
	  $\{q_{0,1},\dots,q_{0,m}\}=X_0$, $X_0$ is the initial state of $\Obs_{A_1}$.
	\item\label{item2:SEproperty}
	  $L(G) = L(\CC_{A_1\to A_2}^{G,\Obs})$.
	\item\label{item3:SEproperty}
	  For every reachable state $\{(q_1,X_1),\dots,(q_n,X_n)\}$ of $\Obs_{A_1\leftarrow A_2}(G)$, 
	  $q_j\in X_j$, $j\in\llb1, n\rrb$.
	\item\label{item4:SEproperty}
	  For every run $C_0\xrightarrow[]{\alpha} \{(q_1,X_1),\dots,(q_n,X_n)\}$ of 
	  $\Obs_{A_1\leftarrow A_2}(G)$, where $C_0$ is the initial state, 
	  $\{X_1,\dots,X_n\}=\Mt_{A_1\leftarrow A_2}(G,\alpha)$.
  \end{enumerate}
\end{theorem}

\begin{proof}
  \eqref{item1:SEproperty}, \eqref{item2:SEproperty} and \eqref{item3:SEproperty} hold by definition.
  For \eqref{item4:SEproperty}, 
  the initial state of order-$2$ observer $\Obs(\CC_{A_1\to A_2}^{G,\Obs})$ is $\{(q_{0,1},X_0),\dots,(q_{0,m},X_0)\}$.
  Then $$\{(q_{0,1},X_0),\dots,(q_{0,m},X_0)\}\xrightarrow[]{\ell_2(s)}\{(q_1,X_1),\dots,(q_n,X_n)\}$$
  is an arbitrary run of $\Obs(\CC_{A_1\to A_2}^{G,\Obs})$, where $s\in L(G)$. Then it holds that\\
  $\{X_1,\dots,X_n\}=\Mt_{A_1\leftarrow A_2}(G,\ell_2(s))$.
\end{proof}

By \autoref{thm7:High-OrderCSO}, the following \autoref{thm1:order2SEproperty} holds.

\begin{theorem}\label{thm1:order2SEproperty}
  An FSA $G$ satisfies an order-$2$ estimation-based
  property $\pred^2$ \eqref{eqn:order2predicate}
  with respect to agents $A_1$ and $A_2$ 
  if and only if for every reachable state $\{(q_1,X_1),\dots,(q_n,X_n)\}$ of the order-$2$ observer
  $\Obs_{A_1\leftarrow A_2}(G)$, $\{X_1,\dots,X_n\}\in \pred^2$.
\end{theorem}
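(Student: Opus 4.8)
The plan is to reduce the statement to \autoref{thm7:High-OrderCSO}, whose part~(iv) already identifies, for every run of the order-$2$ observer from its initial state, the ``second-component collection'' $\{X_1,\dots,X_n\}$ of the terminal state with the set $\Mt_{A_1\leftarrow A_2}(G,\alpha)$, where $\alpha$ is the label sequence read along that run. Combining this with \autoref{def:order2SEproperty} --- which says that $G$ satisfies $\pred_2$ iff $\{\Mt_{A_1\leftarrow A_2}(G,\alpha)\mid \alpha\in P_2(L(G))\}\subset\pred_2$ --- the only substantive bookkeeping left is to check that the label sequences labeling runs of $\Obs_{A_1\leftarrow A_2}(G)$ from $C_0$ to a reachable state of the form $\{(q_1,X_1),\dots,(q_n,X_n)\}$ are exactly the elements of $P_2(L(G))$.

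First I would record that $\Obs_{A_1\leftarrow A_2}(G)=\Obs(\CC_{A_1\to A_2}^{G,\Obs})$ is deterministic, so each $\alpha$ labels at most one run from $C_0$, reaching $\dt_{\obs}(C_0,\alpha)$. By the observer construction together with part~(ii) of \autoref{thm7:High-OrderCSO}, the language generated by $\CC_{A_1\to A_2}^{G,\Obs}$ has $P_2$-image equal to $P_2(L(G))$: indeed, in Step~2 of \autoref{def:order2observer}, collapsing each pair event $(e_1,e_2)$ to $e_1$ does not change the projected language, and swapping the labeling function to $P_2$ is precisely taking that projection; hence $\dt_{\obs}(C_0,\alpha)\neq\emptyset$ iff $\alpha\in P_2(L(G))$. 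Moreover, for $\alpha\in P_2(L(G))$ the set $\Mt_{A_1\leftarrow A_2}(G,\alpha)$ is nonempty, since it contains $A_1$'s true current-state estimate, which is consistent with $\dt_{\obs}(C_0,\alpha)$ being a nonempty state of the form described in parts~(i) and~(iii) of \autoref{thm7:High-OrderCSO}.

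For the ($\Leftarrow$) direction, assume $\{X_1,\dots,X_n\}\in\pred_2$ for every reachable state $\{(q_1,X_1),\dots,(q_n,X_n)\}$ of $\Obs_{A_1\leftarrow A_2}(G)$. Take any $\alpha\in P_2(L(G))$; then $Y:=\dt_{\obs}(C_0,\alpha)$ is a nonempty reachable state, say $Y=\{(q_1,X_1),\dots,(q_n,X_n)\}$, and part~(iv) of \autoref{thm7:High-OrderCSO} applied to the run $C_0\xrightarrow{\alpha}Y$ gives $\Mt_{A_1\leftarrow A_2}(G,\alpha)=\{X_1,\dots,X_n\}\in\pred_2$; hence $\{\Mt_{A_1\leftarrow A_2}(G,\alpha)\mid \alpha\in P_2(L(G))\}\subset\pred_2$ and $G$ satisfies $\pred_2$. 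For the ($\Rightarrow$) direction, assume $G$ satisfies $\pred_2$ and let $Y=\{(q_1,X_1),\dots,(q_n,X_n)\}$ be an arbitrary reachable state, reached by a run $C_0\xrightarrow{\alpha}Y$. By the second paragraph $\alpha\in P_2(L(G))$, and part~(iv) of \autoref{thm7:High-OrderCSO} yields $\{X_1,\dots,X_n\}=\Mt_{A_1\leftarrow A_2}(G,\alpha)\in\pred_2$ by hypothesis.

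The main obstacle is not the logical skeleton --- which is a routine transport along \autoref{thm7:High-OrderCSO} --- but the middle paragraph: verifying that the two syntactic modifications in Step~2 of \autoref{def:order2observer} really make the $P_2$-image of the language of $\CC_{A_1\to A_2}^{G,\Obs}$ equal to $P_2(L(G))$, so that the nonempty reachable states of its observer correspond exactly to the elements of $P_2(L(G))$. This is where one must be careful that the concurrent composition $\CC(G_{A_1},\Obs_{A_1})$ neither loses nor adds behaviour (the content underlying part~(ii) of \autoref{thm7:High-OrderCSO}) and that determinism of the observer, together with the fact that empty states carry no collection of the quantified form, makes the universally quantified phrase ``for every reachable state'' line up precisely with ``for every $\alpha\in P_2(L(G))$''.
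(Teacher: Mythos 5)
Your proposal is correct and follows essentially the same route as the paper: the paper derives \autoref{thm1:order2SEproperty} directly from \autoref{thm7:High-OrderCSO} (in particular part~(iv), combined with \autoref{def:order2SEproperty}), exactly as you do. The extra bookkeeping you supply --- that the nonempty reachable states of $\Obs_{A_1\leftarrow A_2}(G)$ correspond bijectively to the label sequences in $P_2(L(G))$ via determinism and part~(ii) of the lemma --- is the content the paper leaves implicit, and it is carried out correctly.
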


\begin{remark}
  \autoref{thm1:order2SEproperty} provides an algorithm for verifying the order-$2$ estimation-based property
  in doubly exponential time. 
\end{remark}

\begin{example}\label{exam1:High-OrderCSO}
  Consider FSA $G_{ \label{cou1}}$ as in \autoref{fig1:High-OrderCSO}. We consider $E_1^{ }=\{b,c,d\}$ and $E_2^{ }=\{a,b\}$. 
  We use \autoref{thm1:order2SEproperty} and follow the sketch shown in 
  \autoref{fig1:order2SEproperty} to verify the order-$2$ estimation-based property
  $\pred^2=\{\emptyset\not\in Y\subset 2^{\{0,1,2,3,4,5\}}| (\exists X\in Y)[|X|>1]\}\subset 2^{2^{\{0,1,2,3,4,5\}}}$
  of $G $ with respect to agents $A_1^{ }$, $A_2^{ }$. 
  The FSA $G $ with respect to agents $A_1^{ }$ and $A_2^{ }$ are denoted by
  $G^{ }_{A_1^{ }}$ and $G^{ }_{A_2^{ }}$, respectively.
  The observer $\Obs_{A_1^{ }}$ of $G^{ }_{A_1^{ }}$ is shown in \autoref{fig2:High-OrderCSO}.
  The concurrent composition $\CC(G_{A_1},\Obs_{A_1})$ is shown in 
  \autoref{fig3:High-OrderCSO}. The order-$2$ observer 
  $\Obs(\CC^{G ,\Obs}_{A_1^{ }\to A_2^{ }})=
  \Obs_{A_1^{ }\leftarrow A_2^{ }}(G )$
  is shown in \autoref{fig4:High-OrderCSO}.
  In \autoref{fig4:High-OrderCSO},
  there is a reachable state $\{(2,B)\}$ in which 
  $B$ is a singleton and there is no state of the form
  $(q,X)$ with $|X|>1$. By \autoref{thm1:order2SEproperty}, $G $ does not satisfy the order-$2$
  estimation-based property. 

  Directly by definition, for the run $0\xrightarrow[]{a}1\xrightarrow[]{b}2$,
  one has $\ell_2^{ }(ab)=\gamma\delta$, $(\ell_2^{ })^{-1}(\gamma\delta)=\{ab\}$, $\ell_1^{ }(ab)=\alpha$,
  $\Mt_1^{ }(G ,\alpha)=\{2\}$.
  Hence $A_2^{ }$ knows that $A_1^{ }$ uniquely determines the current state when observing $\gamma\delta$.
  Then we conclude that $G $ does not satisfy the order-$2$
  estimation-based property.
  \begin{figure}[!htbp]
  \centering
  \subcaptionbox{Automaton $G $ considered in \autoref{exam1:High-OrderCSO}, where $E_1^{ }=\{b,c,d\}$,
  {$\ell_1(b) = \ell_1(c) = \alpha$}, $\ell_1(d) = \beta$,
  $E_2^{ }=\{a,b\}$, {$\ell_2(a) = \gamma$}, $\ell_2(b) = \delta$.\label{fig1:High-OrderCSO}}{
  	  \begin{tikzpicture}[>=stealth',shorten >=1pt,auto,node distance=2.0 cm, scale = 1.0, transform shape,
	>=stealth,inner sep=2pt]

	\node[initial, initial where = left, state] (0) {$0$};
	\node[state] (1) [right of =0] {$1$};
	\node[state] (2) [right of =1] {$2$};
	\node[state] (3) [right of =2] {$3$};
	\node[state] (4) [above right of =3] {$4$};
	\node[state] (5) [below right of =3] {$5$};

	\path [->]
	(0) edge node [above, sloped] {$a$} (1)
	(1) edge node [above, sloped] {$b$} (2)
	(2) edge [bend left] node {$b$} (3)
	(3) edge [bend left] node {$c$} (2)
	(3) edge node [above, sloped] {$a$} (4)
	(3) edge node [above, sloped] {$a$} (5)
	(4) edge [loop right] node {$d$} (4)
	;

    \end{tikzpicture}
  }\vspace{0.5cm}

  \subcaptionbox{The observer $\Obs_{A_1^{ }}$ of automaton $G_{A_1}$. 
  \label{fig2:High-OrderCSO}}{
	\begin{tikzpicture}[>=stealth',shorten >=1pt,auto,node distance=2.5 cm, scale = 1.0, transform shape,
	>=stealth,inner sep=2pt]

	\node[initial, initial where = left, rectangular state] (01) {$\{0,1\}$};
	\node[rectangular state] (2) [right of =01] {$\{2\}$};
	\node[rectangular state] (345) [right of =2] {$\{3,4,5\}$};
	\node[rectangular state] (4) [right of =345] {$\{4\}$};

	\path [->]
	(01) edge node {$\alpha$} (2)
	(2) edge [bend left] node {$\alpha$} (345)
	(345) edge [bend left] node {$\alpha$} (2)
	(345) edge node {$\beta$} (4)
	(4) edge [loop right] node {$\beta$} (4)
	;

    \end{tikzpicture}
  }

  \subcaptionbox{The concurrent composition $\CC(G_{A_1},\Obs_{A_1})$,
	where $A=\{0,1\}$, $B=\{2\}$, $C=\{3,4,5\}$, $D=\{4\}$.\label{fig3:High-OrderCSO}}{
	\begin{tikzpicture}[>=stealth',shorten >=1pt,auto,node distance=2.5 cm, scale = 1.0, transform shape,
	>=stealth,inner sep=2pt]

	\node[initial, initial where = left, rectangular state] (0A) {$(0,A)$};
	\node[rectangular state] (1A) [below =1cm of 0A] {$(1,A)$};
	\node[rectangular state] (2B) [right of =1A] {$(2,B)$};
	\node[rectangular state] (3C) [right of =2B] {$(3,C)$};
	\node[rectangular state] (4C) [above right of =3C] {$(4,C)$};
	\node[rectangular state] (5C) [below right of =3C] {$(5,C)$};
	\node[rectangular state] (4D) [right of =4C] {$(4,D)$};

	\path [->]
	(0A) edge node {$(a,\ep)$} (1A)
	(1A) edge node {$(b,\alpha)$} (2B)
	(2B) edge [bend left] node {$(b,\alpha)$} (3C)
	(3C) edge [bend left] node {$(c,\alpha)$} (2B)
	(3C) edge node [sloped, above] {$(a,\ep)$} (4C)
	(3C) edge node [sloped, above] {$(a,\ep)$} (5C)
	(4C) edge node {$(d,\beta)$} (4D)
	(4D) edge [loop right] node {$(d,\beta)$} (4D)
	;

    \end{tikzpicture}
  }\vspace{0.5cm}

  \subcaptionbox{The order-$2$ observer 
	$\Obs_{A_1^{ }\leftarrow A_2^{ }}(G )$,
	where $A,B,C,D$ are as above.
 	\label{fig4:High-OrderCSO}}{
	\begin{tikzpicture}[>=stealth',shorten >=1pt,auto,node distance=2.5 cm, scale = 1.0, transform shape,
	>=stealth,inner sep=2pt]

	\node[initial, initial where = left, rectangular state] (0A) {$\{(0,A)\}$};
	\node[rectangular state] (1A) [below =1cm of 0A] {$\{(1,A)\}$};
	\node[rectangular state] (2B) [right of =1A] {$\{(2,B)\}$};
	\node[rectangular state] (2B3C) [right =1cm of 2B] {$\{(2,B),(3,C)\}$};
	\node[rectangular state] (4C5C4D) [right =1cm of 2B3C] {$\{(4,C),(5,C),(4,D)\}$};

	\path [->]
	(0A) edge node {$\gamma$} (1A)
	(1A) edge node {$\delta$} (2B)
	(2B) edge node {$\delta$} (2B3C)
	(2B3C) edge node {$\gamma$} (4C5C4D)
	(2B3C) edge [loop above] node {$\delta$} (2B3C)
	;

    \end{tikzpicture}
  }\vspace{0.5cm}

  \caption{illustrative automata in \autoref{exam1:High-OrderCSO}.}
  \end{figure}

\end{example}

\subsection{Special cases verifiable in exponential time}
\label{subsec:Order2SEproperty}

For several special cases, verification of an order-$2$ estimation-based
property can be done in exponential time.

\subsubsection{Special case 1 --- \texorpdfstring{$E_1\subset E_2$ and $\ell_1$ and $\ell_2$ are projections}{E1subsetE2}}

Consider $E_1\subset E_2$. Then agent $A_2$ knows more on $G$ than agent
$A_1$, and can know exactly $A_1$'s current-state estimate of $G$. Formally, 
in this case, for all $\alpha\in \ell_2(L(G))$,
\begin{subequations}
  \begin{align}
	\Mt_{A_1\leftarrow A_2}(G,\alpha) &= \{\Mt_1(G,\ell_1(\alpha))\},\\
	\Mt_2(G,\alpha) &\subset \Mt_1(G,\ell_1(\alpha)).
  \end{align}
\end{subequations}

This implies the following result. 

\begin{fact}\label{lem4:High-OrderOpacity}
  Assume $E_1\subset E_2$ and $\ell_1$ and $\ell_2$ are projections. Then each state of order-$2$ observer
  $\Obs_{A_1\leftarrow A_2}(G)$ is of the form $\{(q_1,X),\dots,
(q_n,X)\}$ 
with $q_1,\dots,q_n\in X\subset Q$, $\Obs_{A_1\leftarrow A_2}(G)$
and $\Obs_{A_2}$ have the same number of reachable states. For every run $Y_0\xrightarrow[]{\alpha} 
\{(q_1,X),\dots, (q_n,X)\}$ of $\Obs_{A_1\leftarrow A_2}(G)$
and every run $X_0\xrightarrow[]{\alpha} X'$ of $\Obs_{A_2}$, where $Y_0$ and 
$X_0$ are the initial states of the two observers, $X'\subset X$.
\end{fact}

\begin{remark}
  If $E_1\subset E_2$ and $\ell_1$ and $\ell_2$ are projections,
  then $\Obs_{A_1\leftarrow A_2}(G)$ can be computed in exponential time, and then the
  order-$2$ estimation-based property can be verified in exponential time. 
\end{remark}

\begin{example}\label{exam9:High-OrderCSO}
  Reconsider FSA $G $ studied in \autoref{exam1:High-OrderCSO} (in \autoref{fig1:High-OrderCSO}). We consider agents $A_1'$ and $A_2'$
  whose observable event sets are $E_1'=\{a\}$ and $E_2=\{a,b\}$, respectively,
  and whose labeling functions are projections.
  Then we have $E_1'\subset E_2$. The observer $\Obs_{A_1'}$, the concurrent 
  composition $\CC(G_{A_1'},\Obs_{A_1'})$,
  the order-$2$ observer $\Obs_{A_1'\leftarrow A_2'}(G )$, and the observer
  $\Obs_{A_2}$ are shown in 
  \autoref{fig30:High-OrderOpacity}, \autoref{fig31:High-OrderOpacity}, \autoref{fig32:High-OrderOpacity}, \autoref{fig33:High-OrderOpacity}
  respectively. \autoref{fig32:High-OrderOpacity} and \autoref{fig33:High-OrderOpacity} illustrate \autoref{lem4:High-OrderOpacity}.
  \begin{figure}[!htbp]
  \centering
  \subcaptionbox{The observer $\Obs_{A_1'}$ of automaton $G^{ }_{A_1'}$.\label{fig30:High-OrderOpacity}}{
  	  \begin{tikzpicture}[>=stealth',shorten >=1pt,auto,node distance=2.5 cm, scale = 1.0, transform shape,
	>=stealth,inner sep=2pt]

	\tikzstyle{emptynode}=[inner sep=0,outer sep=0]

	\node[initial, initial where = left, rectangular state] (0) {$\{0\}$};
	\node[rectangular state] (123) [right of =0] {$\{1,2,3\}$};
	\node[rectangular state] (45) [right of =123] {$\{4,5\}$};

	\node[emptynode] (E1) [left of = 0] {};
	\node[emptynode] (E2) [right of = 45] {};

	\path [->]
	(0) edge node {$a$} (123)
	(123) edge node {$a$} (45)
	;

    \end{tikzpicture}
  }\vspace{0.5cm}

  \subcaptionbox{The concurrent composition $\CC(G_{A_1'},\Obs_{A_1'})$.\label{fig31:High-OrderOpacity}}{
	\begin{tikzpicture}[>=stealth',shorten >=1pt,auto,node distance=3.0 cm, scale = 1.0, transform shape,
	>=stealth,inner sep=2pt]

	\node[initial, initial where = left, rectangular state] (0A) {$(0,\{0\})$};
    \node[rectangular state] (1A) [right of =0A] {$(1,\{1,2,3\})$};
	\node[rectangular state] (2B) [right of =1A] {$(2,\{1,2,3\})$};
	\node[rectangular state] (3C) [right of =2B] {$(3,\{1,2,3\})$};
	\node[rectangular state] (4C) [above right of =3C] {$(4,\{4,5\})$};
	\node[rectangular state] (5C) [below right of =3C] {$(5,\{4,5\})$};

	\path [->]
	(0A) edge node {$(a,a)$} (1A)
	(1A) edge node {$(b,\ep)$} (2B)
	(2B) edge [bend left] node {$(b,\ep)$} (3C)
	(3C) edge [bend left] node {$(c,\ep)$} (2B)
	(3C) edge node {$(a,a)$} (4C)
	(3C) edge node {$(a,a)$} (5C)
	(4C) edge [loop right] node {$(d,\ep)$} (4C)
	;

    \end{tikzpicture}
  }\vspace{0.5cm}

  \subcaptionbox{The order-$2$ observer $\Obs_{A_1'\leftarrow A_2'}(G )$.\label{fig32:High-OrderOpacity}}{
	\begin{tikzpicture}[>=stealth',shorten >=1pt,auto,node distance=3.8 cm, scale = 1.0, transform shape,
	>=stealth,inner sep=2pt]

	\node[initial, initial where = left, rectangular state] (0A) {$\{(0,\{0\})\}$};
	\node[rectangular state] (1A) [below =1cm of 0A] {$\{(1,\{1,2,3\})\}$};
	\node[rectangular state] (2B) [right of =1A] {$\{(2,\{1,2,3\})\}$};
	\node[rectangular state] (2B3C) [right of =2B] {$\begin{matrix}\{(2,\{1,2,3\}),\\(3,\{1,2,3\})\}\end{matrix}$};
	\node[rectangular state] (4C5C4D) [right of =2B3C] {$\begin{matrix}\{(4,\{4,5\}),\\(5,\{4,5\})\}\end{matrix}$};

	\path [->]
	(0A) edge node {$a$} (1A)
	(1A) edge node {$b$} (2B)
	(2B) edge node {$b$} (2B3C)
	(2B3C) edge node {$a$} (4C5C4D)
	(2B3C) edge [loop above] node {$b$} (2B3C)
	;

    \end{tikzpicture}

  }\vspace{0.5cm}

  \subcaptionbox{The observer $\Obs_{A_2'}$ of automaton $G_{A_2'}$.\label{fig33:High-OrderOpacity}}{
	\begin{tikzpicture}[>=stealth',shorten >=1pt,auto,node distance=2.5 cm, scale = 1.0, transform shape,
	>=stealth,inner sep=2pt]

	\node[initial, initial where = left, rectangular state] (0A) {$\{0\}$};
	\node[rectangular state] (1A) [right of =0A] {$\{1\}$};
	\node[rectangular state] (2B) [right of =1A] {$\{2\}$};
	\node[rectangular state] (2B3C) [right of =2B] {$\{2,3\}$};
	\node[rectangular state] (4C5C4D) [right of =2B3C] {$\{4,5\}$};

	\path [->]
	(0A) edge node {$a$} (1A)
	(1A) edge node {$b$} (2B)
	(2B) edge node {$b$} (2B3C)
	(2B3C) edge node {$a$} (4C5C4D)
	(2B3C) edge [loop above] node {$b$} (2B3C)
	;

    \end{tikzpicture}
  }\vspace{0.5cm}

  \caption{Illustrative automata in \autoref{exam9:High-OrderCSO}.}
	\end{figure}
\end{example}

\subsubsection{Special case 2 --- \texorpdfstring{$E_2\subset E_1$ and $\ell_1$ and $\ell_2$ are projections}{E2subsetE1}}

Consider $E_2\subset E_1$. In this case, although agent $A_2$ knows less on $G$ than agent
$A_1$, all inferences of $A_1$'s current-state estimate done by $A_2$ are contained
in $A_2$'s own current-state estimate. Formally, for all $\alpha\in \ell_2(L(G))$,
for all $s\in \ell_2^{-1}(\alpha)\cap L(G)$, one has
\[\Mt_1(G,\ell_1(s)) \subset \Mt_2(G,\alpha),\]
because $\ell_2(\ell_1^{-1}(\ell_1(s))) = \{\alpha\}$. This implies
the following result. 

\begin{fact}\label{lem3:High-OrderOpacity}
  Assume $E_2\subset E_1$ and $\ell_1$ and $\ell_2$ are projections. Then $\Obs_{A_1\leftarrow A_2}(G)$
  and $\Obs_{A_2}$ have the same number of reachable states.
  Consider a run $Y_0\xrightarrow[]{\alpha}Y_1$ in observer $\Obs_{A_1\leftarrow A_2}(G)$ and
  a run $X_0\xrightarrow[]{\alpha}X_1$ in observer $\Obs_{A_2}$, where 
  $Y_0$ and $X_0$ are the corresponding initial states, $\alpha\in (E_2)^*$.
  Then for every element $(q,X)$ of $Y_1$, $X\subset X_1$.
\end{fact}

\begin{remark}
  By \autoref{lem3:High-OrderOpacity}, order-$2$ observer $\Obs_{A_1\leftarrow A_2}(G)$ can be computed
  in exponential time, and then the order-$2$ estimation-based property
  can also be verified in exponential time.
\end{remark}

\begin{example}\label{exam10:High-OrderCSO}
  Reconsider FSA $G $ studied in \autoref{exam1:High-OrderCSO} (in \autoref{fig1:High-OrderCSO}). We consider agents $A_1''$ and $A_2'$
  whose observable event sets are $E_1''=\{a,b,d\}$ and $E_2=\{a,b\}$ and 
  whose labeling functions are projections, respectively. Then we have $E_2\subset E_1''$.
  The observer $\Obs_{A_1''}$, the concurrent 
  composition $\CC(G_{A_1''},\Obs_{A_1''})$,
  the order-$2$ observer $\Obs_{A_1''\leftarrow A_2'}(G )$, and the observer
  $\Obs_{A_2'}$ are shown in 
  \autoref{fig34:High-OrderOpacity}, \autoref{fig35:High-OrderOpacity}, \autoref{fig36:High-OrderOpacity}, \autoref{fig33:High-OrderOpacity},
  respectively. \autoref{fig36:High-OrderOpacity} and \autoref{fig33:High-OrderOpacity} illustrate \autoref{lem3:High-OrderOpacity}.
  \begin{figure}[!htbp]
  \centering
  \subcaptionbox{The observer $\Obs_{A_1''}$ of automaton $G_{A_1''}$.\label{fig34:High-OrderOpacity}}{
  	  \begin{tikzpicture}[>=stealth',shorten >=1pt,auto,node distance=2.8 cm, scale = 1.0, transform shape,
	>=stealth,inner sep=2pt]

	\node[initial, initial where = left, rectangular state] (0A) {$\{0\}$};
	\node[rectangular state] (1A) [right of =0A] {$\{1\}$};
	\node[rectangular state] (2B) [right of =1A] {$\{2\}$};
	\node[rectangular state] (2B3C) [right of =2B] {$\{2,3\}$};
	\node[rectangular state] (4C5C4D) [right of =2B3C] {$\{4,5\}$};
	\node[rectangular state] (4) [right of =4C5C4D] {$\{4\}$};

	\path [->]
	(0A) edge node {$a$} (1A)
	(1A) edge node {$b$} (2B)
	(2B) edge node {$b$} (2B3C)
	(2B3C) edge node {$a$} (4C5C4D)
	(2B3C) edge [loop above] node {$b$} (2B3C)
	(4C5C4D) edge node {$d$} (4)
	(4) edge [loop right] node {$d$} (4)
	;

    \end{tikzpicture}
  }\vspace{0.5cm}

  \subcaptionbox{The concurrent composition $\CC(G_{A_1''},\Obs_{A_1''})$.\label{fig35:High-OrderOpacity}}{
	\begin{tikzpicture}[>=stealth',shorten >=1pt,auto,node distance=3.0 cm, scale = 1.0, transform shape,
	>=stealth,inner sep=2pt]

	\tikzstyle{emptynode}=[inner sep=0,outer sep=0]

	\node[initial, initial where = left, rectangular state] (0A) {$(0,\{0\})$};
	\node[rectangular state] (1A) [below = 2.0cm of 0A] {$(1,\{1\})$};
	\node[rectangular state] (2B) [right of =1A] {$(2,\{2\})$};
	\node[rectangular state] (3C) [right of =2B] {$(3,\{2,3\})$};
	\node[rectangular state] (2C) [above =2.0cm of 3C] {$(2,\{2,3\})$};
	\node[rectangular state] (4C) [right of =2C] {$(4,\{4,5\})$};
	\node[emptynode] (empty1) [below =1cm of 3C] {};
	\node[rectangular state] (5C) [right of =empty1] {$(5,\{4,5\})$};
	\node[rectangular state] (4D) [right of =4C] {$(4,\{4\})$};

	\path [->]
	(0A) edge node {$(a,a)$} (1A)
	(1A) edge node {$(b,b)$} (2B)
	(2B) edge node {$(b,b)$} (3C)
	(3C) edge [bend left] node [sloped, above] {$(c,\ep)$} (2C)
	(2C) edge [bend left] node [sloped, above] {$(b,b)$} (3C)
	(3C) edge node [sloped, above] {$(a,a)$} (4C)
	(3C) edge node {$(a,a)$} (5C)
	(4C) edge node {$(d,d)$} (4D)
	(4D) edge [loop above] node {$(d,d)$} (4D)
	;

    \end{tikzpicture}
  }\vspace{0.5cm}

  \subcaptionbox{The order-$2$ observer $\Obs_{A_1''\leftarrow A_2'}(G)$.\label{fig36:High-OrderOpacity}}{
	\begin{tikzpicture}[>=stealth',shorten >=1pt,auto,node distance=3.0 cm, scale = 1.0, transform shape,
	>=stealth,inner sep=2pt]

	\node[initial, initial where = above, rectangular state] (0A) {$\{(0,\{0\})\}$};
	\node[rectangular state] (1A) [right of =0A] {$\{(1,\{1\})\}$};
	\node[rectangular state] (2B) [right of =1A] {$\{(2,\{2\})\}$};
	\node[rectangular state] (2B3C) [right of =2B] {$\begin{matrix}\{(2,\{2,3\}),\\(3,\{2,3\})\}\end{matrix}$};
	\node[rectangular state] (4C5C4D) [right of =2B3C] {$\begin{matrix}\{(4,\{4,5\}),\\(5,\{4,5\}),\\(4,\{4\})\}\end{matrix}$};

	\path [->]
	(0A) edge node {$a$} (1A)
	(1A) edge node {$b$} (2B)
	(2B) edge node {$b$} (2B3C)
	(2B3C) edge node {$a$} (4C5C4D)
	(2B3C) edge [loop above] node {$b$} (2B3C)
	;

    \end{tikzpicture}
  }\vspace{0.5cm}

  \caption{Illustrative automata in \autoref{exam10:High-OrderCSO}.}
	\end{figure}

\end{example}

\subsubsection{Special case 3 --- induced by a \texorpdfstring{$2$}{2}-bounded order-\texorpdfstring{$1$}{1} predicate}
\label{subsubsec:Order2Opacity} 

Consider $2$-bounded order-$1$ predicate
\begin{align}\label{eqn1:spec}
  \mathcal{B}=\{X_1,\dots,X_m\},
\end{align}
where $X_i\subset Q$, $1\le |X_i| \le 2$, $i=1,\dots,m$.
By $\mathcal{B}$, we define a special type of predicates:
\begin{subequations}\label{eqn20:High-OrderOpacity}
\begin{align}
  &\pred^2_{\mathcal{B}} \\
  := & \{\emptyset\notin Y\subset 2^Q | (\exists X\in Y)(\exists X'\in \mathcal{B})[X'\subset X]\}
  \subset 2^{2^Q}.
\end{align}
\end{subequations}

  In $\pred^2_{\mathcal{B}}$, every element (a subset of $2^Q$) has some element $X$ (a subset of $Q$) that contains some $X'$ (an element of $\mathcal{B}$) with cardinality between $1$ and $2$ as its subset.

In order to give an exponential-time verification algorithm, for this special class of predicates, one possibility is to change
the observer $\Obs_{A_1}$ in \autoref{fig1:order2SEproperty} to detector $\Det_{A_1}$.
\begin{enumerate}
  \item Compute the detector $\Det_{A_1}$ of $G_{A_1}$.
  \item Compute the concurrent composition $\CC(G_{A_1},\Det_{A_1})$ of LFSA $G_{A_1}$ and its detector $\Det_{A_1}$,
	replace each event $(e_1,e_2)$ by $e_1$, replace the labeling function 
	by $\ell_2$, and denote the modification of $\CC(G_{A_1},\Det_{A_1})$ by $\CC_{A_1\to A_2}^{G,\Det}$ which is an LFSA.
  \item Compute the observer $\Obs(\CC_{A_1\to A_2}^{G,\Det})$ of $\CC_{A_1\to A_2}^{G,\Det}$.
\end{enumerate}

See \autoref{fig2:order2SEproperty} for an illustration.

\begin{figure}[!htbp]
  \centering
  \begin{tikzpicture}[circuit logic IEC]
	\matrix[column sep=5mm, row sep=0.5cm]
	{
	  \node (SusrR) {$G_{A_1}$}; & & \node (DetusrR) {$\Det_{A_1}$};\\
	  & \node (CCusrR) {$\CC$}; & \\
	  & \node (P1to2R) {$\to \ell_2$}; & \\
	  & \node (ObsIntrR) {$\Obs$}; & \\
	};
	\draw (SusrR.south) -- ++(down:2.0mm) -| (CCusrR.north);
	\draw (DetusrR.south) -- ++(down:2mm) -| (CCusrR.north);
	\draw (CCusrR.south) -- (P1to2R);
	\draw (P1to2R) -- (ObsIntrR.north);
  \end{tikzpicture}
  \caption{Sketch of the $\EXPTIME$ verification structure for the special type of order-$2$ estimation-based
  property with respect to \eqref{eqn20:High-OrderOpacity}.}
  \label{fig2:order2SEproperty}
\end{figure}

The observer $\Obs(\CC_{A_1\to A_2}^{G,\Det})$ can be computed in exponential time.
Note a fundamental difference between $\CC_{A_1\to A_2}^{G,\Det}$ and
  $\CC_{A_1\to A_2}^{G,\Obs}$: in 
  $\CC_{A_1\to A_2}^{G,\Obs}$, for every reachable state $(q,X)$, if there is an observable transition
  starting at $q$ with event $a$ in $G_{A_1}$, then there must exist an observable transition
starting at $(q,X)$ with event $a$ in $\CC_{A_1\to A_2}^{G,\Obs}$. However, this 
may not hold for $\CC_{A_1\to A_2}^{G,\Det}$, see \autoref{fig7:High-OrderCSO}.

\begin{lemma}\label{thm8:High-OrderCSO}
  Consider an FSA $G$ as in \eqref{FSA}, agents $A_1$ and $A_2$ with observable event sets $E_1\subset E$
  and $E_2\subset E$ and labeling functions $\ell_{E_1}$ and $\ell_{E_2}$,
  and the observer $\Obs(\CC_{A_1\to A_2}^{G,\Det})$.
  Then each initial state of observer $\Obs(\CC_{A_1\to A_2}^{G,\Det})$ is of the
	  form $\{(q_{0,1},X_0),\dots,(q_{0,m},X_0)\}$, where 
	  $\{q_{0,1},\dots,q_{0,m}\}=X_0$, $X_0$ is the initial state of $\Det_{A_1}$.
\end{lemma}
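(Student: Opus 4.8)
The plan is to imitate the proof of the first item of \autoref{thm7:High-OrderCSO}, replacing the inner observer $\Obs_{A_1}$ by the inner detector $\Det_{A_1}$ throughout; the whole point is that these two automata agree in every feature that argument uses to describe the initial state. The two relevant facts are the following. First, by the definition of the detector, $\Det_{A_1}$ has a single initial state $q_{0\det}$ with $q_{0\det}=q_{0\obs}$; writing $X_0:=q_{0\det}=q_{0\obs}$, this $X_0$ is simultaneously the initial state of $\Det_{A_1}$ and of $\Obs_{A_1}$, and by \autoref{FA:def_obs} one has $X_0=\bigcup_{q_0\in Q_0}\bigcup_{s\in(E\setminus E_1)^{*}}\dt(q_0,s)$, so $Q_0\subseteq X_0$ and every state of $X_0$ is reachable from $Q_0$ along some word of events unobservable to $A_1$. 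Second, $\Det_{A_1}$, exactly like $\Obs_{A_1}$, has no unobservable events, so by \autoref{FA:def_CCa} in $\CC(G_{A_1},\Det_{A_1})$ every transition carrying an event $(e,\ep)$ with $e\in E\setminus E_1$ leaves the $\Det_{A_1}$-coordinate unchanged, precisely as the $\Obs_{A_1}$-coordinate stays fixed on the analogous transitions of $\CC(G_{A_1},\Obs_{A_1})$.

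I would then unwind the detector variant of \autoref{def:order2observer}. By \autoref{FA:def_CCa}, $\CC(G_{A_1},\Det_{A_1})$ has state set $Q\times Q_{\det}$ and set of initial states $Q_0\times\{X_0\}$; replacing each event $(e_1,e_2)$ by $e_1$ and the labeling function by $P_2$ to obtain $\CC_{A_1\to A_2}^{G,\Det}$ changes neither of these, while making exactly the events of $E\setminus E_2$ unobservable. Hence, by \autoref{FA:def_obs}, the initial state $C_0$ of $\Obs(\CC_{A_1\to A_2}^{G,\Det})$ is the $\ep$-closure of $Q_0\times\{X_0\}$ with respect to $E\setminus E_2$ inside $\CC_{A_1\to A_2}^{G,\Det}$. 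Using the two facts above, I would argue that along this closure the $\Det_{A_1}$-coordinate can never move away from $X_0$ and that the set of reachable first coordinates is forced to be exactly $X_0$, whence $C_0=\{(q,X_0):q\in X_0\}$; this is the asserted form $\{(q_{0,1},X_0),\dots,(q_{0,m},X_0)\}$ with $\{q_{0,1},\dots,q_{0,m}\}=X_0$.

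The step I expect to be the main obstacle is this last one — pinning down the $\ep$-closure — because the two observation masks need not be nested: an event unobservable to $A_2$ may be observable to $A_1$ (one lying in $E_1\setminus E_2$), in which case $\CC(G_{A_1},\Det_{A_1})$ would advance the second coordinate by $\dt_{\det}$, and conversely an event unobservable to $A_1$ need not be unobservable to $A_2$. I would dispose of it exactly as the corresponding point is handled for the observer in the proof of the first item of \autoref{thm7:High-OrderCSO}: the key is that $\Det_{A_1}$ and $\Obs_{A_1}$ have the same initial state $X_0$ and identical behaviour on events unobservable to $A_1$, so whatever forces the inner coordinate to stay at $X_0$ and the $G$-coordinate to sweep out exactly $X_0$ there applies verbatim with $\Obs_{A_1}$ replaced by $\Det_{A_1}$; in effect the initial state of $\Obs(\CC_{A_1\to A_2}^{G,\Det})$ coincides with that of $\Obs_{A_1\leftarrow A_2}(G)$, so the statement reduces to the first item of \autoref{thm7:High-OrderCSO}. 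Everything else is routine bookkeeping with \autoref{FA:def_CCa} and \autoref{FA:def_obs}.
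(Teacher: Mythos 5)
You correctly isolate the only nontrivial step---identifying the $\ep$-closure of $Q_0\times\{X_0\}$ in $\CC_{A_1\to A_2}^{G,\Det}$ with respect to the events unobservable to $A_2$---but you then discharge it by importing ``whatever forces the inner coordinate to stay at $X_0$ and the $G$-coordinate to sweep out exactly $X_0$'' from the proof of item (i) of \autoref{thm7:High-OrderCSO}. Nothing forces either of these things, and the paper contains no proof of either lemma to import. The closure is taken over $(E\setminus E_2)^*$, whereas $X_0$ is the reach of $Q_0$ over $(E\setminus E_1)^*$ and the inner coordinate is frozen only on events of $E\setminus E_1$; since the two masks need not be nested, both halves of the claim fail. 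Concretely: an event $e\in E_1\setminus E_2$ enabled at an initial state is silent for $A_2$ but advances the detector (or observer) coordinate, so the closure can contain pairs whose second component differs from $X_0$; and an event $e\in E_2\setminus E_1$ that is needed to reach some state of $X_0$ from $Q_0$ is visible to $A_2$, so the first components of the closure can be a proper subset of $X_0$. Your two preliminary facts (equality of the initial states of $\Det_{A_1}$ and $\Obs_{A_1}$, and preservation of the inner coordinate on $A_1$-unobservable events) are correct but say nothing about the events in $E_1\setminus E_2$ and $E_2\setminus E_1$ that cause the trouble, so the reduction to \autoref{thm7:High-OrderCSO} cannot be completed.

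The paper's own figures exhibit both failures. In \autoref{fig8:High-OrderCSO} the initial state of $\Obs(\CC_{A_1\to A_2}^{G,\Det})$ is $\{(0,A)\}$ with $A=X_0=\{0,1\}$: the state $1\in X_0$ is reached from $0$ only via $a\in E_2$, so $(1,A)$ is not in the closure and the set of first components is $\{0\}\subsetneq X_0$ (the same happens in \autoref{fig4:High-OrderCSO} for the observer version). In \autoref{fig20:High-OrderOpacity} the initial state is $\{(0,\{0,1\}),(2,\{2\})\}$: the event $c\in E_{\Usr}\setminus E_{\Intr}$ is silent for the intruder but moves the inner coordinate to $\{2\}\ne X_0$. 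So the statement you were asked to prove is false as written; what survives in general is only that the initial state contains $Q_0\times\{X_0\}$ and that each of its members $(q,X)$ satisfies $q\in X$ (as in item \eqref{item3:SEproperty} of \autoref{thm7:High-OrderCSO}), while the asserted form requires extra hypotheses relating $E_1$ and $E_2$ (compare \autoref{lem4:High-OrderOpacity} and \autoref{lem3:High-OrderOpacity}, which are stated precisely under such nesting assumptions). A correct treatment would either add such a hypothesis or weaken the conclusion accordingly; as it stands, the gap in your argument is exactly the point at which the claim breaks.
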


\begin{theorem}\label{thm2:order2SEproperty}
  An FSA $G$ satisfies the order-$2$ estimation-based
  property $\pred^2_{\mathcal{B}}$ \eqref{eqn20:High-OrderOpacity} with respect to agents $A_1$ and $A_2$, if and only if,
  for every reachable state  $\{(q_1,X_1),\dots,(q_n,X_n)\}$ of
  $\Obs(\CC_{A_1\to A_2}^{G,\Det})$, $\{X_1,\dots,X_n\}\in \pred^2_{\mathcal{B}}$.
\end{theorem}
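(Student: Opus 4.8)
The plan is to follow the proof of \autoref{thm1:order2SEproperty} line by line, replacing the observer $\Obs_{A_1}$ by the detector $\Det_{A_1}$ everywhere, and using \autoref{lem1:High-OrderOpacity} to bridge the detector and the current-state estimates. Write $\Reach_{\Det_{A_1}}(\beta)$ for the set of states of $\Det_{A_1}$ reachable from $q_{0\det}$ by a run with trace $\beta$. The first step is to establish the detector counterpart of \autoref{thm7:High-OrderCSO}. I first note that the event-sequence language of $\CC_{A_1\to A_2}^{G,\Det}$ equals $L(G)$: the inclusion into $L(G)$ is immediate from \autoref{FA:def_CCa} (the first component of any run of the composition is a run of $G$), while the reverse inclusion reduces to $P_1(L(G))\subseteq L(\Det_{A_1})$, which is exactly \autoref{lem1:High-OrderOpacity}. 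Unwinding \autoref{FA:def_CCa} and \autoref{FA:def_obs}, the reachable states of $\Obs(\CC_{A_1\to A_2}^{G,\Det})$ are then exactly the states reached from its initial state by the words $\alpha\in P_2(L(G))$, the state reached by $\alpha$ being
\[
  \{(q,X)\mid \exists\, s\in P_2^{-1}(\alpha)\cap L(G):\ q\in\dt(Q_0,s)\ \text{and}\ X\in\Reach_{\Det_{A_1}}(P_1(s))\}.
\]

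The second step relates $\Reach_{\Det_{A_1}}(\beta)$ to the observer state $\Mt_1(G,\beta)$ for $\beta\in P_1(L(G))$. A routine induction on $|\beta|$ using the transition rule of $\Det_{A_1}$ gives that every $X\in\Reach_{\Det_{A_1}}(\beta)$ satisfies $X\subset\Mt_1(G,\beta)$; conversely, \autoref{lem1:High-OrderOpacity} gives that every subset of $\Mt_1(G,\beta)$ of cardinality $\min(|\Mt_1(G,\beta)|,2)$ lies in $\Reach_{\Det_{A_1}}(\beta)$ (for $\beta=\ep$ the single state $q_{0\det}=\Mt_1(G,\ep)$ is reached, which is harmless below). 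The third step uses the hypothesis $1\le|X_i|\le2$ on the members of $T_{\Det}$: combining the two inclusions of step two, the assertion ``$X'\subset X$ for some $X\in\Reach_{\Det_{A_1}}(\beta)$ and some $X'\in T_{\Det}$'' is \emph{equivalent} to ``$X'\subset\Mt_1(G,\beta)$ for some $X'\in T_{\Det}$''; indeed, given the latter, $|X'|\le\min(|\Mt_1(G,\beta)|,2)$, so $X'$ extends to some $X''\subset\Mt_1(G,\beta)$ with $|X''|=\min(|\Mt_1(G,\beta)|,2)$, and this $X''\in\Reach_{\Det_{A_1}}(\beta)$ contains $X'$; the other direction is immediate from transitivity of $\subset$.

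Finally I assemble the equivalence. All detector states occurring in reachable states of $\CC_{A_1\to A_2}^{G,\Det}$ are nonempty, and $\emptyset\notin\Mt_{A_1\leftarrow A_2}(G,\alpha)$, so the clause ``$\emptyset\notin Y$'' in the definition of $\pred_{2T_{\Det}}$ is automatically met on both sides. By step one, the family of second components of the state of $\Obs(\CC_{A_1\to A_2}^{G,\Det})$ reached by $\alpha$ lies in $\pred_{2T_{\Det}}$ iff some $X\in\Reach_{\Det_{A_1}}(P_1(s))$, $s\in P_2^{-1}(\alpha)\cap L(G)$, contains a member of $T_{\Det}$; by step three this holds iff $\Mt_1(G,P_1(s))$ contains a member of $T_{\Det}$ for some such $s$; and since $\Mt_{A_1\leftarrow A_2}(G,\alpha)=\{\Mt_1(G,P_1(s))\mid s\in P_2^{-1}(\alpha)\cap L(G)\}$ by \eqref{eqn15:High-OrderOpacity}, this is exactly $\Mt_{A_1\leftarrow A_2}(G,\alpha)\in\pred_{2T_{\Det}}$. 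Ranging over all $\alpha\in P_2(L(G))$ yields the stated ``if and only if''.

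I expect the main obstacle to be the ``reverse'' half of step one --- showing that $\Obs(\CC_{A_1\to A_2}^{G,\Det})$ loses no behaviour. The concern is genuine: as noted just before \autoref{thm8:High-OrderCSO}, $\CC_{A_1\to A_2}^{G,\Det}$, unlike $\CC_{A_1\to A_2}^{G,\Obs}$, can drop outgoing transitions, so a priori some $\alpha\in P_2(L(G))$ might fail to label a run of $\Obs(\CC_{A_1\to A_2}^{G,\Det})$ and some maximal detector subset of $\Mt_1(G,\beta)$ might be unreachable. Both are ruled out by \autoref{lem1:High-OrderOpacity}, which lets one reconstruct, from end to head, a detector run with a prescribed maximal endpoint along any observer run to a nonempty set; applying it with the right bookkeeping (and disposing of the trivial $\beta=\ep$ case, where the detector state equals $\Mt_1(G,\ep)$) is the crux.
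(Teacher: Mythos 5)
Your proposal is correct and follows essentially the same route as the paper, whose proof is simply the one-line citation ``By \autoref{lem1:High-OrderOpacity}, \autoref{thm1:order2SEproperty}, and \autoref{thm8:High-OrderCSO}'': you use \autoref{lem1:High-OrderOpacity} in exactly the role the paper intends (to show that every maximal $\le 2$-element subset of $\Mt_1(G,\beta)$ is a reachable detector state, so no behaviour and no witness of $T_{\Det}$-containment is lost), and you combine it with the $\Mt_{A_1\leftarrow A_2}$ characterization underlying \autoref{thm1:order2SEproperty}. Your expansion just makes explicit the details the paper leaves implicit, including the correct handling of the initial detector state and of the cardinality bound $|X'|\le 2$ on members of $T_{\Det}$.
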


\begin{proof}
  It immediately follows from \autoref{lem1:High-OrderOpacity}, \autoref{thm1:order2SEproperty}, and \autoref{thm8:High-OrderCSO}.
\end{proof}

\autoref{thm2:order2SEproperty} provides an exponential-time algorithm for verifying this special type of order-$2$
estimation-based property.

\begin{example}\label{exam2:High-OrderCSO}
  Reconsider FSA $G $ studied in \autoref{exam1:High-OrderCSO} (in \autoref{fig1:High-OrderCSO}). 
  We use \autoref{thm2:order2SEproperty} and follow the sketch shown in 
  \autoref{fig2:order2SEproperty} to verify if $G $ satisfies the order-$2$ predicate
  $\pred^2=\{\emptyset\not\in Y\subset 2^{\{0,1,2,3,4,5\}}| (\exists X\in Y)[|X|>1]\}\subset 2^{2^{\{0,1,2,3,4,5\}}}$
  with respect to agents $A_1^{ }$ and $A_2^{ }$,
  where the corresponding $\mathcal{B}$ as in \eqref{eqn1:spec} is equal to 
  $\{X\subset Q||X|=2\}$. 
  The detector $\Det_{A_1^{ }}$ of $G^{ }_{A_1^{ }}$ is shown in \autoref{fig6:High-OrderCSO}.
  The concurrent composition $\CC(G_{A_1},\Det_{A_1})$ 
  is shown in \autoref{fig7:High-OrderCSO}. The observer $\Obs(\CC_{A_1^{ }\to A_2^{ }}^{G ,\Det})$
  is shown in \autoref{fig8:High-OrderCSO}.

\begin{figure}[!htbp]
  \centering
  \subcaptionbox{The detector $\Det_{A_1^{ }}$ of automaton $G^{ }_{A_1^{ }}$ (shown in \autoref{fig1:High-OrderCSO}).\label{fig6:High-OrderCSO}}{
  	\begin{tikzpicture}[>=stealth',shorten >=1pt,auto,node distance=2.5 cm, scale = 1.0, transform shape,
	>=stealth,inner sep=2pt]

	\tikzstyle{emptynode}=[inner sep=0,outer sep=0]

	\node[initial, initial where = left, rectangular state] (01) {$\{0,1\}$};
	\node[rectangular state] (2) [right of =01] {$\{2\}$};
	\node[rectangular state] (45) [right of =2] {$\{4,5\}$};
	\node[rectangular state] (34) [above = 1cm of 45] {$\{3,4\}$};
	\node[rectangular state] (35) [below = 1cm of 45] {$\{3,5\}$};
	\node[rectangular state] (4) [right of =34] {$\{4\}$};

	\node[emptynode] (E1) [left of = 01] {};
	\node[emptynode] (E2) [right of = 4] {};
	
	\path [->]
	(01) edge node {$\alpha$} (2)
	(2) edge node {$\alpha$} (34)
	(34) edge [bend right] node [above] {$\alpha$} (2)
	(34) edge node {$\beta$} (4)
	(4) edge [loop right] node {$\beta$} (4)

	(2) edge node {$\alpha$} (45)
	(45) edge node {$\beta$} (4)

	(2) edge node {$\alpha$} (35)
	(35) edge [bend left] node [below] {$\alpha$} (2)
	;
    \end{tikzpicture}
  }

  \subcaptionbox{The concurrent composition $\CC(G_{A_1},\Det_{A_1})$,
	where $A=\{0,1\}$, $B=\{2\}$, $C_1=\{3,4\}$, $C_2=\{4,5\}$, $C_3=\{3,5\}$, $D=\{4\}$. 
  Note that at state $(4,C_3)$ there is no transition with event $(d,\beta)$, although there is an observable transition $4\xrightarrow[]{d}4$ in $G_{A_1^{ }}^{ }$.\label{fig7:High-OrderCSO}}{
	\begin{tikzpicture}[>=stealth',shorten >=1pt,auto,node distance=3.1 cm, scale = 1.0, transform shape,
	>=stealth,inner sep=2pt]

	\node[initial, initial where = left, rectangular state] (0A) {$(0,A)$};
	\node[rectangular state] (1A) [below =1cm of 0A] {$(1,A)$};
	\node[rectangular state] (2B) [right of =1A] {$(2,B)$};
	\node[rectangular state] (3C2) [right of =2B] {$(3,C_2)$};
	\node[rectangular state] (3C1) [above of =3C2] {$(3,C_1)$};
	\node[rectangular state] (3C3) [below of =3C2] {$(3,C_3)$};
	\node[rectangular state] (5C1) [right of =3C1] {$(5,C_1)$};
	\node[rectangular state] (4C3) [right of =3C3] {$(4,C_3)$};
	\node[rectangular state] (4C2) [below =0.3cm of 5C1] {$(4,C_2)$};
	\node[rectangular state] (4C1) [above =0.3cm of 5C1] {$(4,C_1)$};
	\node[rectangular state] (5C2) [above =0.3cm of 4C3] {$(5,C_2)$};
	\node[rectangular state] (5C3) [below =0.3cm of 4C3] {$(5,C_3)$};
	\node[rectangular state] (4D) [right of =4C1] {$(4,D)$};
	
	\path [->]
	(0A) edge node {$(a,\ep)$} (1A)
	(1A) edge node {$(b,\alpha)$} (2B)
	(2B) edge node {$(b,\alpha)$} (3C2)
	(2B) edge node [sloped, above] {$(b,\alpha)$} (3C1)
	(2B) edge node [sloped, above] {$(b,\alpha)$} (3C3)

	(3C1) edge [bend right] node [sloped, above] {$(c,\alpha)$} (2B)
	(3C3) edge [bend left] node [sloped, above] {$(c,\alpha)$} (2B)
	(3C1) edge node [sloped, above] {$(a,\ep)$} (4C1)
	(3C1) edge node {$(a,\ep)$} (5C1)

	(3C2) edge node [sloped, above] {$(a,\ep)$} (4C2)
	(3C2) edge node [sloped, above] [below] {$(a,\ep)$} (5C2)

	(3C3) edge node {$(a,\ep)$} (4C3)
	(3C3) edge node [sloped, above] {$(a,\ep)$} (5C3)

	(4C1) edge node {$(d,\beta)$} (4D)
	(4C2) edge node [sloped, above] {$(d,\beta)$} (4D)

	(4D) edge [loop right] node {$(d,\beta)$} (4D)
	;

    \end{tikzpicture}
  }\vspace{0.5cm}

  \subcaptionbox{The observer $\Obs(\CC_{A_1^{ }\to A_2^{ }}^{G ,\Det})$.\label{fig8:High-OrderCSO}}{
  \begin{tikzpicture}[>=stealth',shorten >=1pt,auto,node distance=2.5 cm, scale = 1.0, transform shape,
	>=stealth,inner sep=2pt]

	\node[initial, initial where = left, rectangular state] (0A) {$\{(0,A)\}$};
	\node[rectangular state] (1A) [right of =0A] {$\{(1,A)\}$};
	\node[rectangular state] (2B) [right of =1A] {$\{(2,B)\}$};
	\node[rectangular state] (2B3C1-3) [right of =2B] {$\begin{matrix}\{(2,B),\\(3,C_1),\\(3,C_2),\\(3,C_3)\}\end{matrix}$};
	\node[rectangular state] (4D4C1-35C1-3) [right of =2B3C1-3] {$\begin{matrix}\{(4,D),\\(4,C_1),\\(5,C_1),\\(4,C_2),\\(5,C_2),\\(4,C_3),\\(5,C_3)\}\end{matrix}$};

	\path [->]
	(0A) edge node {$\gamma$} (1A)
	(1A) edge node {$\delta$} (2B)
	(2B) edge node {$\delta$} (2B3C1-3)
	(2B3C1-3) edge node {$\gamma$} (4D4C1-35C1-3)
	(2B3C1-3) edge [loop above] node {$\delta$} (2B3C1-3)
	;

    \end{tikzpicture}
  }\vspace{0.5cm}

 \caption{Illustrative automata in \autoref{exam2:High-OrderCSO}.}
  \end{figure}

  In \autoref{fig8:High-OrderCSO},
  there is a reachable state $\{(2,B)\}$ in which 
  $B$ is a singleton, and there is no state of the form
  $(q,X)$ with $|X|=2$. By \autoref{thm2:order2SEproperty}, $G $ does not satisfy the order-$2$
  estimation-based property, which is consistent with the result derived in \autoref{exam1:High-OrderCSO}.
\end{example}

\begin{remark} 
Based on the above argument, whether an FSA $G$ satisfies the order-$2$
estimation-based
property $2^{2^Q}\setminus\pred^2_{\mathcal{B}}$  with respect to agents $A_1$ and $A_2$
can also be verified in exponential time, where $\pred^2_{\mathcal{B}}$ is defined in
\eqref{eqn20:High-OrderOpacity}.
\end{remark}

\subsection{Order-\texorpdfstring{$2$}{2} current-state opacity}
\label{subsec:Order2Opacity}

In this subsection, we study a special type of order-$2$ estimation-based property --- order-$2$
current-state opacity.

\begin{figure}[!htbp]
  \centering
  \subcaptionbox{Order-$1$ opacity. Based on the observation $\alpha$ of $\Intr$ to $G$, $\Intr$ infers the set $Q'$ of all consistent states of $G$.\label{IllustrOrd1Opacity}}{
  	  \begin{tikzpicture}[>=stealth',shorten >=1pt,auto,node distance=6.0 cm, scale = 1.0, transform shape,
		  >=stealth,inner sep=5pt,
		  every text node part/.style={align=center}
	  ]

	\node[rectangular state] (plant) {$G$\\ (fictitious agent $\Usr$)};
	\node[rectangular state, right = 7cm of plant] (intruder) {intruder $\Intr$};

	\path [->]
	;

	\draw [->]
	($(intruder.west)+(0,0.1cm)$) -- ++(0,0) node [above, sloped, xshift=-96] {infer $G|\alpha=Q'$} -- ($(plant.east)+(0,0.1cm)$)
	;
	\draw [->]
	($(plant.east)+(0,-0.1cm)$) -- ++(0,0) node [below, sloped, xshift=100] {observation $\alpha$} -- ($(intruder.west)+(0,-0.1cm)$)
	;

    \end{tikzpicture}
  }\vspace{0.5cm}

  \subcaptionbox{Order-$2$ opacity. Given an event sequence $s\in L(G)$, based on the observation $\alpha=ell_{\Usr}(s)$ of $\Usr$ to $G$, $\Usr$ infers the set $Q'$ of all consistent states of $G$, and then based on the observation $\beta=\ell_{\Intr}(s)$ of $\Intr$ to $G$, $\Intr$ infers $Q'$ and obtains a subset of $2^Q$.\label{IllustrOrd2Opacity}}{
	\begin{tikzpicture}[>=stealth',shorten >=1pt,auto,node distance=6.0 cm, scale = 1.0, transform shape,
		  >=stealth,inner sep=5pt,
	every text node part/.style={align=center}
	  ]

	\node[rectangular state] (plant1) {$G$};
	\node[rectangular state, right = 4cm of plant1] (Usr1) {User $\Usr$};
	\node[rectangular state, right = 4cm of Usr1] (Intr1) {Intruder $\Intr$};

	\path [->]
	(Intr1) edge node [above, sloped] {infer $Q'|\beta\subset 2^Q$} (Usr1)
	;

	\draw [->] ($(plant1.east)+(0,-0.1cm)$) -- ++(0,0) node [below, sloped, xshift=50, yshift=2] {observation $\alpha$} -- ($(Usr1.west)+(0,-0.1cm)$);

	\draw [->] ($(Usr1.west)+(0,0.1cm)$) -- ++(0,0) node [above, sloped, xshift=-60, yshift=-2] {infer $G|\alpha=Q'$} -- ($(plant1.east)+(0,0.1cm)$);

	\draw [->] (plant1.south) -- ++(0,-0.5) node [below, xshift=130] {observation $\beta$} -| (Intr1.south)
	;

    \end{tikzpicture}
  }

	\caption{An illustration of the extension from order-$1$ opacity to order-$2$ opacity.}
	\label{fig49:High-OrderCSO} 
  \end{figure}

Let us first review current-state opacity studied in \autoref{subsec:CSO}
in a high level. A fictitious ``user'' (which may assume coincide with $G$) $\Usr$ (\romannumeral1) knows the structure
of an automaton $G$, also (\romannumeral2) knows the state $G$ is in at every instant, and 
(\romannumeral3) wants 
to determine if the fact that $G$ is currently in a secret state can be disclosed
to an ``intruder'' $\Intr$
who also knows
the structure of $G$ but can only see the occurrences of observable events of
$E_{\Intr}$.
See \autoref{IllustrOrd1Opacity} as an illustration.
If $G$ is sufficiently safe in that sense, that is, the fact is not disclosed
to $\Intr$, then $\Usr$ can 
operate on $G$. In other words, $\Usr$ knows that $\Intr$ cannot not detect
if $\Usr$ knows the fact, and then $\Usr$ will operate on $G$.
This can be regarded as order-$1$ estimation-based property. 
However, if $\Usr$ knows some knowledge
of $G$ (although less than before) but still can do (\romannumeral3), then $G$ can
also be regarded to be sufficiently safe. This can be formulated as \emph{order-$2$
state-based opacity}. See \autoref{IllustrOrd2Opacity} as an illustration.
In this more general case, we assume the user still satisfies
(\romannumeral1) but does not satisfy (\romannumeral2). Instead, we assume that
$\Usr$ can observe a subset $E_{\Usr}$ of events of $G$, so can do state estimation
according
to its observations to $G$. We also assume that $\Intr$ knows $E_{\Usr}$ although
cannot observe $E_{\Usr}\setminus E_{\Intr}$, so can infer what $\Usr$ observes 
according to $\Intr$'s own observations. It turns out that $\Intr$'s inference 
of $\Usr$'s state estimate of $G$ is a set of subsets of $G$. Hence in this more 
general case, the secrets corresponding to $\Intr$ are a set of subsets of states of $G$
instead of a subset $\QS$ of states of $G$ as in the order-$1$ case. We define 
\emph{order-$2$ secrets} as 
\begin{align}\label{eqn2':High-OrderOpacity}
  \QSordtwo\subset 2^Q.
\end{align}

A predicate is defined as
\begin{subequations}\label{eqn3':High-OrderOpacity}
\begin{align}
  &\pred^2(G,\QSordtwo)\\
  := & \{\emptyset\notin Y\subset 2^Q|Y\not\subset\QSordtwo\}\subset 2^{2^Q}.
\end{align}
\end{subequations}

For $\alpha\in \ell_{\Intr}(L(G))$, $\Intr$'s inference of $\Usr$'s current-state estimate of $G$
can be any $\Mt_{\Usr}(G,\ell_{\Usr}(s))$, where $s \in \ell_{\Intr}^{-1}(\alpha)\cap L(G)$, formulated as 
\begin{subequations}\label{eqn8:High-OrderOpacity}
\begin{align}
  & \Mt_{\Usr\leftarrow\Intr}(G,\alpha) \label{eqn8_1:High-OrderOpacity}\\
  := & \{\Mt_{\Usr}(G,\ell_{\Usr}(s))|s\in \ell_{\Intr}^{-1}(\alpha)\cap L(G)\}
  \subset 2^Q.\label{eqn8_2:High-OrderOpacity}
\end{align}
\end{subequations}
Note that \eqref{eqn8:High-OrderOpacity} actually coincides with \eqref{eqn15:High-OrderOpacity} if agents $A_1$
and $A_2$ are specified as the user $\Usr$ and the intruder $\Intr$, respectively.

Because $\Intr$ computes all possible strings $s\in L(G)$ with $\ell_{\Intr}(s)=\alpha$
which must contain the real generated string,
$\Mt_{\Usr\leftarrow\Intr}(G,\alpha)$ must contain the real current-state estimate
of $\Usr$.

\begin{definition}[Ord$2$CSO]\label{def5':High-OrderCSO} 
  An FSA $G$ is called \emph{order-$2$ current-state opaque with respect to $\Usr$, $\Intr$, and $\QSordtwo$} if
\begin{subequations}\label{eqn4:High-OrderOpacity}
  \begin{align}
	&\{\Mt_{\Usr\leftarrow\Intr}(G,\alpha)|\alpha\in \ell_{\Intr}(L(G))\} \\
	\subset & \pred^2(G,\QSordtwo).
  \end{align}
\end{subequations}
\end{definition}


\autoref{def5':High-OrderCSO} means that if FSA $G$ is order-$2$ current-state
opaque with respect to $\ell_{\Usr}$, $\ell_{\Intr}$, and $\QSordtwo$, then corresponding to every 
observation 
$\alpha\in \ell_{\Intr} (L(G))$ of intruder $\Intr$ to $G$, the inference $
\Mt_{\Usr\leftarrow\Intr}(G,\alpha)$ of the current-state estimate of
user $\Usr$ by $\Intr$ belongs to the predicate $\pred^2(G,\QSordtwo)$.

\begin{remark}\label{rem1:High-OrderOpacity}
  By definition, one sees that the order-$2$ current-state opacity of FSA $G$ with respect to $\Usr$, $\Intr$, and
  $\QSordtwo$ is a special case of the order-$2$ estimation-based property
  $\pred^2$ of $G$ with respect to agents $A_1$ and $A_2$ as in \autoref{def:order2SEproperty} because
  $\Usr$ and $\Intr$ can be regarded as agents $A_1$ and $A_2$, respectively. For this special case, the order-$2$
  estimation-based property is used to describe a practical scenario.
\end{remark}

In the Appendix \ref{appendix}, we compare our approach with \cite{Cui2022YouDontKnowWhatIKnow} in which
a special subclass of order-$2$ current-state opacity was investigated.


\section{Order-\texorpdfstring{$n$}{n} estimation-based problems}
\label{sec:Order3Property}

In this section, we formulate order-$n$ estimation-based problems.
Recall that an order-$1$ (estimation-based) property characterizes a single agent $A$'s inference of
the current state of an FSA $G$ based on its observation to $G$. Also recall that an order-$2$ property
involves in two ordered agents $A_1$ and $A_2$, where $A_1$ infers the current state of $G$ based on its
observation to $G$, $A_2$ infers $A_1$'s current-state estimate of $G$ based $A_2$'s own observation to $G$ 
and $A_1$'s observable events (do not forget that $A_2$ knows $A_1$'s observable events but can 
only observe
its own observable events). In this section, we extend this framework to $n$ ordered agents with $n\ge3$.
Particularly, when $n=3$, the third agent $A_3$ infers $A_2$'s inference of $A_1$'s current-state estimate 
of $G$ based on $A_3$'s observation to $G$ and $A_1$ and $A_2$'s observable events.
The general framework is formalized as follows.

Consider an 
FSA $G$ as in \eqref{FSA} and agents $A_i$ with observable event sets $E_i\subset E$ and labeling functions
$\ell_i:E_i\to\Sig_i, E\setminus E_i\to\{\ep\}$, where $\Sig_i$ are alphabets,
$i\in\llb 1,n\rrb$. Assume all agents know the structure of $G$.
Assume $A_i$ knows $E_{k_i}$ but cannot
observe events of $E_{k_i}\setminus E_i$, $i\in\llb 2,n\rrb$, $k_i\in \llb 1,i-1\rrb$.
We characterize what $A_n$ knows
about what $A_{n-1}$ knows about \dots what $A_2$ knows about $A_1$'s state estimate
of $G$.

For state set $Q$, denote $\Pow(Q)=\Pow_1(Q)=2^Q$. It is recursively extended as follows:
for $n\in \Z_+$, $\Pow_{n+1}(Q):=\Pow(\Pow_{n}(Q))$. For example, $\Pow_2(Q)=2^{2^Q}$.

\subsection{The general framework}
Denote
\begin{subequations}\label{eqn38:High-OrderOpacity} 
  \begin{align}
	\ell_{E_i} &=: \ell_i, & \Obs_{\ell_{E_i}}(G) &=: \Obs_i(G),\\
	\Det_{\ell_{E_i}}(G) &=: \Det_i(G), & \Mt_{\ell_{E_i}}(G,\alpha) &=: \Mt_i(G,\alpha),
  \end{align}
\end{subequations}
for short, $i\in\llb1,n\rrb$.

Given a label sequence $\alpha\in \ell_n(L(G))$ observed by agent $A_n$, a consistent 
event sequence for $A_n$ can be any $s_{n}\in \ell_n^{-1}(\alpha)\cap L(G)$, and the label 
sequence observed by agent $A_{n-1}$ can be $\ell_{n-1}(s_{n})$ for any such $s_{n}$; a 
consistent event sequence for $A_{n-1}$ can be any $s_{n-1}\in \ell_{n-2}^{-1}(\ell_{n-1}(s_{n}))\cap L(G)$
for any such $s_{n}$, and the label sequence 
observed by agent $A_{n-2}$ can be $\ell_{n-2}(s_{n-1})$ for any such $s_{n-1}$; \dots; a consistent 
event sequence for $A_2$ can be any $s_2\in \ell_1^{-1}(\ell_2(s_3))\cap L(G)$ for any such $s_3$, and the label 
sequence observed by agent $A_1$ can be $\ell_1(s_2)$ for any such $s_2$, and the current-state
estimate of $A_1$ can be $\Mt_1(G,\ell_1(s_2))$ for any such $s_2$.
See \autoref{fig37:High-OrderOpacity} and \autoref{FA:fig48} as illustrations.
Based on the label sequence
$\alpha\in \ell_n(L(G))$ observed by agent $A_n$,
the order-$n$ current-state estimate of $G$ is formulated as
\begin{subequations}
  \begin{align*}
	&\Mt_{A_1\leftarrow A_2\leftarrow \cdots \leftarrow A_n} (G,\alpha) \\
	:= &
	\overbrace{\{\{\dots\{}^{n-1} \Mt_1(G,\ell_1(s_2)) | 
		s_2 \in \ell_1^{-1}(\ell_2(s_3))\cap L(G) \} | \\
		&\qquad\qquad\qquad\qquad\quad\ \ \dots \\
	    &\qquad\qquad\qquad\qquad\quad\ \ s_{n-1}\in \ell_{n-2}^{-1}(\ell_{n-1}(s_{n}))\\
	    &\qquad\qquad\qquad\qquad\quad\ \ \cap L(G) \} | \\
   	    &\qquad\qquad\qquad\qquad\quad\ \  s_{n}\in \ell_n^{-1}(\alpha)\cap L(G) \}\\
		\subset & \Pow_{n-1}(Q).
  \end{align*}
\end{subequations}

\begin{figure}[!htbp]
  \centering
	\begin{tikzpicture}[>=stealth',shorten >=1pt,auto,node distance=3.5 cm, scale = 1.0, transform shape,
	>=stealth,inner sep=2pt]

	\tikzstyle{emptynode}=[rectangle, draw, inner sep=0,outer sep=0, minimum size = 1cm]
	\tikzstyle{emptystate}=[inner sep=0,outer sep=0]

	\node[emptynode] (alpha) {$\alpha\in \ell_n(L(G))$};
	\node[emptynode] (An) [right of = alpha] {$A_n$};
	\node[emptynode] (psn-1) [above = 1cm of alpha] {$\ell_{n-1}(s_{n})$};
	\node[emptynode] (An-1) [right of = psn-1] {$A_{n-1}$};
	\node[emptynode] (sn-1) [left of = psn-1] {$s_{n}\in \ell_n^{-1}(\alpha)\cap L(G)$};
	\node[emptynode] (s2) [above = 1cm of sn-1] {$s_{n-1}\in \ell_{n-2}^{-1}(\ell_{n-1}(s_{n}))\cap L(G)$};
	\node[emptystate] (s2') [above = 1cm of s2] {$\vdots$};
	\node[emptystate] (empty1) [above = 1cm of psn-1] {$\vdots$};
	\node[emptynode] (ps2) [above = 1cm of empty1] {$\ell_2(s_3)$};
	\node[emptynode] (A2) [right of = ps2] {$A_2$};
	\node[emptystate] (empty2) [above = 1cm of An-1] {$\vdots$};
	\node[emptynode] (s1) [above = 1cm of s2'] {$s_2\in \ell_1^{-1}(\ell_2(s_3))\cap L(G)$};
	\node[emptynode] (ps1) [right of = s1] {$\ell_1(s_2)$};
	\node[emptynode] (A1) [right of = ps1] {$A_1$};
	\node[emptynode] (MA1) [above = 1cm of s1] {$\Mt_1(G,\ell_1(s_2))$};

	\path [->]
	(alpha) edge (sn-1)
	(sn-1) edge (psn-1)
	(psn-1) edge (s2)
	(s1) edge (ps1)
	(ps1) edge (MA1)
	(ps2) edge (s1)
	;

    \end{tikzpicture}
	\caption{Illustration of order-$n$ current-state estimate $\Mt_{A_1\leftarrow A_2\leftarrow \cdots \leftarrow A_n} (G,\alpha)$ of $G$ with respect to $\alpha\in \ell_n(L(G))$.}
    \label{fig37:High-OrderOpacity}
\end{figure}

\begin{figure}[!htbp]
  \centering
  	\begin{tikzpicture}[>=stealth',shorten >=1pt,auto,node distance=7.0 cm, scale = 1.0, transform shape,
		  >=stealth,inner sep=5pt,
	every text node part/.style={align=center}
	  ]

	\tikzstyle{emptynode}=[inner sep=0,outer sep=0]

	\node[rectangular state] (plant) {$G$};
	\node[rectangular state, right = 2cm of plant] (A1) {$A_1$};
	\node[rectangular state, right = 2cm of A1] (A2) {$A_2$};
	\node[emptynode, right = 1cm of A2] (Adot) {$\cdots$};
	\node[rectangular state, right = 3.2cm of Adot] (An) {$A_n$};

	\node[emptynode, below = 1.0cm of A1] (E1) {};
	\node[emptynode, right = 1.3cm of E1] (E2) {$\vdots$};

	\path [->]
	(A2) edge node [above, sloped] {infer $Q^1|\alpha_2$\\$=Q^2\subset 2^Q$} (A1)
	(Adot) edge (A2)
	(An) edge node [above, sloped] {infer $Q^{n-1}|\alpha_n$\\$=Q^{n}\subset \Pow_{n-1}(Q)$} (Adot)
	;

	\draw [->] ($(A1.west)+(0,0.1cm)$) -- ++(0,0) node [above, sloped, xshift=-30] {infer $G|\alpha_1$\\$=Q^1\subset Q$} -- ($(plant.east)+(0,0.1cm)$);

	\draw [->] ($(plant.east)+(0,-0.1cm)$) -- ++(0,0) node [below, sloped, xshift=30] {$\alpha_1$} -- ($(A1.west)+(0,-0.1cm)$);

	\draw [->] (plant.south) -- ++(0,-0.5) node [below, xshift=80] {$\alpha_2$} -| (A2.south)
	;
	\draw [->] (plant.south) -- ++(0,-2.0) node [below, xshift=160] {$\alpha_n$} -| (An.south)
	;

    \end{tikzpicture}
	\caption{An illustration of order-$n$ estimation-based property.}
	\label{FA:fig48}
\end{figure}

As a particular case, \eqref{eqn15:High-OrderOpacity} is the order-$2$ current-state estimate of $G$ 
with respect to $\alpha\in \ell_2(L(G))$. Similarly, the order-$3$ current-state
estimate of $G$ with respect to $\alpha\in \ell_3(L(G))$ is 
\begin{equation}\label{eqn32:High-OrderOpacity}
  \begin{split}
	& \Mt_{A_1\leftarrow A_2\leftarrow A_3} (G,\alpha) \\
	= &
	  \{\{ \Mt_1(G,\ell_1(s_2)) | 
		s_2 \in \ell_1^{-1}(\ell_2(s_3))\cap L(G) \} | \\
		&\qquad\qquad\qquad\quad\ \, s_{3}\in \ell_3^{-1}(\alpha)\cap L(G) \}\\
	\subset & 2^{2^Q}.
  \end{split}
\end{equation}

Define a \emph{predicate of order-$n$ as} 
\begin{align}\label{eqn:orderNpredicate}
  \pred^n \subset \Pow_n(Q).
\end{align}
Then an order-$n$ estimation-based property is defined as follows.
\begin{definition}\label{def:orderNSEproperty}
  An FSA $G$ satisfies the \emph{order-$n$ estimation-based
  property $\pred^n$~\eqref{eqn:orderNpredicate} with respect to agents $A_1,\dots,A_n$} if 
  \begin{equation}\label{eqn33:High-OrderOpacity} 
	\{\Mt_{A_1\leftarrow A_2\leftarrow \cdots \leftarrow A_n}(G,\alpha)|\alpha\in \ell_n(L(G))\}\subset\pred^n.
  \end{equation}
\end{definition}

The next is to define a notion of \emph{order-$n$ observer} to verify \autoref{def:orderNSEproperty}.
Apparently, the classical observer $\Obs_{A_1}$ as in \autoref{FA:def_obs} is the order-$1$ observer,
$\Obs(\CC_{A_1\to A_2}^{G,\Obs})$ as in \autoref{def:order2observer}, which is the observer of 
concurrent composition $\CC(G_{A_1},\Obs_{A_1})$, is the order-$2$ observer. 

The following sequence of concurrent compositions provide a foundation for the order-$n$ observer to be defined.

\begin{enumerate}
  \item[\mylabel{item9:SEproperty}{(1)}]
	Define  $\CC_{A_1\to A_2}^{G,\Obs}$ as before.
  \item[\mylabel{item10:SEproperty}{(2)}]
	Compute concurrent composition $\CC(G_{A_2},\Obs(\CC_{A_1\to A_2}^{G,\Obs}))$, 
	replace each event $(e_1,e_2)$ by $e_1$, replace the labeling function of $\CC(G_{A_2},\Obs(\CC_{A_1\to A_2}^{G,\Obs}))$
	by $\ell_3$, and denote the modification of $\CC(G_{A_2},\Obs(\CC_{A_1\to A_2}^{G,\Obs}))$ by
	$\CC_{A_1\to A_2\to A_3}^{G,\Obs(G,\Obs)}$.
  \item [] $\vdots$
  \item[\mylabel{item11:SEproperty}{(n-1)}]
	Compute concurrent composition $\CC(G_{A_{n-1}},\Obs(\CC_{A_1\to A_2 \to \cdots A_{n-2} \to
	A_{n-1}}^{G,\Obs(G,\Obs(\dots (G,\Obs\overbrace{\scriptstyle)\dots)}^{n-3}}))$, replace each event $(e_1,e_2)$ by $e_1$, replace the 
	labeling function of $\CC(G_{A_{n-1}},\Obs(\CC_{A_1\to A_2 \to \cdots A_{n-2} \to A_{n-1}}^{G,\Obs (G,
	  \Obs(\dots (G,\Obs\overbrace{\scriptstyle)\dots)}^{n-3}}))$ by $\ell_n$, and denote the modification of $\CC(G_{A_{n-1}},\Obs(\CC_{A_1\to
	  A_2 \to \cdots A_{n-2} \to A_{n-1}}^{G,\Obs (G,\Obs(\dots (G,\Obs\overbrace{\scriptstyle)\dots)}^{n-3}}))$ by\\
	  $\CC_{A_1\to A_2 \to \cdots \to A_{n-1}\to A_n}^{G,\Obs(G,\Obs(\dots(G,\Obs\overbrace{\scriptstyle)\dots)}^{n-2}}$.
  \item [] $\vdots$
\end{enumerate}

Then define the order-$n$ observer as follows.

\begin{definition}\label{def:orderNobserver}
   Consider an FSA $G$ as in \eqref{FSA} and agents $A_i$ with observable event sets $E_i\subset E$
   and labeling functions $\ell_i:E_i\to\Sig_i, E\setminus E_i\to\{\ep\}$, where $\Sig_i$ are alphabets,
   $i\in\llb 1,n\rrb$. The \emph{order-$1$ observer} is defined as $\Obs_{A_1}$.
   For each $n>1$, the \emph{order-$n$ observer} $\Obs_{A_1\leftarrow A_2\leftarrow \cdots \leftarrow A_{n}}
   (G)$ is defined as $\Obs(\CC_{A_1\to A_2 \to \cdots \to A_{n-1}\to A_n}^{G,\Obs(G,\Obs(\dots(G,\Obs\overbrace
   {\scriptstyle)\dots)}^{n-2}})$.
\end{definition}

The order-$n$ observer $\Obs_{A_1\leftarrow A_2\leftarrow \cdots \leftarrow A_{n}} (G)$
can be computed in $n$-$\EXPTIME$.

Similar to \autoref{thm7:High-OrderCSO}, the following result holds and can be proven by mathematical induction
easily. 

\begin{theorem}\label{thm9:High-OrderCSO}   
  Consider an FSA $G$ as in \eqref{FSA}, agents $A_i$ with observable event sets $E_i\subset E$
  and labeling functions $\ell_i:E_i\to\Sig_i, E\setminus E_i\to\{\ep\}$, where $\Sig_i$ are alphabets,
  $i\in\llb 1,n\rrb$, and the order-$n$ observer $\Obs_{A_1\leftarrow A_2\leftarrow \cdots \leftarrow A_n} (G)$.
  \begin{enumerate}[(i)] 
	\item\label{item7:SEproperty}
	  $L(G)=L(\CC(G_{A_n},\Obs_{A_1\leftarrow A_2\leftarrow \cdots \leftarrow A_n} (G)))$.
	\item\label{item8:SEproperty}
	  For every $\alpha\in \ell_n(L(G))$ and every run $\mathcal{X}_0\xrightarrow[]{\alpha} \mathcal{X}$
	  of order-$n$ observer $\Obs_{A_1\leftarrow A_2\leftarrow \cdots \leftarrow A_n} (G)$, where
	  $\mathcal{X}_0$ is the initial state, 
	  for $\mathcal X$, replace each ordered pair $(\#,\$)$ by $\$$, where
	  $\#$ is some state of $G$, denote the most updated $\mathcal{X}$ by
	  $\mathcal{\bar X}$, then $\mathcal{\bar X} = \Mt_{A_1\leftarrow A_2\leftarrow \cdots \leftarrow A_n}(G,\alpha)\in \Pow_n(Q)$. 
  \end{enumerate}
\end{theorem}

How to compute $\mathcal{\bar X}$ is illustrated in \autoref{exam11:High-OrderCSO}.

Similar to \autoref{thm1:order2SEproperty}, the following \autoref{thm1:orderNSEproperty} holds.

\begin{theorem}\label{thm1:orderNSEproperty}
  An FSA $G$ satisfies the order-$n$ estimation-based
  property $\pred^n$ \eqref{eqn:orderNpredicate}
  with respect to agents $A_1,\dots,A_n$, if and only if, for every reachable state $\mathcal{X}$
  of the order-$n$ observer $\Obs_{A_1\leftarrow A_2\leftarrow \cdots \leftarrow A_n} (G)$,
  $\mathcal{\bar X}\in \pred^n$, where $\mathcal{\bar X}$ is as in \autoref{thm9:High-OrderCSO}.
\end{theorem}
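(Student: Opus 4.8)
The plan is to derive \autoref{thm1:orderNSEproperty} as an immediate consequence of \autoref{thm9:High-OrderCSO}, in precise analogy with the derivation of \autoref{thm1:order2SEproperty} from \autoref{thm7:High-OrderCSO}. First I would unfold \autoref{def:orderNSEproperty}: $G$ satisfies the order-$n$ state-estimation-based property $\pred_n$ with respect to $A_1,\dots,A_n$ if and only if $\{\Mt_{A_1\leftarrow A_2\leftarrow\cdots\leftarrow A_n}(G,\alpha)\mid\alpha\in P_n(L(G))\}\subseteq\pred_n$. Hence it suffices to prove the set identity
\[
\{\Mt_{A_1\leftarrow A_2\leftarrow\cdots\leftarrow A_n}(G,\alpha)\mid\alpha\in P_n(L(G))\}
=\{\bar{\mathcal{X}}\mid\mathcal{X}\text{ a reachable state of }\Obs_{A_1\leftarrow A_2\leftarrow\cdots\leftarrow A_n}(G)\},
\]
with $\bar{\mathcal{X}}$ as in \autoref{thm9:High-OrderCSO}; given this identity, the claimed ``if and only if'' follows at once from the meaning of set inclusion.

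For the inclusion ``$\subseteq$'' I would fix $\alpha\in P_n(L(G))$ and use part~(i) of \autoref{thm9:High-OrderCSO}, namely $L(G)=L(\CC(G_{A_n},\Obs_{A_1\leftarrow A_2\leftarrow\cdots\leftarrow A_n}(G)))$, to conclude that $\alpha$ labels a run $\mathcal{X}_0\xrightarrow[]{\alpha}\mathcal{X}$ of the order-$n$ observer from its initial state $\mathcal{X}_0$ whose target $\mathcal{X}$ is a (nonempty) reachable state; since an observer is a deterministic automaton by \autoref{FA:def_obs}, this run and its target are unique. Part~(ii) of \autoref{thm9:High-OrderCSO} then gives $\bar{\mathcal{X}}=\Mt_{A_1\leftarrow A_2\leftarrow\cdots\leftarrow A_n}(G,\alpha)$, which lies in the right-hand set. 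For the reverse inclusion ``$\supseteq$'', every reachable state $\mathcal{X}$ on which the map $\mathcal{X}\mapsto\bar{\mathcal{X}}$ acts is the target of some run $\mathcal{X}_0\xrightarrow[]{\alpha}\mathcal{X}$, and the same language identity in part~(i) forces $\alpha\in P_n(L(G))$; part~(ii) again yields $\bar{\mathcal{X}}=\Mt_{A_1\leftarrow A_2\leftarrow\cdots\leftarrow A_n}(G,\alpha)$, so $\bar{\mathcal{X}}$ lies in the left-hand set. This proves the identity, and with it the theorem.

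Thus the theorem itself reduces to a two-line corollary; the substantive work is hidden in \autoref{thm9:High-OrderCSO}, and proving that lemma by induction on $n$ is the main obstacle. The base cases $n=1$ (the classical observer, \autoref{FA:def_obs}) and $n=2$ (which is in essence \autoref{thm7:High-OrderCSO}) are already in place, so the crux is the inductive step, where two points deserve care. First, one must check that the event-renaming and labeling-function replacement prescribed in the iterated concurrent-composition construction preceding \autoref{def:orderNobserver} are carried through so that the LFSA whose observer is $\Obs_{A_1\leftarrow A_2\leftarrow\cdots\leftarrow A_n}(G)$ retains first-component trace language $L(G)$; this is exactly what makes both the language identity $L(G)=L(\CC(G_{A_n},\Obs_{A_1\leftarrow A_2\leftarrow\cdots\leftarrow A_n}(G)))$ and the matching ``reachable state $=$ state on a run labeled by some $\alpha\in P_n(L(G))$'' go through, and in particular rules out the empty-set sink of the observer, which corresponds to no behavior of $G$. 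Second, and more delicate, one must line up the $(n-1)$-fold nested set-builder of \eqref{eqn22:High-OrderOpacity} with the operation ``replace every ordered pair $(\#,\$)$ by $\$$'' applied to a state of $\Obs_{A_1\leftarrow A_2\leftarrow\cdots\leftarrow A_n}(G)$, keeping straight, across all $n-1$ nesting levels, which coordinate is a state of $G$ and which is a state of a lower-order observer. Once this bookkeeping is done carefully, the induction closes and \autoref{thm1:orderNSEproperty} follows.
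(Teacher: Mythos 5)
Your proposal is correct and follows essentially the same route as the paper, which gives no explicit proof beyond asserting that \autoref{thm1:orderNSEproperty} holds ``similar to'' \autoref{thm1:order2SEproperty}, itself justified only by appeal to the corresponding lemma; you derive the theorem from parts (i) and (ii) of \autoref{thm9:High-OrderCSO} exactly as intended, merely spelling out the set identity and the empty-state caveat that the paper leaves implicit.
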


\begin{remark}
  \autoref{thm1:orderNSEproperty} provides an $n$-$\EXPTIME$ algorithm for verifying \autoref{def:orderNSEproperty}.
\end{remark}

\subsection{Special cases}

Similar to \autoref{subsec:Order2SEproperty}, in this subsection, we discuss special
cases for which the complexity of verifying the order-$n$ estimation-based property
can be significantly reduced. To this end, we define \emph{level-$i$ elements} of $\pred^n$
\eqref{eqn:orderNpredicate}. The \emph{level-$1$ elements} of $\pred^n$ are defined
as elements of $\pred^n$, the \emph{level-$2$ elements} of $\pred^n$ are defined as
elements of level-$1$ elements of $\pred^n$, \dots,  the \emph{level-$(i+1)$ elements}
of $\pred^n$ are defined as elements of level-$i$ elements of $\pred^n$, \dots,
the \emph{level-$n$ elements} of $\pred^n$ are defined as
elements of level-$(n-1)$ elements of $\pred^n$. Then by definition, each level-$i$
element of $\pred^n$ is a subset of $\Pow_{n-i}(Q)$, $i\in\llb 1,n \rrb$.
Particularly, each level-$n$ element of $\pred^n$ is a subset of $Q$.

\subsubsection{Special case 1 --- under containment relations}

Assume for each $i\in\llb 1,n-1\rrb$, either $E_i\subset E_{i+1}$ or $E_{i+1}\subset
E_i$, and $\ell_i$ is a projection. Also assume $\ell_n$ is a projection.

As mentioned before, the order-$1$ observer $\Obs_{A_1}$ can be computed in exponential time.
By \autoref{lem4:High-OrderOpacity} and \autoref{lem3:High-OrderOpacity}, 
the order-$2$ observer $\Obs_{A_1\leftarrow A_2} (G)$ can be computed in time polynomial
in the size of $\Obs_{A_1}$, furthermore, the order-$3$ observer $\Obs_{A_1\leftarrow A_2\leftarrow A_3} (G)$ can be computed in time polynomial
in the size of $\Obs_{A_1\leftarrow A_2} (G)$, \dots,
finally, the order-$n$ observer $\Obs_{A_1\leftarrow A_2\leftarrow \cdots \leftarrow A_n} (G)$
can be computed in time polynomial in the size of the order-$(n-1)$
observer $\Obs_{A_1\leftarrow A_2\leftarrow \cdots \leftarrow A_{n-1}} (G)$. 
As a summary, the order-$n$ observer can be computed in exponential time.

\begin{theorem}\label{thm3:orderNSEproperty} 
  Assume for each $i\in\llb 1,n-1\rrb$, either $E_i\subset E_{i+1}$ or $E_{i+1}\subset E_i$,
  and $\ell_i$ is a projection.  Also assume $\ell_n$ is a projection.
  Then the order-$n$ observer $\Obs_{A_1\leftarrow A_2\leftarrow \cdots \leftarrow A_n} (G)$
  can be computed in exponential time, resulting in whether an FSA $G$ satisfies the order-$n$ estimation-based
  property $\pred^n$~\eqref{eqn:orderNpredicate} with respect to agents $A_1,\dots,A_n$
  can also be verified in exponential time.
\end{theorem}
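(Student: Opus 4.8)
The plan is to proceed by induction on $n$, peeling off the outermost layer of the tower of concurrent compositions and observers in \autoref{def:orderNobserver} and showing that the comparability hypothesis prevents a second exponential blow-up at any layer. The invariant I want to propagate is: \emph{for every $k\le n$, the order-$k$ observer $\mathcal{O}_k:=\Obs_{A_1\leftarrow\cdots\leftarrow A_k}(G)$ has at most $2^{|Q|}$ reachable states and is computable in time exponential in the size of $G$.} For the base case $k=1$ this is immediate: $\mathcal{O}_1=\Obs_{A_1}=\Obs_{P_1}(G)$ is the ordinary powerset construction, which by the remark following \autoref{FA:def_obs} is computable in exponential time and has at most $2^{|Q|}$ reachable states.

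For the inductive step, suppose the invariant holds for $k-1$. By \autoref{def:orderNobserver}, $\mathcal{O}_k=\Obs(\mathcal{B}_k)$, where $\mathcal{B}_k$ (the LFSA $\CC_{A_1\to\cdots\to A_k}^{G,\Obs(\cdots)}$) is obtained from $\CC(G_{A_{k-1}},\mathcal{O}_{k-1})$ by replacing each event $(e_1,e_2)$ with $e_1$ and replacing the labeling by $P_k$. Since $\mathcal{B}_k$ has state set contained in $Q\times Q_{\mathcal{O}_{k-1}}$, it has at most $|Q|\cdot 2^{|Q|}$ states and is built from $G$ and $\mathcal{O}_{k-1}$ in time polynomial in those sizes. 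The key observation is that inside $\mathcal{B}_k$ the automaton $G_{A_{k-1}}$ is synchronised with $\mathcal{O}_{k-1}$ over $P_{k-1}$-labels while the final observer reads $P_k$-labels; thus the pair playing the role of ``$(E_1,E_2)$'' in \autoref{lem4:High-OrderOpacity} and \autoref{lem3:High-OrderOpacity} is precisely $(E_{k-1},E_k)$, which by hypothesis is comparable. I would then re-run the arguments of \autoref{lem4:High-OrderOpacity} (case $E_{k-1}\subset E_k$) and \autoref{lem3:High-OrderOpacity} (case $E_k\subset E_{k-1}$) with $\mathcal{O}_{k-1}$ in the position previously occupied by $\Obs_{A_1}$; since $G_{A_{k-1}}$ has the same underlying FSA as $G$, the conclusion becomes that the reachable states of $\mathcal{O}_k=\Obs(\mathcal{B}_k)$ are in bijection with those of $\Obs_{P_k}(G)=\Obs_{A_k}$, hence there are at most $2^{|Q|}$ of them, re-establishing the invariant.

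The cost accounting is then routine. Knowing a priori that $\Obs(\mathcal{B}_k)$ has at most $2^{|Q|}$ reachable states, I would construct it by a breadth-first exploration from its initial state, each visited state being a subset of $\mathcal{B}_k$'s state set and each outgoing transition (an $\varepsilon$-closure in $\mathcal{B}_k$) costing time polynomial in $|\mathcal{B}_k|\le |Q|\cdot 2^{|Q|}$; the whole layer therefore costs $2^{O(|Q|)}$, independently of $k$. Summing over the $n$ layers gives the order-$n$ observer in time $n\cdot 2^{O(|Q|)}$, i.e.\ exponential in the size of $G$. Finally, \autoref{thm1:orderNSEproperty} reduces satisfaction of $\pred_n$ to checking $\bar{\mathcal{X}}\in\pred_n$ over the at most $2^{|Q|}$ reachable states $\mathcal{X}$ of the order-$n$ observer, which adds only an exponential-time pass, yielding the claimed exponential-time verification procedure.

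The step I expect to be the real work is the generalisation of \autoref{lem4:High-OrderOpacity} and \autoref{lem3:High-OrderOpacity} from the order-$2$ setting (inner object $\Obs_{A_1}$) to an arbitrary deterministic automaton $\mathcal{O}_{k-1}$ over $P_{k-1}$-labels whose states are nested pairs. One must verify that the proofs of those two results use only the determinism of the inner automaton and its label alphabet — not the fact that it is literally an observer — and, in the $E_k\subset E_{k-1}$ case, handle the subtlety already exhibited in \autoref{fig7:High-OrderCSO} that a concurrent composition need not retain every observable move of its first component, so one must reprove rather than merely cite the relevant structural facts for $\mathcal{B}_k$. Once this is checked, the induction closes and both assertions of the theorem follow.
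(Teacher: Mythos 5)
Your proposal follows essentially the same route as the paper: the paper's argument is exactly the layered induction given in the paragraph preceding the theorem, which invokes \autoref{lem4:High-OrderOpacity} and \autoref{lem3:High-OrderOpacity} at each level to conclude that each order-$k$ observer is only polynomially larger than the order-$(k-1)$ one, and then applies \autoref{thm1:orderNSEproperty} for the verification step. If anything you are more careful than the paper, which silently reuses the order-$2$ lemmas at every layer without remarking that the inner automaton is no longer literally $\Obs_{A_1}$ --- the generalisation you flag as ``the real work'' is precisely the step the paper leaves implicit.
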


\subsubsection{Special case 2 --- induced by a \texorpdfstring{$2$}{2}-bounded order-\texorpdfstring{$1$}{1} predicate}

Reconsider $\mathcal{B}$ as in \eqref{eqn1:spec}. By this $\mathcal{B}$, we define a special type of predicates:
\begin{align}\label{eqn35:High-OrderOpacity} 
  \pred^n_{\mathcal{B}} \subset \Pow_n(Q),
\end{align}
where each level-$(n-1)$ element $Y$ of $\pred^n_{\mathcal{B}}$ satisfies
$\emptyset\not\in Y$ and there is $X\in Y$ and $X'\in \mathcal{B}$ such that
$X'\subset X$.

Change the order-$n$ observer $\Obs(\CC_{A_1\to A_2 \to \cdots \to A_{n-1}\to A_n}^{G,\Obs(G,\Obs(\dots(G,\Obs\overbrace {\scriptstyle)\dots)}^{n-2}})$
to $\Obs(\CC_{A_1\to A_2 \to \cdots \to A_{n-1}\to A_n}^{G,\Obs(G,\Obs(\dots(G,\Det\overbrace {\scriptstyle)\dots)}^{n-2}})$,
that is, replace the concurrent composition $\CC_{A_1\to A_2}^{G,\Obs}$ in
\ref{item9:SEproperty} by $ \CC_{A_1\to A_2}^{G,\Det}$, and keep the 
remaining steps in computing the order-$n$ observer the same. 

The variant order-$n$ observer $\Obs(\CC_{A_1\to A_2 \to \cdots \to A_{n-1}\to A_n}^{G,\Obs(G,\Obs(\dots(G,\Det\overbrace {\scriptstyle)\dots)}^{n-2}})$
can be computed in $(n-1)$-$\EXPTIME$.

Similar to \autoref{thm2:order2SEproperty}, the following result holds.

\begin{theorem}\label{thm2:orderNSEproperty} 
  An FSA $G$ satisfies the order-$n$ estimation-based
  property $\pred^n_{\mathcal{B}}$~\eqref{eqn35:High-OrderOpacity} with respect to agents $A_1,\dots,A_n$ if and only if in every reachable state $\mathcal{X}$ 
  of $\Obs(\CC_{A_1\to A_2 \to \cdots \to A_{n-1}\to A_n}^{G,\Obs(G,\Obs(\dots(G,\Det\overbrace {\scriptstyle)\dots)}^{n-2}})$,
	replace each ordered pair $(\#,\$)$ by $\$$, where
	$\#$ is some state of $G$, denote the most updated $\mathcal{X}$ by $\mathcal{\bar X}$,
	$\mathcal{\bar X}$ belongs to $\pred^n_{\mathcal{B}}$.
\end{theorem}

\autoref{thm2:orderNSEproperty} provides an $(n-1)$-$\EXPTIME$ algorithm for verifying
the order-$n$ estimation-based property $\pred^n_{\mathcal{B}}$~\eqref{eqn35:High-OrderOpacity}.

\subsection{Order-\texorpdfstring{$3$}{3} current-state opacity}

In this subsection, we study a special case of the order-$3$ estimation-based property ---
\emph{order-$3$ current-state opacity}. Recall that the order-$2$ current-state opacity studied in 
\autoref{subsec:Order2Opacity} can be used to describe a scenario ``You don't know what I know''
\cite{Cui2022YouDontKnowWhatIKnow}.
This is particularly useful when a user $\Usr$ wants to operate on a system but an intruder $\Intr$ wants to
attack the system if $\Intr$ knows that $\Usr$ can uniquely determine the current state of the system.
If $\Intr$ cannot know that, then the system is considered to be sufficiently safe and then $\Usr$ will operate 
on the system. However, actually the order-$2$ current-state opacity did not give a complete characterization 
for this scenario, because it has not been guaranteed that $\Usr$ knows $\Intr$ really does not know if $\Usr$ can uniquely
determine the current state. In order to describe this scenario, the order-$3$ current-state opacity is 
necessary: $\Usr$ wants to be sure that $\Intr$ cannot be sure whether $\Usr$ can uniquely determine the current
state --- roughly speaking, ``I know you don't know what I know''. Now $\Usr$ has the defense ability against $\Intr$.

\begin{definition}\label{def1:order-3-opacity} 
  Consider an FSA $G$ as in \eqref{FSA}, user $\Usr$ and intruder $\Intr$ with observable event sets $E_{\Usr}
  \subset E$ and $E_{\Intr}\subset E$, respectively. $G$ satisfies the \emph{order-$3$ current-state opacity
  with respect to $\Usr$, $\Intr$, and $\Usr$} if for all $\alpha\in \ell_{\Usr}(L(G))$, for all $Y\in \Mt_{\Usr\leftarrow
  \Intr\leftarrow\Usr}(G,\alpha)\subset 2^{2^Q}$, $Y\not\subset \{\{q\}|q\in Q\}$.
\end{definition}

Note that \autoref{def1:order-3-opacity} is a special case of \autoref{def:orderNSEproperty},
hence we implicitly assume that $\Usr$ knows $E_{\Intr}$ and $\Intr$ knows $E_{\Usr}$.
Note also that \autoref{def1:order-3-opacity} is a special case of $\pred^3_{\mathcal{B}}$
as in \eqref{eqn35:High-OrderOpacity}, hence can be verified in $2$-$\EXPTIME$ by
\autoref{thm2:orderNSEproperty}.

\begin{example}\label{exam11:High-OrderCSO}
  Consider FSA $G $ as in \autoref{fig15:High-OrderOpacity} studied in 
  \autoref{exam5:High-OrderCSO}, and two agents $\Usr $ and $\Intr $ with their
  observable event sets $E_{\Usr } =\{b,c\}$ and $E_{\Intr }=\{a,b\}$, respectively.
  Assume the labeling functions of $\Usr$ and $\Intr$ are projections.
  The order-$2$ observer $\Obs_{\Usr \leftarrow \Intr }(G )$ is shown in 
  \autoref{fig20:High-OrderOpacity}. The concurrent composition $\CC(G_{\Usr},
  \Obs(\CC_{\Usr  \to  \Intr } ^{G ,\Obs}))$ 
  is shown in \autoref{fig38:High-OrderOpacity}.
  The order-$3$ observer $\Obs_{\Usr \leftarrow \Intr 
  \leftarrow \Usr } (G ) 
  = \Obs(\CC_{\Usr  \to 
  \Intr  \to \Usr } ^{G ,\Obs(G ,\Obs)})
  = \Obs(\CC_{\Usr  \to 
  \Intr  \to \Usr } ^{G ,\Obs(G ,\Det)})$ 
  is shown in \autoref{fig39:High-OrderOpacity}. \autoref{fig40:High-OrderOpacity} is obtained
  from \autoref{fig39:High-OrderOpacity} by replacing each ordered pair $(\#,\$)$ by $\$$, where
  $\#$ is some state of $G$ as in \autoref{thm2:orderNSEproperty}.

  \begin{figure}[!htbp]
  \centering
  \subcaptionbox{The concurrent composition $\CC(G_{\Usr},
  \Obs(\CC_{\Usr  \to  \Intr } ^{G ,\Obs}))$,
    where $G $ is in \autoref{fig15:High-OrderOpacity}, $A=\{(0,\{0,1\}),(2,\{2\})\}$,
    $B=\{(4,\{4,5\}),(5,\{4,5\})\}$, $C=\{(1,\{0,1\})\}$, $D=\{(3,\{3\})\}$.\label{fig38:High-OrderOpacity}}{
  	\begin{tikzpicture}[>=stealth',shorten >=1pt,auto,node distance=2.5 cm, scale = 1.0, transform shape,
	>=stealth,inner sep=2pt]

	\tikzstyle{emptynode}=[inner sep=0,outer sep=0]

	\node[initial, initial where = left, rectangular state] (0) {$(0,A)$};
	\node[rectangular state] (2) [right of =0] {$(2,A)$};
	\node[rectangular state] (4) [right of =2] {$(4,B)$};
	\node[rectangular state] (1) [below =1cm of 0] {$(1,C)$};
	\node[rectangular state] (3) [below =1cm of 2] {$(3,D)$};
	\node[rectangular state] (5) [right of =3] {$(5,B)$};
	
	\node[emptynode] (E1) [left =2.8cm of 0] {};
	\node[emptynode] (E2) [right =2.8cm of 5] {};

	\path [->]
	(0) edge node [above, sloped] {$(c,c)$} (2)
	(2) edge node [above, sloped] {$(b,b)$} (4)
	(0) edge node {$(a,\ep)$} (1)
	(1) edge node [above, sloped] {$(b,b)$} (3)
	(2) edge node [sloped, above] {$(b,b)$} (5)
	;

    \end{tikzpicture}
    }\vspace{0.5cm}

	\subcaptionbox{Order-$3$ observer $\Obs_{\Usr \leftarrow \Intr 
	\leftarrow \Usr } (G )$.\label{fig39:High-OrderOpacity}}{
	  \begin{tikzpicture}[>=stealth',shorten >=1pt,auto,node distance=3.0 cm, scale = 1.0, transform shape,
	>=stealth,inner sep=2pt]

	\node[initial, initial where = above, rectangular state] (0) {$\{(0,A),(1,C)\}$};
	\node[rectangular state] (2) [right of =0] {$\{(2,A)\}$};
	\node[rectangular state] (4) [right of =2] {$\{(4,B),(5,B)\}$};
	\node[rectangular state] (1) [left of = 0] {$\{(3,D)\}$};

	\path [->]
	(0) edge node [above, sloped] {$c$} (2)
	(2) edge node [above, sloped] {$b$} (4)
	(0) edge node [above, sloped] {$b$} (1)
	;

    \end{tikzpicture}
	}\vspace{0.5cm}

	\subcaptionbox{Obtained from order-$3$ observer $\Obs_{\Usr \leftarrow \Intr 
	\leftarrow \Usr } (G )$ by changing each state $\mathcal{X}$
  to $\mathcal{\bar X}$ as in \autoref{thm2:orderNSEproperty}.\label{fig40:High-OrderOpacity}}{
	  \begin{tikzpicture}[>=stealth',shorten >=1pt,auto,node distance=4.0 cm, scale = 1.0, transform shape,
	>=stealth,inner sep=2pt]

	\tikzstyle{emptynode}=[inner sep=0,outer sep=0]

	\node[initial, initial where = above, rectangular state] (0) {$\{\{\{0,1\},\{2\}\},\{\{0,1\}\}\}$};
	\node[rectangular state] (2) [right =1cm of 0] {$\{\{\{0,1\},\{2\}\}\}$};
	\node[rectangular state] (4) [right =1cm of 2] {$\{\{\{4,5\}\}\}$};
	\node[rectangular state] (1) [left =1cm of 0] {$\{\{\{3\}\}\}$};

	
	\path [->]
	(0) edge node [above, sloped] {$c$} (2)
	(2) edge node [above, sloped] {$b$} (4)
	(0) edge node {$b$} (1)
	;

    \end{tikzpicture}
	}\vspace{0.5cm}

	\caption{Illustrative automata in \autoref{exam11:High-OrderCSO}.}
	\end{figure}

	By $\Obs_{\Usr \leftarrow \Intr  \leftarrow \Usr } (G )$ (in \autoref{fig39:High-OrderOpacity}) and \autoref{fig40:High-OrderOpacity}
	we have 
	\begin{subequations}\label{eqn34:High-OrderOpacity} 
	  \begin{align}
		\Mt_{\Usr \leftarrow \Intr  \leftarrow \Usr } (G ,
		\ep) &=  \{\{\{0,1\},\{2\}\},\{\{0,1\}\}\}, \label{eqn34_1:High-OrderOpacity}\\
		\Mt_{\Usr \leftarrow \Intr  \leftarrow \Usr } (G ,
		c) &=  \{\{\{0,1\},\{2\}\}\}, \label{eqn34_2:High-OrderOpacity}\\
		\Mt_{\Usr \leftarrow \Intr  \leftarrow \Usr } (G ,
		cb) &=  \{\{\{4,5\}\}\}, \label{eqn34_3:High-OrderOpacity}\\
		\Mt_{\Usr \leftarrow \Intr  \leftarrow \Usr } (G ,
		b) &=  \{\{\{3\}\}\}.\label{eqn34_4:High-OrderOpacity}
	  \end{align}
	\end{subequations}

	Recall the observer $\Obs_{\Usr }$ of automaton $G_{\Usr}$ shown in
	\autoref{fig18:High-OrderOpacity}. With respect to label sequence $\ep$, $\Usr $'s current-state estimate $\Mt_{\Usr }(G ,\ep)$
	is equal to $\{0,1\}$. By \eqref{eqn34_1:High-OrderOpacity}, $\Usr $
	knows that $\Intr $'s inference of $\Mt_{\Usr }(G ,\ep)$ is either $\{\{0,1\},\{2\}\}$ or $\{\{0,1\}\}$.
	\eqref{eqn34_1:High-OrderOpacity} is computed as follows: When $\Usr $
	observes nothing, the only possible traces are $\ep$ and $a$, then $\Intr $ observes nothing or $a$. By the order-$2$ observer
	$\Obs_{\Usr \leftarrow \Intr }(G )$ (shown in 
	\autoref{fig20:High-OrderOpacity}), when observing nothing, $\Intr $'s
	inference of $\Mt_{\Usr }(G ,\ep)$ is either $\{0,1\}$ or $\{2\}$; when observing $a$, $\Intr $'s
	inference of $\Mt_{\Usr }(G ,\ep)$ is $\{0,1\}$.

	Also by $\Obs_{\Usr }$, $\Mt_{\Usr }(G ,b)=\{3\}$, that is, by observing $b$, $\Usr $ uniquely determines
	the current state of $G $. Then by the order-$3$ observer
	and \eqref{eqn34_4:High-OrderOpacity}, $\Usr $ knows that $\Intr $ exactly knows
	$\Mt_{\Usr }(G ,b)$. Hence system $G $ is not sufficiently safe for
	$\Usr $ to operate on.

\end{example}

\section{Conclusion}

Given a finite-state automaton publicly known to a finite ordered set of agents
$A_1,\dots,A_n$, 
assuming that each agent has its own observable event set of the system and knows
all its preceding agents' observable events, a notion of high-order observer was
formulated to characterize what agent $A_n$ knows about 
what $A_{n-1}$ knows about \dots what $A_2$ knows about $A_1$'s state estimate
of the system. Based on the high-order observer, the state-based properties
studied in labeled finite-state automata have been extended to their high-order versions.
Based on the high-order observer, a lot of further extensions can be done.
For example, in the current paper, only current-state-based properties were considered,
further extensions include initial-state versions, infinite-step
versions, etc.
More importantly, based on the high-order observer, a framework of networked 
labeled finite-state automata can be built in which an agent can infer its upstream agents' 
state estimates, so that all agents can finish a common task based on the network
structure and the agents' inferences to their upstream agents' state estimates.


\section*{Appendix: A comparison between our approach and a special scenario of order-$2$ opacity in \cite{Cui2022YouDontKnowWhatIKnow} and the fundamental mistakes in \cite{Cui2022YouDontKnowWhatIKnow}}
\label{appendix}


In this Appendix we provide a detailed comparison with \cite{Cui2022YouDontKnowWhatIKnow}. In \cite{Cui2022YouDontKnowWhatIKnow}, the authors talked about high-order opacity but they deal with only 2 agents. In detail, the high-order opacity is a special case of our order-2 current-state opacity (\autoref{def5':High-OrderCSO}). Finally, we highlight some issues in the verification methods --- double-observer and state-pair-observer --- proposed in \cite{Cui2022YouDontKnowWhatIKnow}.


In \cite{Cui2022YouDontKnowWhatIKnow}, only projections are considered, which are special labeling
functions as shown before in the paragraph after Eqn.~\eqref{eqn39:High-OrderOpacity}.
Therefore, in this part, we use symbol $P$ to denote a projection instead
of using symbol $\ell$ to denote a general labeling function.
The high-order opacity in \cite{Cui2022YouDontKnowWhatIKnow} is defined as follows.
Consider a set $T_{\spec}\subset\{\{q,q'\}|q, q'\in Q\}$ of state pairs to be distinguished by $\Usr$, i.e., the current-state estimate of $\Usr$ should never contain a pair in $T_{\spec}$ as its subset.
A deterministic FSA $G$ is called \emph{high-order opaque with respect to user $\Usr$, intruder $\Intr$,
and $T_{\spec}$} if
\begin{equation*}
  \tag{A}\label{quoteA:High-OrderOpacity}
  \parbox{\dimexpr\linewidth-4em}{%
	\strut
	for every run $q_0\xrightarrow[]{s}q$ with $q_0\in Q_0$ such that for all $\{q,q'\}\in T_{\spec}$,
	$\{q,q'\}\not\subset \Mt_{\Usr}(G,P_{\Usr}(s))$,
	there is
	another run $q_0'\xrightarrow[]{t}q'$ with $q_0'\in Q_0$ such that 
	$\{q'',q'''\}\subset \Mt_{\Usr}(G,P_{\Usr}(t))$ for some $\{q'',q'''\}\in T_{\spec}$.
	\strut
  }
\end{equation*}

If a deterministic FSA $G$ is high-order opaque with respect to $\Usr$, $\Intr$, and $T_{\spec}$, then
$\Intr$ cannot be sure whether $\Usr$ can distinguish between the states of each state pair of $T_{\spec}$ according 
to $\Usr$'s current-state estimate of $G$.

By definition, high-order opacity of $G$ with respect to $\Usr$, $\Intr$, and $T_{\spec}$, is a special case of
order-$2$ current-state opacity with respect to $\Usr$, $\Intr$, and $\QSordtwo$
as in \autoref{def5':High-OrderCSO}, where $\QSordtwo=\{X\subset Q|(\forall X'\in 
T_{\spec})[X'\not\subset X]\}$,
is also equivalent to, order-$2$ estimation-based property 
\begin{subequations}
  \begin{align}
	&\pred^2(G,\QSordtwo) \\
	= & \{\emptyset\notin Y\subset 2^Q|(\exists X\in Y)(\exists X'\in T_{\spec})[X'\subset X]\}.
  \end{align}
\end{subequations}

\autoref{thm1:order2SEproperty} provides a $2$-$\EXPTIME$ algorithm for verifying high-order opacity of $G$
with respect to $\Usr$, $\Intr$, and $T_{\spec}$. \autoref{thm2:order2SEproperty} provides an $\EXPTIME$ algorithm
for verifying the high-order opacity.

In \cite{Cui2022YouDontKnowWhatIKnow}, an interesting special case of the high-order opacity was also studied:
\begin{equation*}
  \tag{B}\label{quoteB:High-OrderOpacity}
  \parbox{\dimexpr\linewidth-4em}{%
	\strut
	for every run $q_0\xrightarrow[]{s}q$ with $q_0\in Q_0$ and $|\Mt_{\Usr}(G,P_{\Usr}(s))|=1$, there is
	another run $q_0'\xrightarrow[]{t}q'$ with $q_0'\in Q_0$ such that $|\Mt_{\Usr}(G,P_{\Usr}(t))|>1$ and
	$P_{\Intr}(s)=P_{\Intr}(t)$,
	\strut
  }
\end{equation*}
that is, the high-order opacity of deterministic $G$ with 
respect to $\Usr$, $\Intr$, and $T_{\spec}=\{\{q,q'\}|q,q'\in Q,q\ne q'\}$.

In case of \eqref{quoteB:High-OrderOpacity}, whenever the current-state estimate of $\Usr$ is a singleton,
$\Intr$ cannot be sure whether the current-state estimate of $\Usr$ is a singleton or not.

This special case is equivalent to order-$2$ current-state opacity with respect to $\Usr$, $\Intr$, and
$\QSordtwo=\{\{q\}|q\in Q\}$,
is also equivalent to, order-$2$ estimation-based property 
\begin{subequations}
  \begin{align}
	&\pred^2(G,\QSordtwo)\\
	=&\{\emptyset\notin Y\subset 2^Q|(\exists X\in Y)[|X|>1]\}.
  \end{align}
\end{subequations}

In \cite{Cui2022YouDontKnowWhatIKnow}, in order to verify high-order opacity, two methods ---
\emph{double-observer} and \emph{state-pair-observer}, were proposed, where the former runs in 
doubly exponential time and the latter runs in exponential time.
Next, we give counterexamples to show
that neither of them works correctly generally, even with respect to $T_{\spec}=\{\{q,q'\}
|q,q'\in Q,q\ne q'\}$.

The following example shows that both the \emph{double-observer} method and
the \emph{state-pair-observer} method 
fail to verify high-order opacity defined in \eqref{quoteB:High-OrderOpacity}, that is, the main results obtained in 
\cite{Cui2022YouDontKnowWhatIKnow} are generally incorrect.
This example also shows the double-observer generally
cannot correctly compute the inference of the current-state
estimate of $\Usr$ by $\Intr$ defined as $\Mt_{\Usr\leftarrow\Intr}(G,\alpha)$, where $\alpha\in
P_{\Intr}(L(G))$. 

\begin{example}\label{exam5:High-OrderCSO}
  Consider FSA $G_{ \label{cou2}}$ as in \autoref{fig15:High-OrderOpacity}, where $E_{\Usr }
  =\{b,c\}$, $E_{\Intr }=\{a,b\}$.
  We choose $T_{{ }\spec}=\{\{0,1\},\{4,5\}\}$, and show that neither the double-observer nor the state-pair-observer
  can correctly verify the high-order opacity of $G $ (with respect to $\Usr$,
  $\Intr $, and $T_{{ }\spec}$). The variant of observer $\Obs_{\Usr }$ defined in 
  \cite{Cui2022YouDontKnowWhatIKnow}, denoted as $\overline{\Obs_{\Usr }}$, is shown in 
  \autoref{fig16:High-OrderOpacity}. Compared with the standard observer $\Obs_{\Usr }$ as in 
  \autoref{fig18:High-OrderOpacity}, in $\overline{\Obs_{\Usr }}$, there is an additional self-loop on 
  state $\{0,1\}$ with event $a$ because starting from state $0$ there is an unobservable transition with
event $a$ in $G^{ }_{\Usr }$.
  The observer $\Obs_{\Intr }(\overline{\Obs_{\Usr }})$ (i.e., the so-called double-observer defined in 
  \cite{Cui2022YouDontKnowWhatIKnow})
  of $\overline{\Obs_{\Usr }}$ is shown in \autoref{fig17:High-OrderOpacity}.
  In state $\{\{0,1\},\{2\}\}$ of $\Obs_{\Intr }(\overline{\Obs_{\Usr }})$, 
  $(\{0,1\}\times \{0,1\})\cap T_{{ }\spec}
  \ne\emptyset$, in state $\{\{3\},\{4,5\}\}$, $(\{4,5\}\times \{4,5\})\cap T_{{ }\spec}\ne\emptyset$, then by
  \cite[Theorem~1]{Cui2022YouDontKnowWhatIKnow}, $G $ is high-order opaque.

 \begin{figure}[!htbp]
  \centering
  \subcaptionbox{FSA $G $, where $E_{\Usr }=\{b,c\}$, $E_{\Intr }=\{a,b\}$.\label{fig15:High-OrderOpacity}}{
  	\begin{tikzpicture}[>=stealth',shorten >=1pt,auto,node distance=2.5 cm, scale = 1.0, transform shape,
	>=stealth,inner sep=2pt]

	\tikzstyle{emptynode}=[inner sep=0,outer sep=0]

	\node[initial, initial where = left, state] (0) {$0$};
	\node[state] (2) [right of =0] {$2$};
	\node[state] (4) [right of =2] {$4$};
	\node[state] (1) [below =1cm of 0] {$1$};
	\node[state] (3) [below =1cm of 2] {$3$};
	\node[state] (5) [right of =3] {$5$};

	\node[emptynode] (E1) [left of = 0] {};
	\node[emptynode] (E2) [right of = 5] {};

	\path [->]
	(0) edge node [above, sloped] {$c$} (2)
	(2) edge node [above, sloped] {$b$} (4)
	(0) edge node {$a$} (1)
	(1) edge node [above, sloped] {$b$} (3)
	(2) edge node {$b$} (5)
	;

    \end{tikzpicture}
    }

	\subcaptionbox{The variant observer $\overline{\Obs_{\Usr }}$ 
	of automaton $G_{\Usr}$. 
    \label{fig16:High-OrderOpacity}}{
	  \begin{tikzpicture}[>=stealth',shorten >=1pt,auto,node distance=2.5 cm, scale = 1.0, transform shape,
	>=stealth,inner sep=2pt]

	\tikzstyle{emptynode}=[inner sep=0,outer sep=0]

	\node[initial, initial where = above, rectangular state] (A) {$\{0,1\}$};
	\node[rectangular state] (B) [left of = A] {$\{3\}$};
	\node[rectangular state] (C) [right of =A] {$\{2\}$};
	\node[rectangular state] (D) [right of =C] {$\{4,5\}$};

	\node[emptynode] (E1) [left of = B] {};
	\node[emptynode] (E2) [right of = D] {};
	
	\path [->]
	(A) edge node {$b$} (B)
	(C) edge node {$b$} (D)
	(A) edge node {$c$} (C)
	(A) edge [red, loop below] node {$a$} (A)
	;

    \end{tikzpicture}
	}\vspace{0.5cm}

	\subcaptionbox{The observer $\Obs_{\Intr }(\overline{\Obs_{\Usr }})$ 
	of $\overline{\Obs_{\Usr }}$. 
    \label{fig17:High-OrderOpacity}}{
	  \begin{tikzpicture}[>=stealth',shorten >=1pt,auto,node distance=3.5 cm, scale = 1.0, transform shape,
	>=stealth,inner sep=2pt]

	\tikzstyle{emptynode}=[inner sep=0,outer sep=0]

	\node[initial, initial where = left, rectangular state] (AC) {$\{\{0,1\},\{2\}\}$};
	\node[rectangular state] (BD) [right of =AC] {$\{\{3\},\{4,5\}\}$};

	\node[emptynode] (E1) [left of = AC] {};
	\node[emptynode] (E2) [right of = BD] {};

	\path [->]
	(AC) edge node [above, sloped] {$b$} (BD)
	(AC) edge [loop above] node {$\red a$} (AC)
	;

    \end{tikzpicture}
	}\vspace{0.5cm}

	\subcaptionbox{The state-pair-observer of $G $ with respect to $\Usr $ and $\Intr $.
	\label{fig21:High-OrderOpacity}}{
	  \begin{tikzpicture}[>=stealth',shorten >=1pt,auto,node distance=4.5 cm, scale = 1.0, transform shape,
	>=stealth,inner sep=2pt]

	\tikzstyle{emptynode}=[inner sep=0,outer sep=0]

	\node[initial, initial where = left, rectangular state] (A) {$\left\{ 
	\begin{matrix}
	  (0,0),(1,0),(0,1)\\
	  (1,1),(2,2)
	\end{matrix}\right\}$};
	\node[rectangular state] (B) [right of =A] {$\left\{ 
		\begin{matrix}
		  (0,0),(0,1)\\
		  (1,0),(1,1)
		\end{matrix}
	\right\}$};
	\node[rectangular state] (C) [below =1cm of A] {$\left\{ 
		\begin{matrix}
		  (4,4),(4,5),(5,4)\\
		  (5,5),(3,3)
		\end{matrix}
	\right\}$};

	\node[emptynode] (E1) [left of = A] {};
	\node[emptynode] (E2) [right of = B] {};

	\path [->]
	(A) edge node [above, sloped] {$a$} (B)
	(A) edge node {$b$} (C)
	(B) edge node [above, sloped] {$b$} (C)
	(B) edge [loop above] node {$a$} (B)
	;

    \end{tikzpicture}
	}\vspace{0.5cm}

	\caption{Illustration of methods in \cite{Cui2022YouDontKnowWhatIKnow}.}
	\end{figure}

  The state-pair-observer of $G $ with respect to $\Usr $ and $\Intr $ 
  shown in \autoref{fig21:High-OrderOpacity} is the observer of the product
  of $G$ and itself with respect to $\Intr$, where the two trajectories in the product 
  look identifical to $\Usr$. In \autoref{fig21:High-OrderOpacity},
  no state
  has empty intersection with $T_{{ }\spec}$. Then by \cite[Theorem~2]{Cui2022YouDontKnowWhatIKnow},
  one concludes that $G $ is high-order opaque.

  Directly by definition, from $ab\in P_{\Intr }(L(G ))$, we have 
  $\Mt_{\Usr \leftarrow\Intr }(G ,ab)=
  \{\{3\}\}$. We also have $(\{3\}\times\{3\})\cap T_{{ }\spec}=\emptyset$, then $G $ is not
  high-order opaque. However,
  in double-observer $\Obs_{\Intr }
  (\overline{\Obs_{\Usr }})$, $\Mt_{\Usr \leftarrow\Intr }(G ,ab)$
  is wrongly computed as $\{\{3\},\{4,5\}\}$.

  Next, we use our method to show that $G $ is not high-order opaque.
  The observer $\Obs_{\Usr }$ is shown in \autoref{fig18:High-OrderOpacity}. The concurrent composition
  $\CC(G_{\Usr},\Obs_{\Usr})$ is shown in \autoref{fig19:High-OrderOpacity}.
  The order-$2$ observer $\Obs_{\Usr \leftarrow\Intr }(G )$ 
  is shown in \autoref{fig20:High-OrderOpacity}.
  Consider state $\{(3,\{3\})\}$ of $\Obs_{\Usr \leftarrow\Intr }(G )$,
  $(\{3\}\times \{3\})\cap T_{{ }\spec}=\emptyset$, by \autoref{thm1:order2SEproperty},
  $G $ is not high-order opaque.

  \begin{figure}[!htbp]
  \centering
  \subcaptionbox{The observer $\Obs_{\Usr }$ of automaton $G^{ }_{\Usr }$ (shown in \autoref{fig15:High-OrderOpacity}).\label{fig18:High-OrderOpacity}}{
  \begin{tikzpicture}[>=stealth',shorten >=1pt,auto,node distance=2.5 cm, scale = 1.0, transform shape,
	>=stealth,inner sep=2pt]

	\tikzstyle{emptynode}=[inner sep=0,outer sep=0]

	\node[initial, initial where = above, rectangular state] (A) {$\{0,1\}$};
	\node[rectangular state] (B) [left of = A] {$\{3\}$};
	\node[rectangular state] (C) [right of =A] {$\{2\}$};
	\node[rectangular state] (D) [right of =C] {$\{4,5\}$};

	\path [->]
	(A) edge node {$b$} (B)
	(C) edge node {$b$} (D)
	(A) edge node {$c$} (C)
	;

    \end{tikzpicture}
  }\vspace{0.5cm}

  \subcaptionbox{The concurrent composition $\CC(G_{\Usr},\Obs_{\Usr})$.\label{fig19:High-OrderOpacity}}{
	\begin{tikzpicture}[>=stealth',shorten >=1pt,auto,node distance=3.5 cm, scale = 1.0, transform shape,
	>=stealth,inner sep=2pt]

	\tikzstyle{emptynode}=[inner sep=0,outer sep=0]

	\node[initial, initial where = left, rectangular state] (0A) {$(0,\{0,1\})$};
	\node[rectangular state] (1A) [below =1cm of 0A] {$(1,\{0,1\})$};
	\node[rectangular state] (2C) [right of =0A] {$(2,\{2\})$};
	\node[rectangular state] (4D) [right of =2C] {$(4,\{4,5\})$};
	\node[rectangular state] (3B) [right of =1A] {$(3,\{3\})$};
	\node[rectangular state] (5D) [right of =3B] {$(5,\{4,5\})$};

	\node[emptynode] (E1) [left of = 0A] {};
	\node[emptynode] (E2) [right of = 5D] {};
	
	\path [->]
	(0A) edge node {$(a,\ep)$} (1A)
	(2C) edge node {$(b,b)$} (4D)
	(0A) edge node {$(c,c)$} (2C)
	(1A) edge node {$(b,b)$} (3B)
	(2C) edge node [sloped, above] {$(b,b)$} (5D)
	;

    \end{tikzpicture}
  }\vspace{0.5cm}

  \subcaptionbox{The order-$2$ observer $\Obs_{\Usr \leftarrow\Intr }(G )$.\label{fig20:High-OrderOpacity}}{
	\begin{tikzpicture}[>=stealth',shorten >=1pt,auto,node distance=5.5 cm, scale = 1.0, transform shape,
	>=stealth,inner sep=2pt]

	\node[initial, initial where = left, rectangular state] (0A2C) {$\{(0,\{0,1\}),(2,\{2\})\}$};
	\node[rectangular state] (4D5D) [right of =0A2C] {$\{(4,\{4,5\}),(5,\{4,5\})\}$};
	\node[rectangular state] (3B) [below =1cm of 4D5D] {$\{(3,\{3\})\}$};
	\node[rectangular state] (1A) [left of =3B] {$\{(1,\{0,1\})\}$};

	\path [->]
	(0A2C) edge node [above, sloped] {$b$} (4D5D)
	(0A2C) edge node {$a$} (1A)
	(1A) edge node [above, sloped] {$b$} (3B)
	;

    \end{tikzpicture}
  }\vspace{0.5cm}

	\caption{Illustration of methods in the current paper.}
	
	\end{figure}

\end{example}


\end{document}